\definecolor{prettygreen}{RGB}{5,125,143}
\newcommand{\id}{\mathbb{1}}
\newcommand{\rd}{\text{\normalfont r\,}}
\newcommand{\Ical}{\mathcal{I}}
\newcommand{\Xcal}{\mathcal{X}}
\newcommand{\Ycal}{\mathcal{Y}}
\newcommand{\Aoo}{\rm I_{\rm A }}
\newcommand{\Aot}{\rm {O'}_{\rm A}}
\newcommand{\Ia}{{\rm I_{\rm A}}}
\newcommand{\Iap}{{\rm {I'}_{\rm A}}}
\newcommand{\Oa}{{\rm O_{\rm A}}}
\newcommand{\Oap}{{\rm {O'}_{\rm A}}}
\newcommand{\Ib}{{\rm I_{\rm B}}}
\newcommand{\Ibp}{{\rm {I'}_{\rm B}}}
\newcommand{\Ob}{{\rm O_{\rm B}}}
\newcommand{\Obp}{{\rm {O'}_{\rm B}}}
\renewcommand{\Iat}{{\td{\rm I}_{\rm A}}}
\newcommand{\Oat}{{\td{\rm O}_{\rm A}}}
\newcommand{\Ibt}{{\td{\rm I}_{\rm B}}}
\newcommand{\Obt}{{\td{\rm O}_{\rm B}}}
\newcommand{\tIa}{{\td{\rm I}_{\rm A}}}
\newcommand{\tOa}{{\td{\rm O}_{\rm A}}}
\newcommand{\tOap}{{\td{\rm {O'}}_{\rm A}}}
\newcommand{\tIb}{{\td{\rm I}_{\rm B}}}
\newcommand{\tIbp}{{\td{\rm {I'}}_{\rm B}}}
\newcommand{\tOb}{{\td{\rm O}_{\rm B}}}
\newcommand{\tObp}{{\td{\rm {O'}}_{\rm B}}}
\newcommand{\I}{{\rm I}}
\newcommand{\Ip}{{\rm I'}}
\renewcommand{\O}{{\rm O}}
\newcommand{\Op}{{\rm O'}}
\newcommand{\A}{{\rm A}}
\newcommand{\B}{{\rm B}}
\newcommand{\C}{{\rm C}}
\newcommand{\D}{{\rm D}}
\newcommand{\X}{{\rm X}}
\newcommand{\Y}{{\rm Y}}
\newcommand{\Yp}{{\rm Y'}}
\newcommand{\W}{W}
\newcommand{\PP}{\mathcal P}
\newcommand{\td}{\widetilde }
\newtheorem{theorem}{Theorem}
\newtheorem*{theorem*}{Theorem}
\newtheorem{lemma}{Lemma}
\newtheorem{fact}{Fact}
\begin{document}

\author{Jessica Bavaresco} 
\affiliation{Department of Applied Physics, University of Geneva, 1205 Geneva, Switzerland}

\author{Ämin Baumeler} 
\affiliation{Facoltà di scienze informatiche, Università della Svizzera italiana, 6900 Lugano, Switzerland}
\affiliation{Facoltà indipendente di Gandria, 6978 Gandria, Switzerland}

\author{Yelena Guryanova} 
\affiliation{QuantumBasel, Schorenweg 44b, 4144 Arlesheim, Switzerland}
\affiliation{Center for Quantum Computing and Quantum Coherence (QC2), University of Basel, Petersplatz 1, Basel, 4001, Switzerland}

\author{Costantino Budroni} 
\affiliation{Department of Physics ``E.~Fermi'', University of Pisa, Largo B.~Pontecorvo 3, 56127 Pisa, Italy}

\date{\today}

\title{Indefinite causal order in boxworld theories} 

\begin{abstract}
An astonishing feature of higher-order quantum theory is that it can accommodate indefinite causal order. In the simplest bipartite setting, there exist signaling correlations for which it is fundamentally impossible to ascribe a definite causal order for the parties’ actions. Moreover, the assumptions required to arrive at such a statement (local quantum transformations and well-behaved probabilities) result in a nontrivial set of correlations, whose boundary is, to date, uncharacterized. In this work, we investigate indefinite causal order in boxworld theories. We construct a higher-order theory whose descriptor is the generalized bit (gbit)---the natural successor of the classical bit and quantum qubit. By fixing the local transformations in boxworld and asking about the global causal structure, we find that we trivially recover the full set of two-way signaling correlations. In light of this, we motivate and propose two physical principles in order to limit the set of achievable correlations: nonsignaling preservation and no signaling without system exchange. We find that a higher-order boxworld theory that respects these physical principles leads to (i) a nontrivial set of achievable correlations and (ii) a violation of some causal inequalities that is higher than what can be achieved in higher-order quantum theory. These results lead us to conjecture that the set of correlations of our higher-order boxworld theory is an outer approximation to the set of correlations produced by higher-order quantum theory.
\end{abstract}

\maketitle

\section{Introduction}\label{sec::intro}

Statistical data of physical experiments involving independent parties, often referred to as \textit{correlations}, can be used to infer physical properties of the underlying systems involved in the experiment. Of particular interest to quantum information science is the set of \textit{nonsignaling} correlations, those that do not allow for the transmission of information between the parties. In an experiment where two parties, Alice and Bob, make independent measurements on their share of a bipartite quantum state, a so-called Bell scenario~\cite{bell1964onthe,clauser1969proposed,brunner2014bell}, the collected statistics always correspond to nonsignaling correlations. When violating a Bell inequality, these correlations certify the presence of entanglement in the shared quantum state as well as the incompatibility of the applied local measurements~\cite{werner1989quantum,quintino2014joint,uola2014joint}. 

It has long been known that not all nonsignaling correlations can be created in a Bell scenario~\cite{tsirelson1980quantum}. Several physical and information-theoretic principles---e.g., nontrivial communication complexity~\cite{vandam2005implausible,brassard2006limit}, information causality~\cite{pawlowski2009information}, macroscopic locality~\cite{navascues2009glance}, local orthogonality~\cite{fritz2013local}, and E-principle~\cite{cabello2013simple}, to name a few---have been proposed and studied as candidates for outlining the set of nonsignaling correlations that can be achieved by quantum systems.

\begin{figure}[h!]
\begin{center}
	\includegraphics[width=0.8\columnwidth]{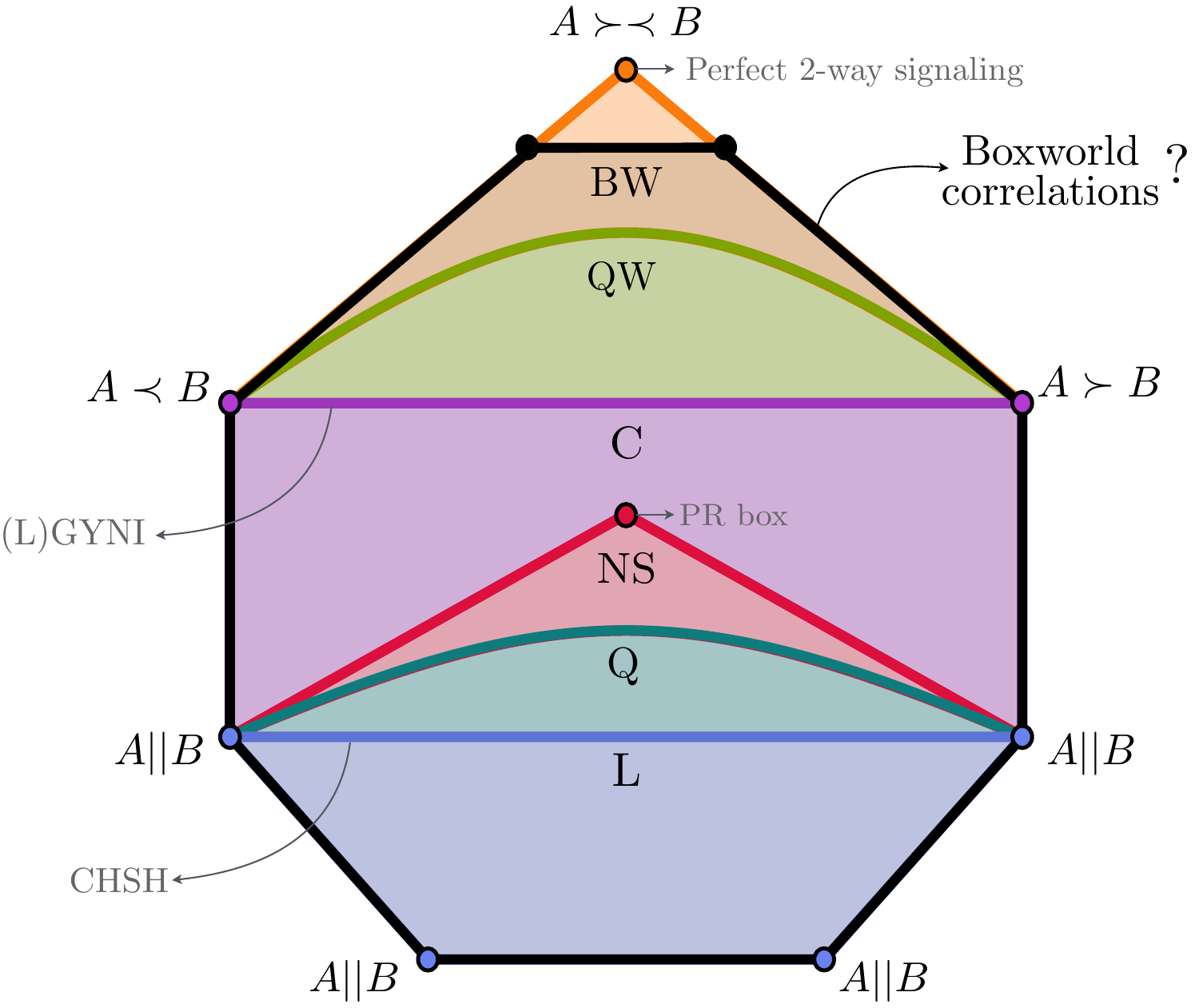}
	\caption{\textbf{A schematic of the polytope of all correlations.} Depiction of the polytope of all correlations $P_{\A\B|\X\Y}$ in a scenario with 2 parties, 2 inputs, and 2 outputs. The extremal points are deterministic probability distributions, which can be nonsignaling ($A||B$), one-way signaling ($A\prec B$ or $A\succ B$), or two-way signaling ($A\succ\prec B$). The convex hull of the deterministic nonsignaling correlations forms the set L of local correlations, with the nontrivial facet given by the CHSH inequality. Q is the convex set of correlations that can be achieved by quantum mechanics in a Bell scenario. NS is the polytope of nonsignaling correlations, whose extremal points are deterministic nonsignaling correlations and PR-boxes. The convex hull of deterministic nonsignaling and one-way signaling correlations forms the set C of causal correlations, with the nontrivial facets given by the GYNI and LGYNI inequalities. QW is the convex set of correlations that can be achieved by process matrices. Finally, we expect that BW, the polytope of boxworld correlations, is an outer approximation of the set QW.}
\label{fig::polytopes}
\end{center}
\end{figure}

Arguably less effort has been dedicated to date to the study of correlations that can be achieved by quantum systems in \textit{signaling} scenarios, in which communication, both classical and quantum, is allowed~\cite{ oreshkov2012quantum,branciard2016simplest,feix2016causally,baumeler2016space,araujo2017purification,vanderlugt2023device,wechs2023existence,kunjwal2023nonclassicality,liu2024tsirelson}. Such scenarios have been studied in the context of correlations generated by quantum higher-order operations with indefinite causal order~\cite{chiribella2008quantum,chiribella2008transforming,chiribella2009theoretical,chiribella2013quantum,bisio2019theoretical,milz2024characterising}.

In this scenario, Alice and Bob are each considered to receive in their respective labs an input quantum system, and to apply onto them the most general form of quantum operations: quantum instruments. These operations locally produce a classical outcome, as in a Bell scenario, as well as a quantum output system, which is sent out of their labs. All resources that bridge Alice's and Bob's laboratories are characterized by a \textit{process matrix}~\cite{oreshkov2012quantum}. Since a process matrix may in general include channels connecting the parties, it allows for communication, hence, correlations in this scenario may be signaling. The statistics over the local classical inputs and outputs of Alice and Bob constitute the correlations of the experiment. In this sense, a process matrix can therefore be viewed as a higher-order transformation that maps local quantum instruments into sets of probability distributions---the process matrix correlations.

Although, in this scenario, Alice and Bob can only interact with their input systems and prepare output systems once per round of the experiment, process matrices are known to generate correlations which display \textit{genuine two-way signaling}. That is, correlations that cannot be described by either exclusively Alice sending information to Bob or exclusively Bob sending information to Alice in each round of the experiment, nor any convex combination thereof. Such correlations can violate causal inequalities~\cite{oreshkov2012quantum,branciard2016simplest}, the analogs of Bell inequalities, and therefore certify the presence of indefinite causal order in the underlying process matrix. Little is known about the set of genuine two-way signaling correlations (also called \textit{noncausal correlations}) that can be generated by process matrices. It has been shown that perfect two-way signaling cannot be achieved by any finite or infinite dimensional process matrix~\cite{bavaresco2019semi, kunjwal2023nonclassicality} and, more recently, upper bounds on the violation of causal inequalities by process matrices have been established in Ref.~\cite{liu2024tsirelson}. Despite these efforts, the boundary of the set of process matrix correlations remains largely unknown.

In this work, we propose physical principles to bound the set of process matrix correlations within the set of all probability distributions. The approach we take is based on a \textit{boxworld theory}, a form of general probabilistic theory which considers generalized nonsignaling systems, called gbits, as its basic states~\cite{BarrettPRA2007}. We develop a fully fledged higher-order boxworld theory, which allows processes that can produce noncausal correlations. We investigate and define the local transformations and processes of the theory, exploring their different causal constraints, and characterizing the polytope of correlations they can generate.

Our initial higher-order boxworld theory, based mostly on mathematical considerations of admissibility, is more general than quantum (and classical) higher-order theories, and we find that the correlations it allows not only include the set of process matrix correlations, but in fact all sets of probability distributions, which are depicted in Fig.~\ref{fig::polytopes}. Determined to arrive at a theory whose correlations are a nontrivial outer approximation to the set of process matrix correlations, we impose different constraints on the objects of our theory, which we motivate as physical principles. The two physical principles we explore are based on \textit{nonsignaling preservation} (NSP) and \textit{no signaling without system exchange} (NSWSE)~\cite{chiribella2010probabilistic}. 
We find that a higher-order boxworld theory that respects both of these physical principles cannot produce all sets of probability distributions as correlations. In particular, it cannot produce perfect two-way signaling. Nevertheless, it allows  correlations that can attain a violation of some causal inequalities which is higher than what can be attained by process matrices. Therefore, a higher-order boxworld theory that respect NSP and NSWSE proves itself to be a good candidate for an outer approximation of the set of process matrix correlations with indefinite causal order, and hence, of the set of genuinely two-way signaling correlations that can be attained in higher-order quantum theory.

\section{Results}\label{sec::results}

\subsection{Basic elements: states and operations}

Boxworld \cite{BarrettPRA2007} refers to a generalized probability theory where the basic systems are probability distributions, i.e., the state of a system is described by a tensor $P_{\O|\I}$ with each entry $P_{\O|\I}(o|i)$ being a nonnegative real number that represents the probability of a classical outcome $o\in\O$ given a classical input $i\in\I$, where $\O,\I$ are random variables. See more details about the notation and tensor operations in Sec.~\ref{sec::methods}. In the bipartite case, the systems of the theory are described by states corresponding to nonsignaling boxes. These are tensors of the form $P^\text{NS}_{\Oa\Ob|\Ia\Ib}$, that is, sets of conditional, joint probability distributions over outputs $\Oa,\Ob$ given inputs $\Ia,\Ib$, which satisfy
\begin{align}
    \sum_{\Ob} P^\text{NS}_{\Oa\Ob|\Ia\Ib} &= P_{\Oa|\Ia}, \label{eq::probNS_AprecB} \\
    \sum_{\Oa} P^\text{NS}_{\Oa\Ob|\Ia\Ib} &= P_{\Ob|\Ib}, \label{eq::probNS_BprecA}
\end{align}
where $P_{\Oa|\Ia}$ and $P_{\Ob|\Ib}$ are marginal probability distributions that are independent of $\Ib$ and $\Ia$, respectively.
In this notation, a summation over a random variable $\X$ represents a summation over all values $x\in\X$ that the random variable can assume, and we leave the entry $P_{\O|\I}(o|i)$ of the tensor $P_{\O|\I}$ implicit for sake of simplicity. Hence, throughout the paper, we will denote $\sum_{o\in\O}P_\O(o)$ as simply $\sum_\O P_\O$. We identify each tensor $P_{\Oa\Ob|\Ia\Ib}$ with the hypothetical device generating it, so we refer to $P_{\Oa\Ob|\Ia\Ib}$ as a \textit{box}.

Local operations $T$ are those that map states into states, $T: P_{\O|\I} \mapsto P_{\Op|\Ip}$. That is, they correspond to transformations between probability distributions. When applied to part of a bipartite box, we require that local operations do not introduce signaling, hence they must map nonsignaling boxes to nonsignaling boxes. As shown in Ref.~\cite{rosset2020algebraic}, all local operations satisfy this property of nonsignaling preservation when acting on a part of bipartite box. Moreover, it was shown that local operations can be expressed as tensors $T=T_{\I\Op|\Ip\O}$ that also correspond to probability distributions, and that can further be decomposed, without loss of generality, into local pre- and post-processing operations~\cite{rosset2020algebraic}. In turn, these can be decomposed into a convex combination of deterministic sets of probability distributions, according to
\begin{equation}\label{eq::op_deterministic}
    T_{\I\Op|\Ip\O} = P_{\I|\Ip}\,P_{\Op|\Ip\I\O} = \sum_\lambda \pi_\lambda\, D^\lambda_{\I|\Ip}\,D^\lambda_{\Op|\Ip\O},
\end{equation}
where $\pi_\lambda\geq0$, $\sum_\lambda \pi_\lambda=1$ are convex weights and $D^\lambda_{\td{\O}|\td{\I}}$ is, for each $\lambda$, a set of deterministic probability distributions from inputs $\td{\I}$ to outputs $\td{\O}$ (with entries either $0$ or $1$). These local operations are the analog of completely-positive, trace-preserving maps in quantum theory, (i.e., quantum channels) which deterministically map quantum states into quantum states. Hence, we will refer to these local operations in boxworld as deterministic operations, even though they cannot always be described by sets of deterministic probability distributions (but rather convex combinations thereof). A graphical representation of deterministic operations is given in Fig.~\ref{fig::operations}.

\begin{figure}
\begin{center}
	\includegraphics[width=\columnwidth]{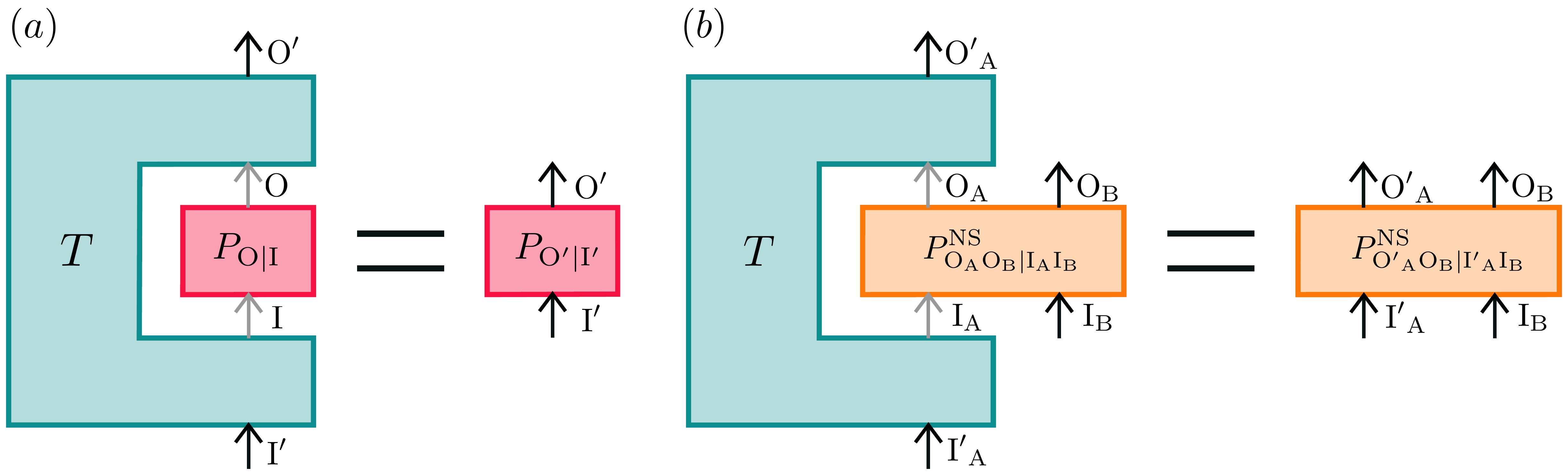}
	\caption{\textbf{Deterministic operations.} Deterministic operations $T$ are defined as transformations that map boxes into boxes. As shown in Ref.~\cite{rosset2020algebraic}, this is equivalent to requiring that local operations map bipartite nonsignaling boxes into bipartite nonsignaling, even when acting on only a part of the box.}
\label{fig::operations}
\end{center}
\end{figure}

For the probabilistic version of local operations, which are the analog of quantum instruments in quantum theory, we take any nonnegative resolution of a deterministic operation. Therefore, a set of probabilistic operations $T^{\A|\X}$ is a set of tensors with nonnegative entries that satisfy $\sum_\A T^{\A|\X}=T^\X$, where each $T^\X$ satisfies Eq.~\ref{eq::op_deterministic}. Elements of probabilistic operations can be decomposed as 
\begin{equation}\label{eq::op_probabilistic}
    T^{\A|\X}_{\I\Op|\Ip\O} = \sum_\lambda \pi_\lambda\, D^\lambda_{\I|\Ip \X}\,D^\lambda_{\Op|\Ip\O \X}\,D^\lambda_{\A|\Ip\O \X}.
\end{equation}
This definition of probabilistic operations is the most general one that is compatible with the definition of deterministic operations, which are the most general transformations that preserve completely nonsignaling input boxes. Alternatively, one may formalize probabilistic operations within the framework of generalized or operational probabilistic theories~\cite{BarrettPRA2007,chiribella2010probabilistic} (see Refs.~\cite{gptreview,optreview} for reviews). Transiting from there to the higher-order theories, as also proposed in Ref.~\cite{apadula2024}, poses its own challenges.
Here, we sidetrack these challenges by employing the most general probabilistic operations compatible with deterministic ones%
~\footnote{
At the same time, this definition allows for some counterintuitive instruments. For simplicity, let the random variables~$\X$,~$\O$, and~$\I$ be trivial. In that case,~$T^A$ corresponds to a state preparation with a classical output~$\A$, and according to Eq.~\eqref{eq::op_probabilistic},~$T^A$ decomposes as~$T^{\A}_{\Op|\Ip} = \sum_\lambda \pi_\lambda\, D^\lambda_{\Op|\Ip}\,D^\lambda_{\A|\Ip}$. An example of such a probabilistic operation is~$T^{\A}_{\Op|\Ip} = P_{\Op|\Ip}D^\text{id}_{\A|\Ip}$, where~$P_{\Op|\Ip}$ is an arbitrary gbit, and~$D^\text{id}_{\A|\Ip}$ the ``identity distribution'':~$D^\text{id}_{\A|\Ip}(a|a)=1$. This operation is counterintuitive because the output~$\A$ is correlated to~$\Ip$, and the random variable~$\Ip$ may be specified at a later time. One could imagine that~$P_{\Op|\Ip}$ is sent to a distant party who specifies~$\Ip$. This effectively introduces a communication channel in the reverse direction, and as we will see later, allows for perfect two-way signaling.}.

\subsection{Process tensors}

\begin{figure}
\begin{center}
	\includegraphics[width=\columnwidth]{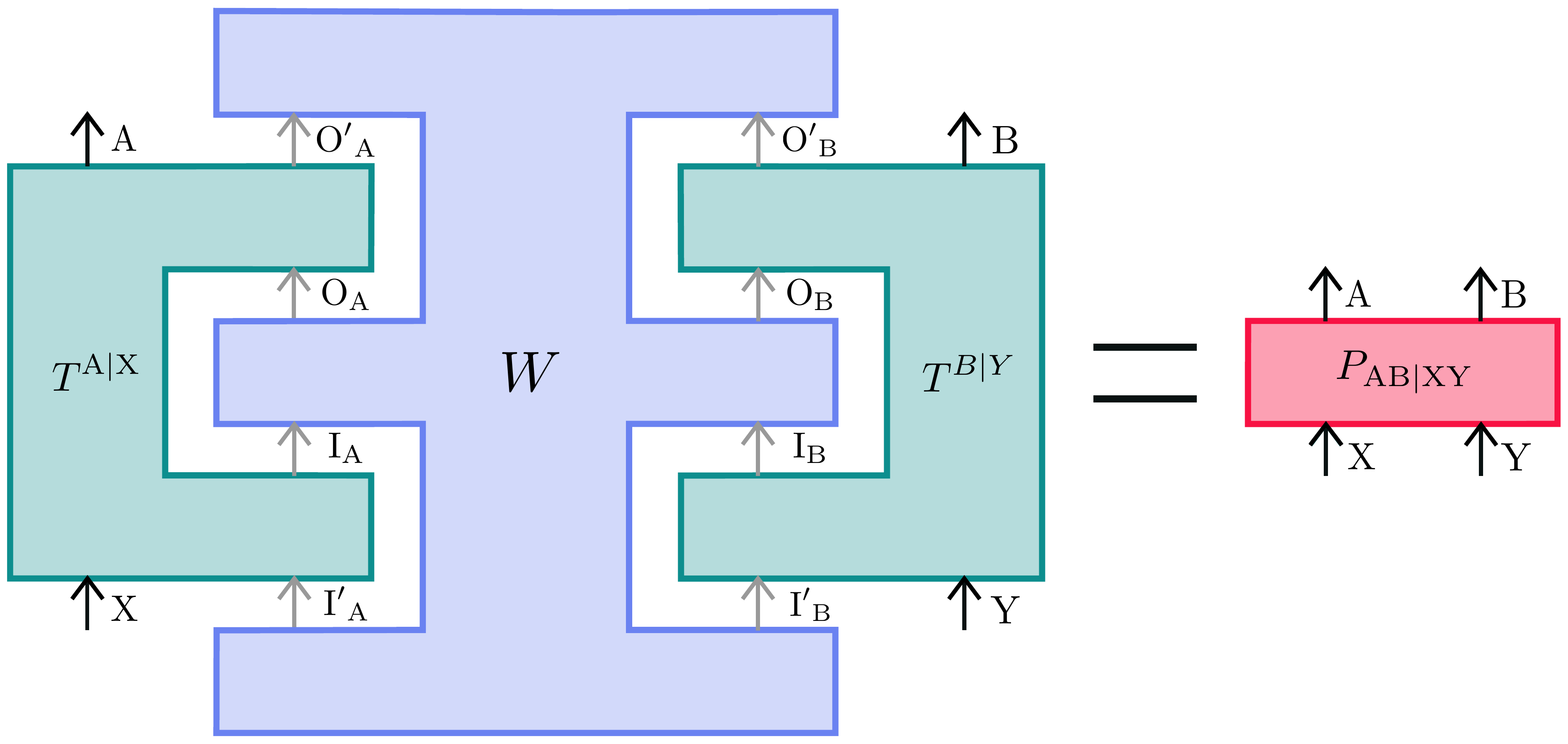}
	\caption{\textbf{Process tensors.} Process tensors $W$ are defined as the most general tensors that transform independent sets of local probabilistic operations $T^{\A|\X}$ and $T^{\B|\Y}$ into correlations $P_{\A\B|\X\Y}$.}
\label{fig::processtensor}
\end{center}
\end{figure}

Having characterized the local probabilistic operations of our boxworld theory, we now define a higher-order transformation that acts on such local transformations, giving it the name of \textit{process tensor}. A process tensor is the most general bilinear transformation that maps local probabilistic operations into probability distributions, the correlations. That is, a bipartite process tensor $W=W_{\Iap\Oa\Ibp\Ob|\Ia\Oap\Ib\Obp}$ is the most general tensor that satisfies
\begin{equation}\label{eq::bornrule}
     W * T^{\A|\X} * T^{\B|\Y} = P_{\A\B|\X\Y},
\end{equation}
for any set of probabilistic local operations $T^{\A|\X}=T^{\A|\X}_{\Ia\Oap|\Iap\Oa}$ and $T^{\B|\Y}=T^{\B|\Y}_{\Ib\Obp|\Ibp\Ob}$, where $P_{\A\B|\X\Y}$ are valid sets of probability distributions, and where the symbol~$*$ is used to represent a tensor contraction over random variables that are common to two tensors (see Sec.~\ref{sec::methods}). This definition is illustrated in Fig.~\ref{fig::processtensor}, while a representation of process tensors in terms of their random variables and in terms of their local box systems is given in Fig.~\ref{fig::processtensor_representations}.  In Sec.~\ref{sec::methods}, we provide a characterization of process tensors in terms of positivity and linear constraints, which is proven in App.~\ref{app::processtensors} (Theorem~\ref{thm::processtensor}), where we also show that a process tensor is in itself a set of probability distributions. The analogous object in quantum theory would be a process matrix, which is a representation of the most general transformation that maps local sets of quantum instruments to probability distributions~\cite{oreshkov2012quantum,araujo2015witnessing}. 

Conversely, a set of correlations $P_{\A\B|\X\Y}$ are called \textit{process tensor correlations} when there exist local operations $T^{\A|\X}$ and $T^{\B|\Y}$, and a process tensor $W$, such that Eq.~\eqref{eq::bornrule} holds. Hence, process tensor correlations are those that can be generated by local operations acting on process tensors. The parallel in the process matrix formalism is the set of correlations that can be generated by acting with local quantum instruments on a process matrix, here called the set of \textit{process matrix correlations}.

\begin{figure}
\begin{center}
	\includegraphics[width=\columnwidth]{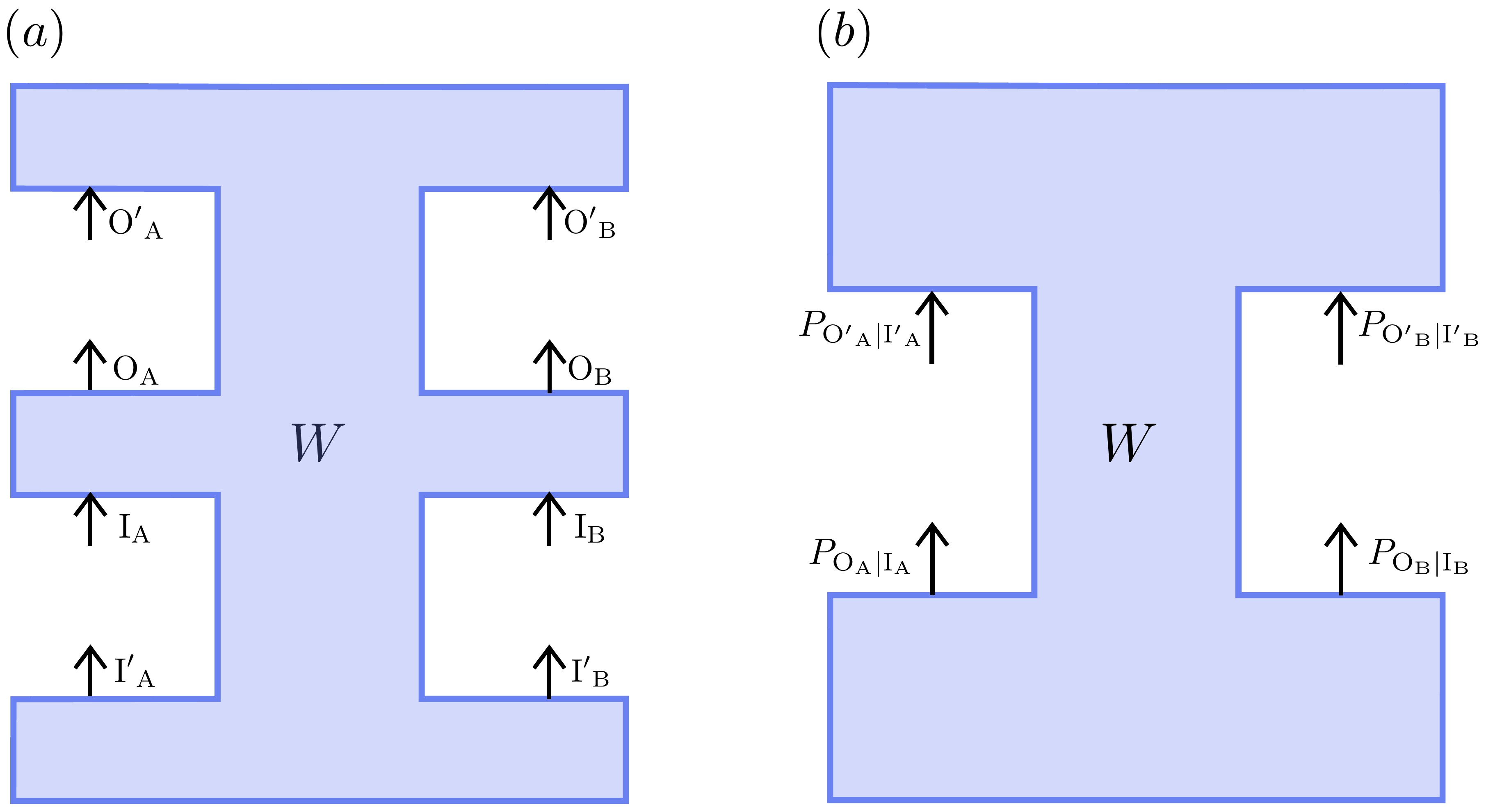}
	\caption{\textbf{Representation of process tensors.} A process tensor $W$ can be equivalently depicted in two ways: (a) with its wires representing the random variables they act on, or (b) with its wires representing local box systems.} 
\label{fig::processtensor_representations}
\end{center}
\end{figure}

The first question then is: what sets of bipartite joint conditional probability distributions correspond to process tensor correlations? We show that, in fact, any valid set of probability distributions correspond to process tensor correlations. Take for example the particular process tensor and local operations given by
\begin{align}
    W_{\diamond} &= \delta_{\Iap,\phi}\,\delta_{\Ibp,\phi}\,P_{\Oa\Ob|\Ia\Ib} \label{eq::ansatzW1} \\
    T^{\A|\X}_{\diamond} &= \delta_{\Ia,\X}\,\delta_{\Oap,\phi}\,\delta_{\A,\Oa} \label{eq::ansatzTax1} \\
    T^{\B|\Y}_{\diamond} &= \delta_{\Ib,\Y}\,\delta_{\Obp,\phi}\,\delta_{\B,\Ob}, \label{eq::ansatzTby1}
\end{align}
where $\phi$ is a constant, $\delta_{a,b}$ is the Kronecker delta function, and $P_{\Oa\Ob|\Ia\Ib}$ is any set of probability distributions. By comparison with Eq.~\eqref{eq::op_probabilistic}, one can check that $T^{\A|\X}_{\diamond}$ in Eq.~\eqref{eq::ansatzTax1} and $T^{\B|\Y}_{\diamond}$ in Eq.~\eqref{eq::ansatzTby1} are valid operations. In App.~\ref{app::processtensors} we show that $W_{\diamond}$ in Eq.~\eqref{eq::ansatzW1} is a valid process tensor. This process tensor and local operations are represented in Fig.~\ref{fig::W_diamond}. It is straightforward to see that the correlations resulting from the action of these operations on this process tensor will be $P_{\A\B|\X\Y}(a,b|x,y)=P_{\Oa\Ob|\Ia\Ib}(a,b|x,y)$ for all $a\in\A,b\in\B,x\in\X$, and $y\in\Y$. Hence, all sets of probability distributions are correlations that can be produced by process tensors. 

A few problems arise from this fact. First, although process tensor correlations form an outer approximation to the set of process matrix correlations, it is a trivial one---the set of all correlations. Second, in this theory, we find that the operations in Eqs.~\eqref{eq::ansatzTax1} and~\eqref{eq::ansatzTby1}, which are constant operations that correspond to a simple relabeling of random variables, and a process tensor such as the one in Eq.~\eqref{eq::ansatzW1}, which corresponds to a simple joint state preparation and does not allow for any communication between Alice and Bob, can be used to generate all sets of correlations, including perfect two-way signaling correlations
\begin{equation}\label{eq::2-sigcorr}
    P^\text{2-sig}_{\A\B|\X\Y} \coloneqq \delta_{\A,\Y}\,\delta_{\B,\X}. 
\end{equation}

This extremal point cannot be obtained in quantum theory, in other words, it is not in the set of process matrix correlations~\cite{bavaresco2019semi, kunjwal2023nonclassicality}.
It is worth noting that to generate \textit{any} correlations when in possession of the process tensor $W_\triangle$, Alice and Bob are not required to apply any meaningful operations nor exchange any systems or information; they can simply read off the desired correlations from $W_\triangle$. 

\begin{figure}
\begin{center}
	\includegraphics[width=\columnwidth]{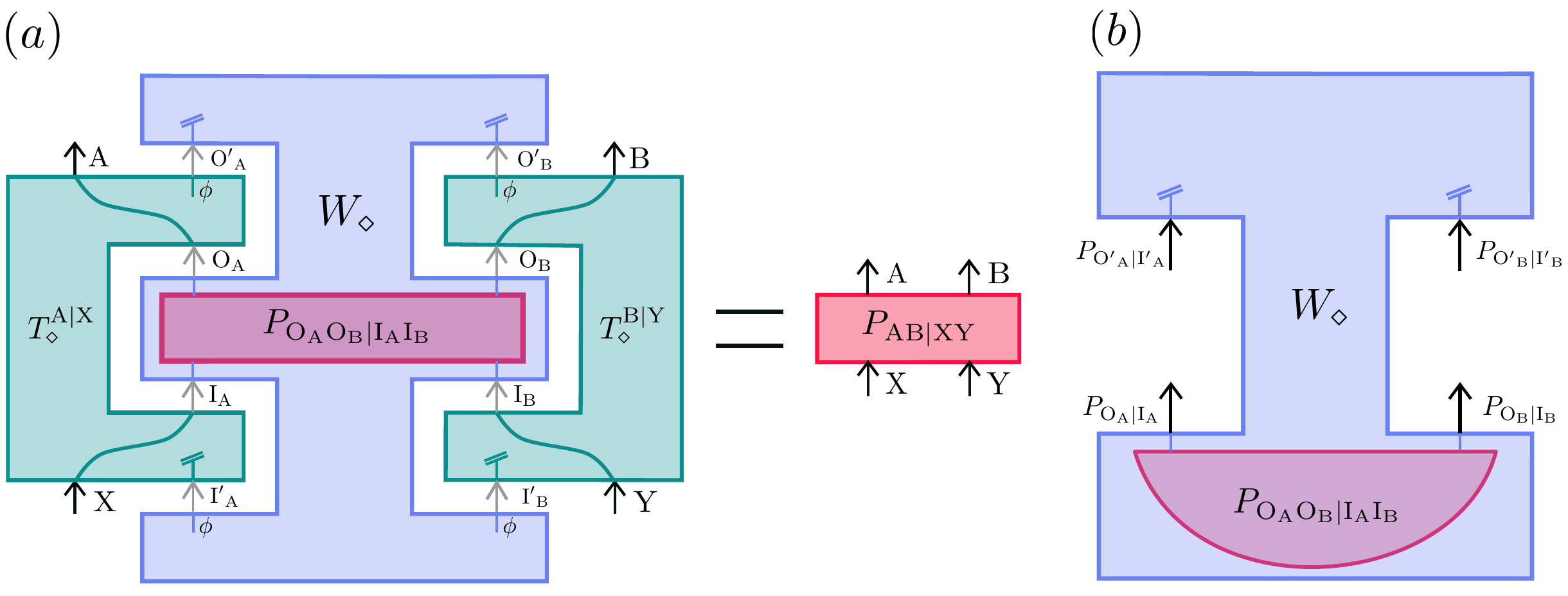}
	\caption{\textbf{Representation of $W_\diamond$.} $(a)$ The process tensor $W_\diamond$ and local operations $T^{\A|\X}_\diamond$ and $T^{\B|\Y}_\diamond$ that can generate any correlations $P_{\A\B|\X\Y}$. $(b)$ The same process tensor $W_\diamond$ depicted with its wires representing box systems, showing how it can be seen as a simple joint state preparation.} 
\label{fig::W_diamond}
\end{center}
\end{figure}

This phenomenon violates what we believe should be a tenet of \textit{any} generalized probabilistic theory whose fundamental elements are nonsignaling probability distributions. That is, a shared state preparation between two parties should not be able to generate signaling correlations under the local operations of the theory.

We posit this as a requirement predominantly on physical grounds, thus proposing that process tensors should respect the principle of \textit{nonsignaling preservation}. In the following, we investigate what are the consequences of imposing our boxworld theory to respect this physical principle, both to the set of processes as well as to the set of correlations that are allowed by the theory.

\subsection{Process tensors and nonsignaling preservation}

\begin{figure}
\begin{center}
	\includegraphics[width=0.8\columnwidth]{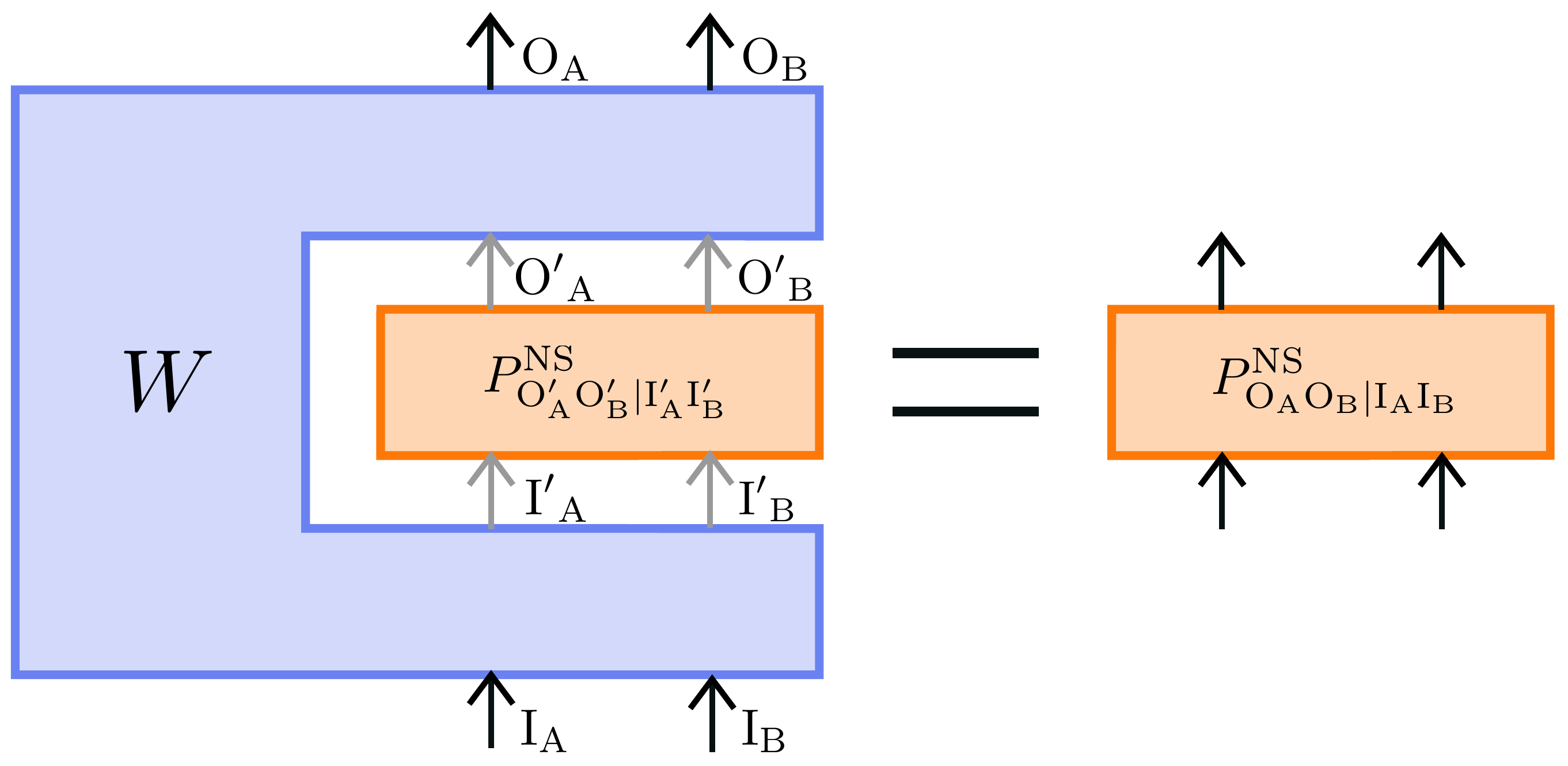}
	\caption{\textbf{Nonsignaling preservation.} A process tensor $W$ that is nonsignaling preserving must satisfy the condition of corresponding to a valid deterministic transformation that maps every nonsignaling box in the local parties' output systems $P^\text{NS}_{\Oap\Obp|\Iap\Ibp}$ to a nonsignaling box in the local parties input systems $P^\text{NS}_{\Oa\Ob|\Ia\Ib}$.}
\label{fig::NSP}
\end{center}
\end{figure}

In the process matrix formalism, a 
process matrix can be viewed as a channel, i.e., as a deterministic transformation that maps quantum states of the local parties output systems to quantum states in the local parties input systems~\cite{araujo2015witnessing}. The equivalent condition in boxworld is what we call nonsignaling preservation (NSP): A process tensor that satisfies NSP must act as a deterministic transformation that maps all nonsignaling boxes (the states of the theory) in the local parties' output spaces to nonsignaling boxes in the local parties' input spaces. Formally, a nonsignaling-preserving process tensor $W$ is one that, for all sets of nonsignaling probability distributions $P^\text{NS}_{\Oap\Obp|\Iap\Ibp}$, satisfies
\begin{equation}\label{eq::W_NSP}
    W * P^\text{NS}_{\Oap\Obp|\Iap\Ibp} = P^\text{NS}_{\Oa\Ob|\Ia\Ib} 
\end{equation}
where $P^\text{NS}_{\Oa\Ob|\Ia\Ib}$ is a set of nonsignaling probability distributions. This definition is illustrated in Fig.~\ref{fig::NSP}. As a consequence, a nonsignaling-preserving process tensor is one that cannot prepare a signaling system in its input space, ruling out, for example, the process tensor in Eq.~\eqref{eq::ansatzW1}. A complete characterization of nonsignaling-preserving processes in terms of positivity and linear constraints is presented in Sec.~\ref{sec::methods}, with the proof given in App.~\ref{app::processtensors}  (Theorem~\ref{thm::processtensorsNSP}). There, we also show that this condition is equivalent to the requirement of \textit{complete nonsignaling preservation}, which is the demand that process tensors must correspond to deterministic transformations that map nonsignaling boxes to nonsignaling boxes, even when they act on only a part of the set of probability distributions. 

From the example of the process tensor in Eq.~\eqref{eq::ansatzW1}, which is a valid process tensor that does not respect the principle NSP, one can already see that NSP is a principle that imposes nontrivial constraints in the set of process tensors. However, this is \textit{not} the case for the set of correlations they can create. In fact, just like in the previous case, all correlations can be created by nonsignaling-preserving process tensors and local operations. In particular, the perfect two-way signaling correlations in Eq.~\eqref{eq::2-sigcorr} can be generated by the process tensor and local probabilistic operations given by
\begin{align}
    W_{\triangle} &= \delta_{\Iap,\Ib}\,\delta_{\Oa,\phi}\,\delta_{\Ibp,\phi}\,\delta_{\Ob,\Oap} \label{eq::ansatzW2} \\
    T^{\A|\X}_{\triangle} &= \delta_{\Ia,\phi}\,\delta_{\Oap,\X}\,\delta_{\A,\Iap}, \label{eq::ansatzTax2} \\
    T^{\B|\Y}_{\triangle} &= \delta_{\Ib,\Y}\,\delta_{\Obp,\phi}\,\delta_{\B,\Ob}. \label{eq::ansatzTby2}
\end{align}
Again by comparison with Eq.~\eqref{eq::op_probabilistic}, one can check that $T^{\A|\X}_{\triangle}$ in Eq.~\eqref{eq::ansatzTax2} and $T^{\B|\Y}_{\triangle}$ in Eq.~\eqref{eq::ansatzTby2} are valid probabilistic operations. In App.~\ref{app::NSP}, we show that the $W_{\triangle}$ in Eq.~\eqref{eq::ansatzW2} is a valid process tensor that is nonsignaling preserving. This process tensor and local operations are represented in Fig.~\ref{fig::W_triangle}. It can be easily checked that the correlations produced by this set of operations and process tensor amount to $P_{\A\B|\X\Y}=T^{\A|\X}_{\triangle}*T^{\B|\Y}_{\triangle}*W_{\triangle}=\delta_{\A,\Y}\,\delta_{\B,\X}=P^\text{2-sig}_{\A\B|\X\Y}$.

\begin{figure}
\begin{center}
	\includegraphics[width=\columnwidth]{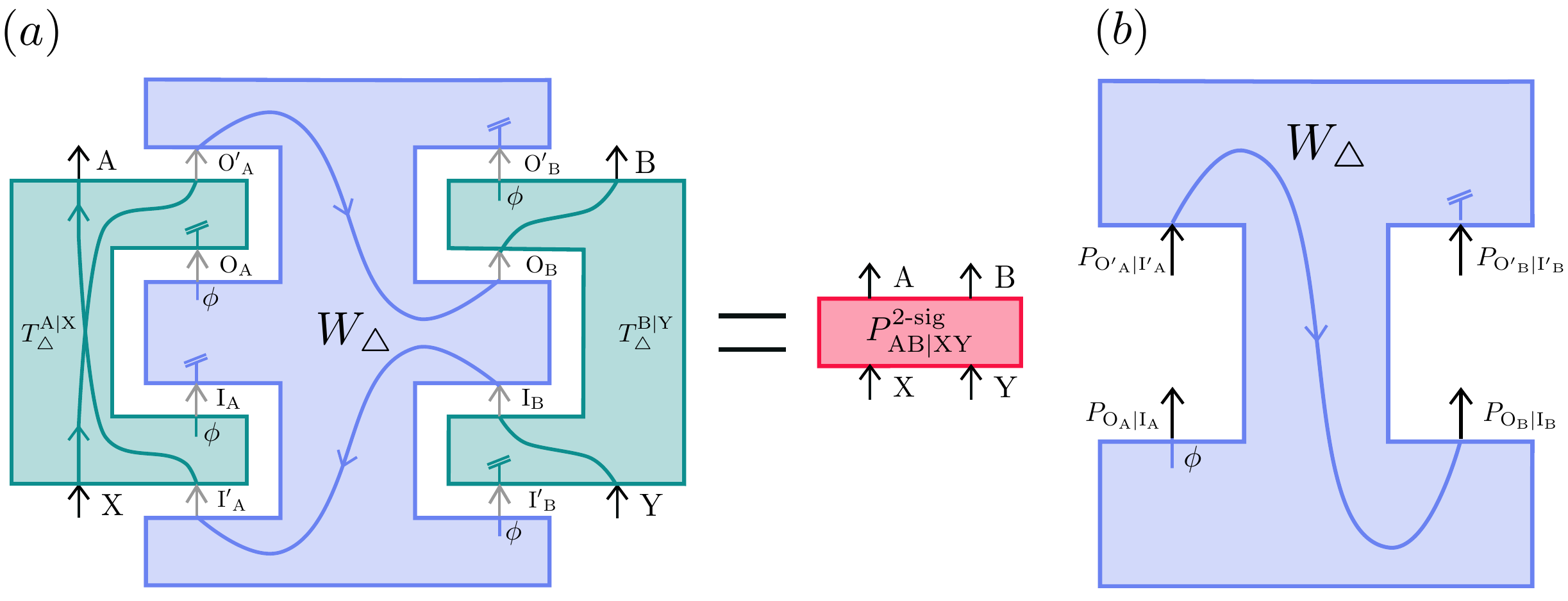}
	\caption{\textbf{Representation of $W_\triangle$.} $(a)$ The process tensor $W_\triangle$ and local operations $T^{\A|\X}_\triangle$ and $T^{\B|\Y}_\triangle$ that can generate perfect two-way signaling correlations $P^\text{2-sig}_{\A\B|\X\Y}$. $(b)$ The same process tensor $W_\triangle$ depicted with its wires representing box systems, showing how it can be seen as a simple identity channel that maps a box system from Alice's output space to Bob's input space.}
\label{fig::W_triangle}
\end{center}
\end{figure}

The process in Eq.~\eqref{eq::ansatzW2} corresponds to a one-way identity channel that maps sets of probability distributions $P_{\Oap|\Iap}$ in Alice's output space to sets of probability distributions $P_{\Ob|\Ib}$ in Bob's input space. Yet, when in possession of this one-way perfect communication channel, we see that Alice and Bob are able to create perfect two-way signaling correlations. Hence, via local classical post-processing, they are also able to generate any sets of probability distributions as correlations. Therefore, we prove that process tensors that respect NSP are able to generate any set of probability distributions under local probabilistic operations.

The constraint that process tensors should be nonsignaling preserving results in a class of processes that cannot generate signaling correlations from simple state preparations, as was the case of $W_\diamond$ in Eq.~\eqref{eq::ansatzW1}. However, we find that this requirement is not strong enough to reduce the set of correlations whatsoever: for this new class, the full set of two-way signaling correlations is still achievable under local transformations. What's more is that the process required to produce any distribution need itself be only one-way signaling, as is the case of $W_\triangle$ in Eq.~\eqref{eq::ansatzW2}. We argue that this fact---that all two-way signaling correlations can be generated by one-way signaling processes---posts a radical departure from common sense, and assert that signaling correlations (in a given direction) should only originate from processes that facilitate {communication} in that direction. In short: one should not be able to acquire external information unless someone else physically sends it to them. This prompts us to propose a second physical principle for our higher-order boxworld theory: \textit{no signaling without system exchange}.

\subsection{Process tensors and no signaling without system exchange}

Having learned that the NSP principle does not restrict the set of correlations and also allows for some processes with undesirable properties, we set out to formulate a different principle to impose on our boxworld theory. To do so, we take careful note of the concrete examples in Eqs.~\eqref{eq::ansatzW1}--\eqref{eq::ansatzTby1} and Eqs.~\eqref{eq::ansatzW2}--\eqref{eq::ansatzTby2}. There, we see that regardless of whether the process is constrained by the NSP principle or not, both examples create perfect two-way signaling distributions under identical operations on Bob's side. This operation is a simple relabeling of random variables, which in both cases allows for Bob to signal to Alice. We reason that these types of operations are trivial, since they do not encode any information into the transformed systems. Therefore, it is reasonable to impose that a process tensor under such operations should not create signaling correlations.

Thus, our proposed restriction: the principle of no signaling without system exchange (NSWSE) solicits a modification in the definition of how a process should behave under a subclass of operations that we call {\it nonsignaling operations}. Our explicit requirement then is that process tensors which respect NSWSE cannot produce signaling correlations when acted upon by nonsignaling operations.

Our starting point is to identify the subclass of operations in Eqs.~\eqref{eq::op_deterministic} and~\eqref{eq::op_probabilistic} which we term nonsignaling. A set of deterministic operations $\overline{T}^{\X}$ is termed nonsignaling if, when acting locally on a box $P_{\O|\I}$, the resulting box $P_{\Op|\Ip}=\overline{T}^{\X}*P_{\O|\I}$ is independent of $\X$. A set of probabilistic operations $\overline{T}^{\A|\X}$ is nonsignaling if $\sum_\A\overline{T}^{\A|\X}=\overline{T}^{\X}$ is nonsignaling as well.

\begin{figure}
\begin{center}
	\includegraphics[width=0.8\columnwidth]{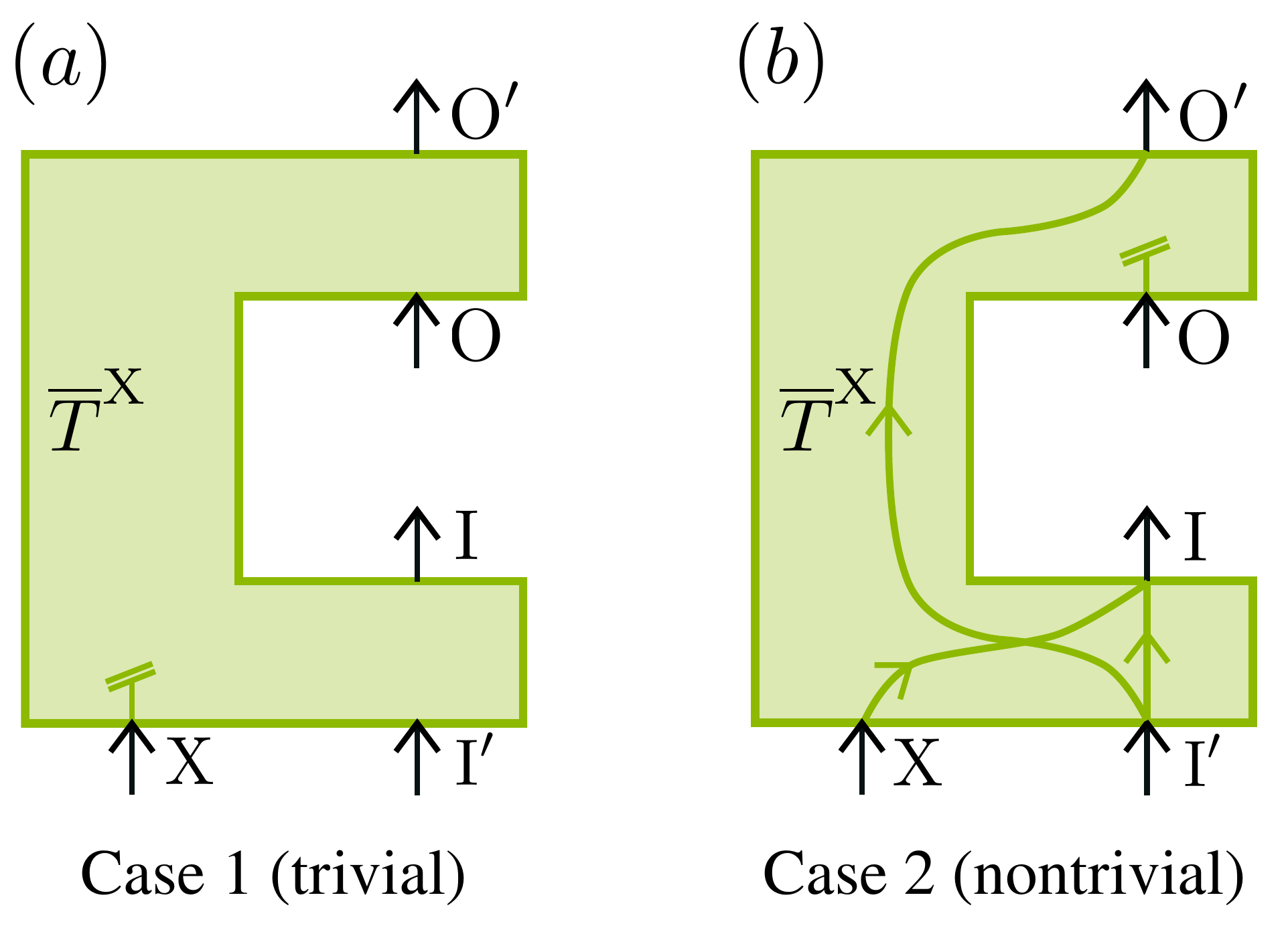}
	\caption{\textbf{Nonsignaling deterministic operations.} Sets of nonsignaling deterministic operations $\overline{T}^{\X}$ are either of $(a)$ case 1 (trivial), where $\overline{T}^{\X}$ does not depend on $\X$, or $(b)$ case 2 (nontrivial), where in $\overline{T}^{\X}$, only the random variable $\I$ may depend on $\X$ but $\Op$ is independent of $\O$.}
\label{fig::operations_NS}
\end{center}
\end{figure}

At first glance, we might assume that a set of operations is nonsignaling if all operations in the set are identical, i.e., if the set  $\overline{T}^{\X}$ itself is independent of $\X$, implying they can be decomposed as $\overline{T}^{\X}= P_{\I|\Ip}\,P_{\Op|\Ip\O} = \sum_\lambda \pi_\lambda\, D^\lambda_{\I|\Ip}\,D^\lambda_{\Op|\Ip\O}$ for all values of $\X$. While this is true, there is yet another class of operations that also satisfy the nonsignaling constraint, which might depend on $\X$ and yet not encode this information on its transformed systems. As we prove in App.~\ref{app::NSoperations}  (Theorem~\ref{thm::nonsignalingoperations}), these operations are necessarily of the form
\begin{equation}\label{eq::op_nonsignaling}
    \overline{T}^{\X} = P_{\I|\Ip \X}\,P_{\Op|\Ip} = \sum_\lambda \pi_\lambda\, D^\lambda_{\I|\Ip X}\,D^\lambda_{\Op|\Ip},
\end{equation}
where the pre-processing functions $P_{\I|\Ip \X}$ may depend on $\X$, however, the post-processing functions $P_{\Op|\Ip}$ must be independent of both $\O$ and $\X$. Together, these two classes of operations correspond to the entire set of nonsignaling operations. A characterization of these operations is provided in Sec.~\ref{sec::methods}. Note that nonsignaling is a nontrivial constraint for local operations in boxworld, which defines a strict subset of operations, depicted in Fig.~\ref{fig::operations_NS}.

We are now equipped to define process tensors that satisfy the principle of NSWSE. A process tensor $W$ satisfies NSWSE if, when nonsignaling operations are applied to it, it can only lead to nonsignaling correlations. That is, if for all nonsignaling operations $\overline{T}^{\A|\X}$ and $\overline{T}^{\B|\Y}$,
\begin{equation}\label{eq::W_NSWSE}
    W * \overline{T}^{\A|\X} * \overline{T}^{\B|\Y} = P^\text{NS}_{\A\B|\X\Y}
\end{equation}
where $P^\text{NS}_{\A\B|\X\Y}$  satisfies Eqs.~\eqref{eq::probNS_AprecB} and~\eqref{eq::probNS_BprecA}. This definition is illustrated in Fig.~\ref{fig::NSWSE}. In Sec.~\ref{sec::methods}, we provide a characterization of process tensors that satisfy NSWSE in terms of positivity and linear constraints. This characterization is proven in App.~\ref{app::NSWSE}  (Theorem~\ref{thm::processtensorsNSWSE}). We furthermore show that NSWSE is a stronger condition than NSP, and that all process tensors that satisfy NSWSE also satisfy NSP.

In fact, NSWSE is a \textit{strictly} stronger condition than NSP. Take for example the set of operations in Eqs.~\eqref{eq::ansatzTax2}--\eqref{eq::ansatzTby2}, as well as those previously presented in Eqs.~\eqref{eq::ansatzTax1}--\eqref{eq::ansatzTby1}. These sets of operations are nonsignaling, i.e., they satisfy Eq.~\eqref{eq::op_nonsignaling}. Yet, there exist process tensors, namely the ones in Eqs.~\eqref{eq::ansatzW2} and~\eqref{eq::ansatzW1}, such that, when these operations are applied onto these processes, the generated correlations are signaling. Hence, neither the process tensor in Eq.~\eqref{eq::ansatzW1} nor the one in Eq.~\eqref{eq::ansatzW2} satisfy NSWSE, and would be forbidden by this principle. In that respect, this principle has stark consequences: it rules out the possibility of one-way gbit channels, as the one of Eq.~\eqref{eq::ansatzW2}.

We duly arrive at the conclusion that the NSWSE principle imposes nontrivial constraints on the set of allowed process tensors. Additionally, contrarily to the NSP principle, we find that NSWSE also imposes nontrivial constraints on the set of allowed correlations. In particular, in App.~\ref{app::no2waysig}  (Theorem~\ref{thm::no2waysig}), we show that correlations of the form $P^\text{2-sig}_{\A\B|\X\Y}$ are excluded: under the most general operations, a process tensor that satisfies NSWSE cannot generate perfect two-way signaling. Since these correlations are also not in the set of process matrix correlations~\cite{bavaresco2019semi,kunjwal2023nonclassicality}, NSWSE presents itself as a candidate for a principle that could lead to an outer approximation of the set of signaling correlations generated in higher-order quantum theory.

\begin{figure}
\begin{center}
	\includegraphics[width=\columnwidth]{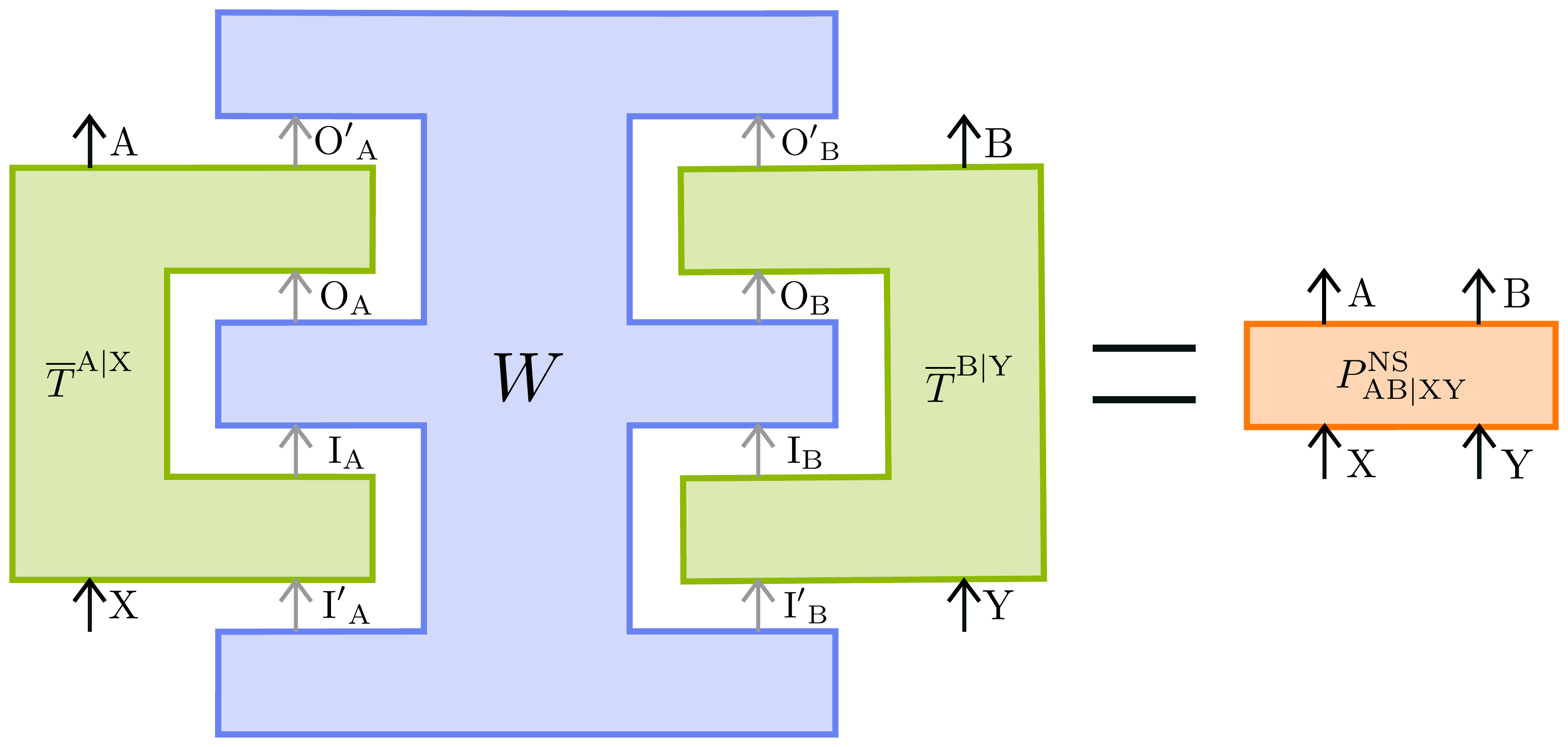}
	\caption{\textbf{No signaling without system exchange.} A process tensor $W$ respects the principle of no signaling without system exchange if, for all nonsignaling local operations $\overline{T}^{\A|\X}$ and $\overline{T}^{\B|\Y}$, the resulting correlations $P^\text{NS}_{\A\B|\X\Y}$ are nonsignaling.}
\label{fig::NSWSE}
\end{center}
\end{figure}

If this is the case, then process matrix correlations must also satisfy NSWSE; intuitively one would expect them to, and we now argue that this is indeed the case. To make the connection, one must first identify the class of nonsignaling quantum instruments. In quantum theory, a set of quantum instruments can only be nonsignaling in a manner analogous to case 1 in boxworld: when the set of transformations trivially does not depend on the classical input $\X$. These are the transformations whose output system, once summed over all values of the classical output, does not depend on the value of the classical input. In technical vocabulary these are completely positive, trace non-increasing linear maps $\{\Ical^{a|x}\}$, which, for any input state $\rho$, satisfy $\sum_a \Ical^{a|x}(\rho) = \rho'$, where $\rho'$ is quantum state independent of $x$. This type of set of quantum instruments has been called an instrument assemblage~\cite{piani2015channel}. Under the action of these nonsignaling sets of quantum instruments, it is straightforward to see that processes matrices can only produce nonsignaling correlations. Hence, in higher-order quantum theory, process matrices automatically satisfy NSWSE.

However, since this principle is imposed on the set of processes, both in quantum theory and in boxworld, it is not straightforward to conclude its implications to the set of correlations that can be attained by process tensors versus process matrices.

We have now reached a vantage point where we have two sets of correlations at hand: correlations generated by process matrices that respect NSWSE `for free'; and correlations generated by process tensors that we require to respect the NSWSE principle. However, since this principle is formulated at the level of the processes, both in quantum theory and in boxworld, its implications to the set of correlations that these processes can create is not straightforward. Namely, while the set of general process tensor correlations contains the set of process matrix correlations, albeit in a trivial manner, is it still the case that the set of correlations from process tensors that respect NSWSE contains the set of process matrix correlations?
In the next section we construct a fully fledged theory of processes tensors that respect NSWSE in order to further investigate this question.

\subsection{A theory of boxworld processes}

Motivated by the physical principles of NSP and NSWSE, and by the goal of constructing a nontrivial outer approximation to the set of process matrix correlations, we develop a theory of boxworld processes. We begin by defining \textit{boxworld processes} to be process tensors that respect the NSWSE principle, and consequently also the NSP principle. We furthermore define \textit{boxworld correlations} to be the correlations that can be produced by boxworld processes under the action of general local probabilistic operations. Boxworld processes will still be denoted as ${W}$, while boxworld correlations will be denoted as $\overline{P}_{\A\B|\X\Y}$. 

As previously mentioned, the first result we prove about boxworld correlations is that they do not contain sets of perfect two-way signaling probability distributions. In particular, in the two input, two output scenario (i.e., $|\A|=|\B|=|\X|=|\Y|=2$, where $|\O|$ denotes the cardinality of the random variable $\O$) we show that that if $\overline{P}_{\A\B|\X\Y}$ are boxworld correlations, then they satisfy
\begin{equation}\label{eq::bwcorr_bound}
    \frac{1}{4}\sum_{\A,\B,\X,\Y} \delta_{\A,\Y}\,\delta_{\B,\X} \overline{P}_{\A\B|\X\Y} \leq 1 - \frac{1}{2d},
\end{equation}
where $d\coloneqq\min\{|\Oap|,|\Obp|\}>0$. This statement is proven in App.~\ref{app::no2waysig}  (Theorem~\ref{thm::no2waysig}). This inequality is violated by $P^\text{2-sig}_{\A\B|\X\Y}$ in Eq.~\eqref{eq::2-sigcorr}, since they would attain a value of $1$ for the left-hand side of Eq.~\eqref{eq::bwcorr_bound}. Hence, perfect two-way signaling cannot be achieved by any boxworld process over random variables of finite, albeit arbitrarily high, cardinality.

The second result is that the set of all boxworld correlations forms a polytope. This follows from the fact that both boxworld processes and local operations are themselves described by sets of probability distributions that form polytopes, as shown in App.~\ref{app::polytope}  (Theorem~\ref{thm::polytope}). Thus, the polytope of boxworld correlations corresponds to a strict subset of the polytope of all probability distributions. Since the set of boxworld correlations forms a polytope, it is clearly different from the set of process matrix correlations, which is also a convex set however not a polytope~\cite{branciard2016simplest}. It remains then to find out whether the set of process matrix correlations is strictly contained in the set of boxworld correlations.

To further understand the properties of boxworld correlations, we investigate the causal properties of boxworld processes. Much like process matrices, boxworld processes can display different causal properties. Boxworld processes that, under the action of any sets of local probabilistic transformations for Alice and Bob, can only generate correlations $P^{A\prec B}_{\A\B|\X\Y}$ that are nonsignaling from Bob to Alice [i.e., correlations that satisfy Eq.~\eqref{eq::probNS_AprecB}] are termed causally ordered from Alice to Bob and denoted as $W^{A\prec B}$. Analogously, we define boxworld processes $W^{B\prec A}$ to be causally ordered from Bob to Alice. Boxworld processes that can only generate nonsignaling correlations $P^\text{NS}_{\A\B|\X\Y}$ [i.e., correlations that satisfy both Eq.~\eqref{eq::probNS_AprecB} and~\eqref{eq::probNS_BprecA}] regardless of the sets of probabilistic operations applied to it, are called nonsignaling boxworld processes and denoted as $W^{A||B}$. Finally, causally separable boxworld processes are those that can be decomposed as a convex combination of causally ordered boxworld processes, i.e., $W^\text{sep}=qW^{A\prec B}+(1-q)W^{B\prec A}$. Boxworld processes that are not causally separable are considered to display indefinite causal order. We provide characterizations of all these different sets of boxworld processes in App.~\ref{app::boxworldprocesses} (Theorem~\ref{thm::causordproc}). This same logic also holds for the definitions of casual order for process matrices in higher-order quantum theory~\cite{chiribella2009theoretical}. 

In terms of the correlations, those that display two-way signaling but that can nonetheless be expressed as convex combinations of one-way nonsignaling correlations, i.e.,
\begin{equation}\label{eq::probcausal}
    P^\text{causal}_{\A\B|\X\Y} = q\,P^{A\prec B}_{\A\B|\X\Y} + (1-q)\,P^{B\prec A}_{\A\B|\X\Y},
\end{equation}
where $0\leq q\leq 1$, have been term \textit{causal correlations}. In turn, noncausal correlations, which are those that cannot be decomposed in such a way, are correlations that display genuine two-way signaling, that is, correlations with a form of two-way signaling that cannot be explained as classical mixing of signaling in different one-way directions. Correlations that cannot be decomposed in such a way are called \textit{noncausal correlations}, and they can violate causal inequalities~\cite{oreshkov2012quantum,branciard2016simplest}---a certification of indefinite causal order.

We show that all causal correlations can be attained by boxworld processes. First, we show that all nonsignaling correlations can be attained by local transformations acting on \textit{nonsignaling} boxworld processes. Second, we show that all one-way nonsignaling correlations can be obtained by local transformations acting on \textit{causally ordered} boxworld processes. Finally, we show that all causal correlations can be attained by local transformations acting on \textit{causally separable} boxworld processes. The proofs can be found in App.~\ref{app::causalcorrelations}  (Theorem~\ref{thm::causalcorrelations}). This is also true for the equivalent statements in the formalism of process matrices~\cite{oreshkov2012quantum,bavaresco2019semi}. Therefore, in terms of causal correlations, there is no difference between process matrices and boxworld processes. 

Our final effort is dedicated to comparing the genuine two-way signaling correlations generated by boxworld processes with those generated by process matrices. That is, to further investigating the potential differences of indefinite causal order at the level of correlations between process matrices and boxworld processes. In order to investigate whether there is a difference between the two sets, we test boxworld correlations against several causal inequalities and compare their performance against known results from the literature on process matrix correlations. The three inequalities we study are: the ``guess your neighbor's input'' (GYNI) inequality~\cite{almeida2010guess}; the ``lazy guess your neighbors input'' (LGYNI) inequality~\cite{branciard2016simplest}; and the ``Oreshkov-Costa-Brukner'' (OCB) inequality~\cite{oreshkov2012quantum}. GYNI and LGYNI are inequalities in the scenario where all inputs and outputs are bits, i.e., where $|\A|=|\B|=|\X|=|\Y|=2$; they form the nontrivial facets of the causal polytope in this scenario. The OCB inequality involves a scenario where Bob has an additional input, $|\Yp|=2$. This inequality is not a facet of the causal polytope. 

The GYNI inequality~\cite{almeida2010guess} is defined as 
\begin{equation}\label{eq::gyni}
    \text{GYNI}(P)\coloneqq\frac{1}{4}\sum_{\A,\B,\X,\Y} \delta_{\A,\Y}\,\delta_{\B,\X} P_{\A\B|\X\Y} \leq \frac{1}{2},
\end{equation}
while the LGYNI inequality~\cite{branciard2016simplest} is defined as 
\begin{equation}\label{eq::lgyni}
    \text{LGYNI}(P)\coloneqq\frac{1}{4}\sum_{\A,\B,\X,\Y} \delta_{\X(\A\oplus\Y),0}\,\delta_{\Y(\B\oplus\X),0} P_{\A\B|\X\Y} \leq \frac{3}{4},
\end{equation}
and the OCB inequality~\cite{oreshkov2012quantum} is defined as
\begin{align}\label{eq::ocb}
    \text{OCB}(P) &\coloneqq \frac{1}{8}\sum_{\substack{\A,\B, \\ \X,\Y,\Yp}} \delta_{(\Yp\oplus 1)(\A\oplus\Y),0}\,\delta_{\Yp(\B\oplus\X),0}\, P_{\A\B|\X\Y\Yp} \nonumber \\
    &\leq \frac{3}{4}.
\end{align}
In all cases, the bounds are respected by all causal correlations, defined in Eq.~\eqref{eq::probcausal}. These inequalities are known to be violated by some process matrix correlations~\cite{oreshkov2012quantum,branciard2016simplest}. 

We show that these inequalities are also violated by boxworld correlations. As detailed in App.~\ref{app::inequalities}, we explicitly construct boxworld processes and local operations that generate boxworld correlations $\overline{P}_{\A\B|\X\Y}$ that yield the following violations for the GYNI, LGYNI, and OCB inequalities, implying the following lower bounds:
\begin{align}
    \max_{\mathcal{P}_\text{BW}}\, \text{GYNI}(\overline{P}_{\A\B|\X\Y}) &\geq \frac{3}{4} = 0.75 \\
    \max_{\mathcal{P}_\text{BW}}\, \text{LGYNI}(\overline{P}_{\A\B|\X\Y}) &\geq \frac{11}{12} \approx 0.9167 \\
    \max_{\mathcal{P}_\text{BW}}\, \text{OCB}(\overline{P}_{\A\B|\X\Y\Yp}) &= 1,
\end{align}
where $\mathcal{P}_\text{BW}$ is the polytope of boxworld correlations. The fact that all these causal inequalities are violated by boxworld correlations certifies that indefinite causal order is a property that is present in our theory of boxworld processes. The above value for the GYNI expression was achieved by boxworld correlations generated by boxworld processes that act on random variables of cardinality $3$, i.e., where $|\Iap|=|\Ia|=|\Oa|=|\Oap|=|\Ibp|=|\Ib|=|\Ob|=|\Obp|=3$. For the LGYNI and OCB expressions, the values above where achieved by boxworld correlations from boxworld processes that act on random variables of cardinality 2, i.e., where $|\Iap|=|\Ia|=|\Oa|=|\Oap|=|\Ibp|=|\Ib|=|\Ob|=|\Obp|=2$. The boxworld correlations, local probabilistic operations, and boxworld processes that achieve these violations are presented explicitly in App.~\ref{app::inequalities}.

Notice that, for the OCB expression, boxworld correlations can attain the algebraic maximum. However, this does not imply perfect two-way signaling correlations, which cannot be generated by boxworld processes. Instead, there are nondeterministic boxworld correlations that attain the algebraic maximum of the OCB expression, showing that the polytope of boxworld correlations has nondeterministic extremal points. This is similar to the polytope of nonsignaling correlations, which also has nondeterministic extremal points known as PR-boxes~\cite{vandam2005implausible,rastall1985locality,popescu1994quantum}, and which forms a nontrivial outer approximation to the set of nonsignaling quantum correlations in Bell scenarios.

Ref.~\cite{liu2024tsirelson} has recently proven upper bounds on the maximal violations of these inequalities by process matrix correlations, in arbitrary (finite or infinite) dimensions. This is the analogue of upper bounds of the Tsirelson bound~\cite{tsirelson1980quantum}, or quantum bound, of Bell inequalities, which is the highest violation of such inequalities by correlations produced by quantum systems and local quantum measurements in Bell scenarios. Ref.~\cite{liu2024tsirelson} showed that 
\begin{align}
    \max_{\mathcal{S}_\text{QW}}\, \text{GYNI}(P_{\A\B|\X\Y}) &\leq 0.7592 \\ 
    \max_{\mathcal{S}_\text{QW}}\, \text{LGYNI}(P_{\A\B|\X\Y}) &\approx 0.8194 \\ 
    \max_{\mathcal{S}_\text{QW}}\, \text{OCB}(P_{\A\B|\X\Y\Yp}) &= \frac{2+\sqrt{2}}{4} \approx 0.8536,
\end{align}
where $\mathcal{S}_\text{QW}$ is the set of process matrix correlations. It is unknown whether the upper bound for GYNI is tight~\cite{liu2024tsirelson}.

Therefore, we conclude that boxworld correlations can attain a higher violation of both the LGYNI and the OCB inequality than process matrix correlations. For GYNI, it is not clear if boxworld correlations can attain a higher value than process matrix correlations. However, since the bound for process matrix correlations from Ref.~\cite{liu2024tsirelson} may not be tight, and the value we found for boxworld correlations, which was attained could increase for higher-dimensional systems, it is possible that boxworld correlations could also attain a higher violation of GYNI than process matrix correlations.

Based on these results, we conjecture that the polytope of boxworld correlations is a nontrivial outer approximation of the set of process matrix correlations.

\begin{table}
\begin{center}
{\renewcommand{\arraystretch}{2}
\begin{tabular}{| c | c | c | c |}
	\hline
	Expression & Causal & Process matrix & Boxworld \\
    \hline
    \hline
    GYNI & $\leq\frac{1}{2}=0.5$  & $\leq 0.7592$ & $\geq \frac{3}{4}=0.75$ \\
    LGYNI & $\leq\frac{3}{4}=0.75$  & $\approx 0.8194$ &  $\geq\frac{11}{12}\approx0.9167$\\
    OCB & $\leq\frac{3}{4}=0.75$ & $=\frac{2+\sqrt{2}}{4}\approx0.8536$ & $=1$ \\
    \hline
\end{tabular}
}
\end{center}
\caption{Bounds on the maximum values for the GYNI [Eq.~\eqref{eq::gyni}], LGYNI [Eq.~\eqref{eq::lgyni}], and OCB [Eq.~\eqref{eq::ocb}] expressions that can be attained by causal correlations, process matrices correlations, and boxworld correlations.}
\label{tb::bounds}
\end{table}

\section{Discussion}\label{sec::discussion}

Physical and information-theoretic principles have long been proposed as ways to understand the power and limitations of correlations in quantum theory; in particular, for the nonsignaling quantum correlations, produced by quantum systems in Bell scenarios. At the same time, the development of higher-order quantum theory has provided a framework to study other more general, although less well understood, correlations that can be achieved in signaling scenarios. Beyond simply capturing signaling that can be achieved by exchange of quantum systems through quantum communication in a fixed order, the quantum higher-order formalism predicted forms of genuine two-way signaling correlations associated with a lack of a definite global order of events between the communicating parties. Such a property, called indefinite causal order, has been extensively explored, and has raised foundational questions about the nature of causality and found several applications in quantum information processing.
Despite this, the largest set of genuine two-way signaling correlations allowed in higher-order quantum theory---namely, the set of process matrix correlations---has remained
poorly understood. 

In this work we investigated the boundary of the set of process matrix correlations by first establishing a probabilistic higher-order theory more general than quantum theory---one which is based on boxworld---and second, by imposing physical principles on this theory. 

The starting point was choosing our basic states to be nonsignaling sets of probability distributions, a generalization of quantum states. We then adopted the characterization of the most general local transformations that map nonsignaling probability distributions onto themselves, from Ref.~\cite{baumeler2016space}, as the local operations of our theory. We subsequently characterized the most general set of higher-order transformations that map local operations onto correlations, which are the processes of our theory, called process tensors. After concluding that the correlations generated by general process tensors do not satisfy any nontrivial constraints, i.e., that they can correspond to any set of probability distributions, we motivated and imposed two physical principles to limit the correlations that can be achieved in our boxworld theory. The first principle, nonsignaling preservation, restricted the set of allowed processes in the theory but not the set of allowed correlations. The second principle, no signaling without system exchange (which is strictly stronger than the first), further restricted the set of processes in the theory as well as the set of correlations. Combining our higher-order boxworld theory with the principle of NSWSE, we constructed a theory of so-called boxworld processes and correlations. We then exploited this theory to investigate properties of boxworld correlations and how they relate to process matrix correlations. We found that boxworld correlations can violate some causal inequalities, in particular the LGYNI and OCB inequalities, beyond what is achievable by process matrix correlations. Combined with the fact that process matrices satisfy what we argue to be the analog of NSWSE in quantum theory, we find evidence that the polytope of boxworld correlations is a nontrivial outer approximation for the set of process matrix correlations. We therefore conjecture this to be the case. 

Establishing whether this conjecture holds would help advancing our understanding of the relationship between boxworld correlations and process matrix correlations. Confirming this would provide a clearer, physically motivated limitation for the set of process matrix correlations while also improving our understanding of higher-order quantum theory. Beyond this, it would be interesting to explore additional physical principles that could further constrain or characterize the correlations within this framework. Such investigations could offer deeper insights into the nature of indefinite causal order and its implications for quantum theory.

\section{Methods}\label{sec::methods}

\textit{\textbf{Tensor operations and notation.}} The basic elements of our theory are sets of probability distributions that are represented by tensors in finite-dimensional real vector spaces. Let us establish the convention of denoting a probability distribution $P$ from inputs $\I=\{i\}_i$ to outputs $\O=\{o\}_o$ as a tensor $P_{\O|\I}$, of which each element is denoted by $P_{\O|\I}(o|i)$. Each index of a tensor corresponds to a certain random variable.

To express the constraints on tensors that correspond to sets of probability distributions, we denote elementwise nonnegativity $P_{\A|\X}(a|x)\geq0 \ \forall \ a\in\A,x\in\X$ with the shorthand notation
\begin{equation}
P_{\A|\X} \geq 0,
\end{equation}
and, for normalization, one denote the expression $\sum_{a}P_{\A|\X}(a|x)=1 \ \forall \ x\in\X$ as simply
\begin{equation}
\sum_{\A} P_{\A|\X} = 1. 
\end{equation}

We define tensor composition by a contraction over repeated indices. Tensor composition is denoted by the symbol~$*$, and is analogous to the link product for matrices~\cite{chiribella2008quantum}. For instance, take the tensors $R_{\A\B}$ and $S_{\B\C}$. Their contraction $Q_{\A\C} = R_{\A\B}*S_{\B\C}$ is given by $Q_{\A\C}(a,c) = \sum_{b}R_{\A\B}(a,b)S_{\B\C}(b,c) \ \forall \ a\in\A,c\in\C$, or, in a more compact form,
\begin{equation}
    Q_{\A\C} = R_{\A\B} * S_{\B\C} \coloneqq \sum_{\B}R_{\A\B}S_{\B\C}.
\end{equation}
When there is no risk of ambiguity, a contraction can be simply denoted as $Q=R*S$, and it is implied to be given over all repeated indices and only over repeated indices. When all the indices are contracted, the operation~$*$ defines a scalar product on a real vector space. Similarly, we can compose tensors with no overlapping indices with the tensor analog of a matrix tensor product, denoted by~$\otimes$. For instance, given two tensors $R_{\A\B}$ and $S_{\C\D}$ we define their composition $Q_{\A\B\C\D}\coloneqq R_{\A\B}\otimes S_{\C\D}$ as $Q_{\A\B\C\D}(a,b,c,d) = R_{\A\B}(a,b) S_{\C\D}(c,d),\ \forall \ a\in\A,b\in\B,c\in\C,d\in\D$, which can be simply written as
\begin{equation}\label{eq::notation_tensorprod}
    Q_{\A\B\C\D} = R_{\A\B} \otimes S_{\C\D} \coloneqq R_{\A\B}S_{\C\D}.
\end{equation}
When there is no risk of ambiguity, the above expression can also be written as simply $Q=R\otimes S$.

We also define the identity tensor $\id_\A$ over a random variable $\A=\{a\}_a$ as a tensor of all ones, i.e., $\id_\A(a)=1$ for all $a\in\A$. With the identity tensor, we may define the equivalent of a matrix (partial) trace, the (partial) \textit{reduction} of a tensor $\rd$, given by the contraction with the identity tensor. The partial reduction of the tensor $T_{\A\B}$ over the index $A$ is given by $\rd_{\A} [T_{\A\B}] \coloneqq T_{\A\B} * \id_{\A} = \sum_a T_{\A\B}(a,b)\ \forall \ a\in\A$, or, in a more compact form,
\begin{equation}
    \rd_{\A} [T_{\A\B}] \coloneqq T_{\A\B} * \id_{\A} = \sum_\A T_{\A\B}.
\end{equation}

We are now equipped to define the tensor analog of a trace-and-replace operation for matrices, which we introduce here in the form of a {\it reduce-and-replace} operation acting on the random variable $\A$ of $T_{\A\B}$,
\begin{equation}
    {}_{\A} T_{\A\B} \coloneqq \rd_\A[T_{\A\B}] \otimes \frac{\id_\A}{d_\A},
\end{equation}
where $d_\A\coloneqq|A|=|\{a\}_a|$ denotes the dimension of the variable $\A$, which is given by its cardinality. Explicitly, ${}_{\A} T_{\A\B}(a,b) \coloneqq\sum_{a'}T_{\A\B}(a',b) \frac{\id_\A(a)}{d_\A} \ \forall \ a\in\A,b\in\B$.
It is straightforward to show that the reduce-and-replace operation is idempotent [i.e., ${}_{\A}({}_{\A} T_{\A\B}) ={}_{\A} T$], commutative [i.e., ${}_{\A}({}_{\B} T)= {}_{\B}({}_{\A} T)$], preserves the elementwise nonnegativity of the tensor, and is self-adjoint w.r.t.\ the tensor composition operation (i.e., $W * {}_\A T = {}_\A W * T$).
\\

\textit{\textbf{Characterization of operations and processes.}}
Here, we present an equivalent characterization to the one presented in the main text of local operations, nonsignaling operations, process tensors, process tensors that respect NSP, and finally process tensors that respect NSWSE, called boxworld processes. The proofs of the equivalence between these definitions are provided in the Appendices. The following characterizations are given in terms of positivity constraints and linear constraints, the latter expressed in terms of reduce-and-replace operations over tensors.

\textit{Probabilistic local operations} $T^{A|X}_{\I\Op|\Ip\O}$ are characterized as
\begin{subequations}
\begin{align}
    T^{\A|\X} &\geq 0 \label{eq::op_positivity} \\
    \rd (T^{\X}) &= d_{\Ip}d_{\O} \label{eq::op_totalred} \\
    _{\I\Op} T^{\X} &= \, _{\Ip\I\O\Op} T^{\X} \label{eq::op_proj1} \\
    _{\Op}T^{\X} &= \, _{\O\Op}T^{\X}, \label{eq::op_proj2}
\end{align}
\end{subequations}
where $T^{\X}_{\I\Op|\Ip\O}\coloneqq\sum_\A T^{A|X}_{\I\Op|\Ip\O}$ is a deterministic local operation. This formulation follows directly from their definition in Eq.~\eqref{eq::op_probabilistic}.

\textit{Probabilistic nonsignaling operations} $\overline{T}^{A|X}_{\I\Op|\Ip\O}$ are valid local operations, which satisfy Eqs.~\eqref{eq::op_positivity}--\eqref{eq::op_proj2}, that additionally respect at least one of the following two sets of conditions, for the trivial and nontrivial case. In the trivial case, where all operations in the set $\overline{T}^{\X}$ are the same for every $\X$, one can write simply $\overline{T}^{\X}=\overline{T}^{\X'}$. In the nontrivial case, where the operations in the set $\overline{T}^{\X}$ depend on the value of $\X$, nonsignaling operations are characterized by
\begin{subequations}
\begin{align}
    _{\I}\overline{T}^{\X} &= \, _{\I}\overline{T}^{\X'} \label{eq::opNS_proj1} \\
    _{\O} \overline{T}^{\X} &= \overline{T}^{\X}, \label{eq::opNS_proj2}
\end{align}
\end{subequations}
where $\overline{T}^{\X}_{\I\Op|\Ip\O} = \sum_\A\overline{T}^{A|X}_{\I\Op|\Ip\O}$. This formulation follows directly from the definition in Eq.~\eqref{eq::op_nonsignaling}. The fact that nonsignaling operations necessarily fall into one of these two cases is proven in App.~\ref{app::NSoperations}  (Theorem~\ref{thm::nonsignalingoperations}).

\textit{Bipartite process tensors} $W_{\Iap\Oa\Ibp\Ob|\Ia\Oap\Ib\Obp}$ are characterized as
\begin{subequations}
\begin{align}
    & W \geq 0 \label{eq::W_positivity} \\
    & r(W) = d_{\Ia}d_{\Oap}d_{\Ib}d_{\Obp} \label{eq::W_totalred} \\
    & _{\mathbb{A}(1-\Obp+\Ob\Obp-\Ib\Ob\Obp)}W = 0 \label{eq::W_proj1} \\
    &_{(1-\Oap+\Oa\Oap-\Ia\Oa\Oap)\mathbb{B}}W = 0 \label{eq::W_proj2} \\
    &_{(1-\Oap+\Oa\Oap-\Ia\Oa\Oap)(1-\Obp+\Ob\Obp-\Ib\Ob\Obp)}W = 0, \label{eq::W_proj3}
\end{align}
\end{subequations}
where $\mathbb{A}\coloneqq\Iap\Ia\Oa\Oap$ and $\mathbb{B}\coloneqq\Ibp\Ib\Ob\Obp$ represent the collective tensor indices of each party. We prove that this formulation is equivalent to the one in Eq.~\eqref{eq::bornrule} in App.~\ref{app::processtensors}  (Theorem~\ref{thm::processtensor}).

\textit{Bipartite process tensors that satisfy the condition of nonsignaling preservation (NSP)} are valid process tensors $W_{\Iap\Oa\Ibp\Ob|\Ia\Oap\Ib\Obp}$ [i.e., they satisfy Eqs.~\eqref{eq::W_positivity}--\eqref{eq::W_proj3}] that additionally respect 
\begin{subequations}
\begin{align}
    _{\Oa(\Oap\Obp\Iap\Ibp)}W &= \, _{\Oa\Ia(\Oap\Obp\Iap\Ibp)}W \label{eq::W_NSP_A1} \\
    _{\Ob(\Oap\Obp\Iap\Ibp)}W &= \, _{\Ob\Ib(\Oap\Obp\Iap\Ibp)}W \label{eq::W_NSP_B1} \\
    _{\Oa(1-\Oap+\Oap\Iap}&_{-\Obp\Oap\Iap)}W \nonumber \\
    = \, _{\Oa\Ia}&_{(1-\Oap+\Oap\Iap-\Obp\Oap\Iap)}W \label{eq::W_NSP_A2} \\
    _{\Ob(1-\Obp+\Obp\Ibp}&_{-\Oap\Obp\Ibp)}W \nonumber \\
    = \, _{\Ob\Ib}&_{(1-\Obp+\Obp\Ibp-\Oap\Obp\Ibp)}W. \label{eq::W_NSP_B2}
\end{align}
\end{subequations}
We prove that this formulation is equivalent to the one in Eq.~\eqref{eq::W_NSP} in App.~\ref{app::NSP}  (Theorem~\ref{thm::processtensorsNSP}).

\textit{Bipartite process tensors that satisfy the condition of no signaling without system exchange (NSWSE)} are valid process tensors $W_{\Iap\Oa\Ibp\Ob|\Ia\Oap\Ib\Obp}$ [i.e., they satisfy Eqs.~\eqref{eq::W_positivity}--\eqref{eq::W_proj3}] that additionally respect 
\begin{subequations}
\begin{align}
    _{\Oa}W &= \, _{\Oa\Ia}W \label{eq::W_NSWSE_A} \\
    _{\Ob}W &= \, _{\Ob\Ib}W, \label{eq::W_NSWSE_B} 
\end{align}
\end{subequations}
a condition strictly stronger than NSP. We prove that this formulation is equivalent to the one in Eq.~\eqref{eq::W_NSWSE} in App.~\ref{app::NSWSE}  (Theorem~\ref{thm::processtensorsNSWSE}).

Finally, \textit{boxworld processes} are defined to be processes tensors that satisfy NSWSE. Applying Eqs.~\eqref{eq::W_NSWSE_A}--\eqref{eq::W_NSWSE_B} to Eqs.~\eqref{eq::W_proj1}--\eqref{eq::W_proj3}, the characterization of boxworld processes can be further simplified to 
\begin{subequations}
\begin{align}
    W &\geq 0  \label{eq::bwW_positivity} \\
    r(W) &= d_{\Ia}d_{\Oap}d_{\Ib}d_{\Obp} \label{eq::bwW_totalred} \\
    _{\mathbb{A}}W &= \, _{\mathbb{A}\Obp}W \label{eq::bwW_proj1} \\
    _{\mathbb{B}}W &= \, _{\Oap\mathbb{B}}W \label{eq::bwW_proj2} \\
    W &= \, _{\Oap}W + \, _{\Obp}W - \, _{\Oap\Obp}W \label{eq::bwW_proj3} \\
    _{\Oa}W &= \, _{\Ia\Oa}W \label{eq::bwW_proj4} \\
    _{\Ob}W &= \, _{\Ib\Ob}W, \label{eq::bwW_proj5}
\end{align}
\end{subequations}
where again $\mathbb{A}\coloneqq\Iap\Ia\Oa\Oap$ and $\mathbb{B}\coloneqq\Ibp\Ib\Ob\Obp$ represent the collective tensor indices of each party.
\\

\noindent\textbf{Code availability.} 
All code developed for this work is available at our online repository in Ref.~\cite{code-repo}. 
\\

\noindent\textbf{Acknowledgements.} 
This work was funded by the Austrian Science Fund (FWF) through Zukunftskolleg ZK03.
JB acknowledges funding from the Swiss National Science Foundation (SNSF) through the funding schemes SPF and NCCR SwissMAP.
\"{A}B acknowledges funding from the Swiss National Science Foundation (SNSF) through project~214808, and by the Hasler Foundation through project~24010.


%

\clearpage

\onecolumngrid
\appendix

\section*{APPENDIX}

The appendix is organized in the following way:

\begin{itemize}
    \item Appendix \ref{app::part1}: Operations and process tensors
    \begin{itemize}
        \item[--] \ref{app::operations}. Local operations
        \item[--] \ref{app::processtensors}. Process tensors
        \item[--] \ref{app::NSP}. Process tensors that respect nonsignaling preservation (NSP)
        \item[--] \ref{app::NSoperations}. Nonsignaling operations
        \item[--] \ref{app::NSWSE}. Process tensors that respect no signaling without system exchange (NSWSE)
        \item[--] \ref{app::boxworldprocesses}. Boxworld processes and causal order
         \item[--] \ref{sec::Sym}. Symmetry transformations of boxworld processes
    \end{itemize}
    \item Appendix \ref{app::part2}: Boxworld correlations
    \begin{itemize}
        \item[--] \ref{app::no2waysig}. No perfect two-way signaling in boxworld correlations
        \item[--] \ref{app::polytope}. The set of boxworld correlations is a polytope
        \item[--] \ref{app::causalcorrelations}. All causal correlations are achieved by boxworld processes
        \item[--] \ref{app::inequalities}. Violating causal inequalities with boxworld correlations
    \end{itemize}
\end{itemize}

\section{Operations and process tensors}\label{app::part1}

\subsection{Local operations}\label{app::operations}

Operations that map sets of probability distributions into sets of probability distributions, both deterministically and probabilistically, have been characterized in Eq.~\eqref{eq::op_deterministic} and~\eqref{eq::op_probabilistic} in terms of pre- and post-processing operations. In Eqs.~\eqref{eq::op_positivity}--\eqref{eq::op_proj2}, an equivalent characterizations is given, expressed in terms of reduce-and-replace operations~\cite{rosset2020algebraic}.

Here we also provide a third equivalent characterization, expressed in terms of positivity and a parametrization of local operations by arbitrary tensors, which will be used in the proofs presented in the following sections. 

First, notice that Eqs.~\eqref{eq::op_proj1} and~\eqref{eq::op_proj2} can be expressed as individual projectors $\widetilde{P}_1(T^{\X}) = T^{\X} - \, _{\I\Op}T^{\X} + \, _{\Ip\I\O\Op}T^{\X}$ and $\widetilde{P}_2(T^{\X}) = T^{\X} - \, _{\Op}T^{\X} + \, _{\O\Op}T^{\X}$, and then combined into one projector $\widetilde{P}_T\coloneqq\widetilde{P}_1\circ\widetilde{P}_2$ that defines the subspace of deterministic operations, according to
\begin{equation}
    \widetilde{P}_T(T^{\X}) = T^{\X} - \, _{\Op}T^{\X} + \, _{\O\Op}T^{\X} - \, _{\I\O\Op}T^{\X} + \, _{\Ip\I\O\Op}T^{\X}. \label{eq::op_fullproj}
\end{equation}

Additionally, using the formalism of Ref.~\cite{milz2024characterising}, any tensors $T$ that have the same constant reduction $r(T)$ for every $T$ and are characterized by some projector $\widetilde{P}$ can be parametrized by an arbitrary tensor $\Xcal$ such that 
\begin{equation}\label{eq::tensor_parametrization}
    T = \rd(T)\frac{\id}{d_\mathbb{T}} + \widetilde{P}(\Xcal) - \, _{[\mathbb{T}]}\Xcal,
\end{equation}
where $\mathbb{T}$ are the collective random variables of the tensor $T$ and $d_\mathbb{T}\coloneqq|\mathbb{T}|$. 

Using this parametrization, all operations given by Eqs.~\eqref{eq::op_totalred} and~\eqref{eq::op_fullproj} can be rewritten as a function of a parameter tensor $\Xcal$, and therefore, every local operation $T^{\A|\X}$ can be expressed as
\begin{align}
    T^{\A|\X} &\geq 0 \label{eq::op_param_positivity} \\
    T^{\X} &= \frac{\id}{d_{\I}d_{\Op}} + \Xcal - \, _{\Op}\Xcal + \, _{\O\Op}\Xcal - \, _{\I\O\Op}\Xcal, \label{eq::op_parametrization}
\end{align}
for some tensor $\Xcal$ such that Eq.~\eqref{eq::op_param_positivity} is satisfied. For every tensor of $\Xcal$ such that the positivity condition is satisfy, the result $T^{\X}$ is a valid deterministic operation. In particular, the choice $\Xcal=0$ is valid choice, implying that $T^{\X} = \frac{\id}{d_{\I}d_{\Op}}$ is a valid set of local deterministic operations.

\subsection{Process tensors}\label{app::processtensors}

In this section, we prove the characterization of bipartite process tensors provided in Eqs.~\eqref{eq::W_positivity}--\eqref{eq::W_proj3}, show that all process tensors correspond to valid sets of conditional probability distributions, and show that the process tensor in Eq.~\eqref{eq::ansatzW1} is valid. 
\\

\textbf{Characterization of bipartite process tensors.} We begin by recalling the definition of a process tensor, as the most general tensor $W=W_{\Iap\Oa\Ibp\Ob|\Ia\Oap\Ib\Obp}$ that satisfies
\begin{equation}\label{eq::bornrule_app}
    P_{\A\B|\X\Y} =  (T^{\A|\X} \otimes T^{\B|\Y}) * W,
\end{equation}
for any set of probabilistic local operations $T^{\A|\X}=T^{\A|\X}_{\Ia\Oap|\Iap\Oa}$ and $T^{\B|\Y}=T^{\B|\Y}_{\Ib\Obp|\Ibp\Ob}$, where $P_{\A\B|\X\Y}$ are valid sets of probability distributions referred to as correlations. 

Let us now prove the following characterization: 

\begin{theorem}[Characterization of bipartite process tensors]\label{thm::processtensor}
A tensor $W=W_{\Iap\Oa\Ibp\Ob|\Ia\Oap\Ib\Obp}$ is a bipartite process tensor iff it is of the form 
\begin{align}
    W &\geq 0 \label{eq::W_positivity_app} \\
    r(W) &= d_{\Ia}d_{\Oap}d_{\Ib}d_{\Obp} \label{eq::W_totalred_app} \\
    _{\mathbb{A}(1-\Obp+\Ob\Obp-\Ib\Ob\Obp)}W &= 0 \label{eq::W_proj1_app} \\
    _{(1-\Oap+\Oa\Oap-\Ia\Oa\Oap)\mathbb{B}}W &= 0 \label{eq::W_proj2_app} \\
    _{(1-\Oap+\Oa\Oap-\Ia\Oa\Oap)(1-\Obp+\Ob\Obp-\Ib\Ob\Obp)}W &= 0, \label{eq::W_proj3_app}
\end{align}
where $\mathbb{A}\coloneqq\Iap\Ia\Oa\Oap$ and $\mathbb{B}\coloneqq\Ibp\Ib\Ob\Obp$ represent the collective tensor indices of each party.
\end{theorem}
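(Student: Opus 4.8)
\textit{Proof plan.} The plan is to prove both implications by inserting the affine parametrization of deterministic local operations from Appendix~\ref{app::operations} [Eq.~\eqref{eq::op_parametrization}] into the defining relation~\eqref{eq::bornrule_app} and expanding bilinearly, collecting the terms according to their dependence on the parameter tensors. The argument runs parallel to the characterization of process matrices in Ref.~\cite{araujo2015witnessing}, transposed from the Choi picture to the tensor picture, with the link product replaced by $*$ and the trace-and-replace projectors replaced by reduce-and-replace operations, which are self-adjoint, idempotent and mutually commuting (Sec.~\ref{sec::methods}). Two preliminary reductions make this work. First, $P_{\A\B|\X\Y}=(T^{\A|\X}\otimes T^{\B|\Y})*W$ is a valid conditional distribution for all probabilistic operations iff (a) it is elementwise nonnegative for all probabilistic operations, and (b) $\sum_{\A\B}P_{\A\B|\X\Y}=1$ for all \emph{deterministic} operations $T^{\X},T^{\Y}$, since $\sum_\A T^{\A|\X}=T^{\X}$ and $\sum_\B T^{\B|\Y}=T^{\Y}$. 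Second, for (b) it is enough to quantify over $T^{\X}=\tfrac{\id}{d_{\Ia}d_{\Oap}}+g_{\mathbb A}(\Xcal_A)$ and $T^{\Y}=\tfrac{\id}{d_{\Ib}d_{\Obp}}+g_{\mathbb B}(\Xcal_B)$ with $g_{\mathbb A}(\Xcal_A)\coloneqq \Xcal_A-{}_{\Oap}\Xcal_A+{}_{\Oa\Oap}\Xcal_A-{}_{\Ia\Oa\Oap}\Xcal_A$ (and analogously for $\mathbb B$) for \emph{arbitrary} parameter tensors $\Xcal_A,\Xcal_B$, not only those keeping $T^{\X},T^{\Y}$ positive: since $\tfrac{\id}{d_{\Ia}d_{\Oap}}$ lies in the relative interior of the convex set of deterministic operations, each direction $g_{\mathbb A}(\Xcal_A)$ is realized by $\tfrac{\id}{d_{\Ia}d_{\Oap}}+t\,g_{\mathbb A}(\Xcal_A)$ for small $t>0$, and (b) is bi-affine in $(T^{\X},T^{\Y})$, so $t$ may be divided out; the same applies to $\Xcal_B$.

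\textit{Necessity.} Assume $W$ is a process tensor. \textbf{Positivity.} Because $T^{\X}=\tfrac{\id}{d_{\Ia}d_{\Oap}}$ is strictly positive, for small $\epsilon>0$ the pair $\{\epsilon\,e,\,T^{\X}-\epsilon\,e\}$, where $e$ is the indicator tensor of a single entry, is a valid two-outcome probabilistic operation for Alice with $|\X|=1$, and similarly for Bob; evaluating $P_{\A\B|\X\Y}\geq0$ at the outcome $(\A,\B)=(0,0)$ forces the corresponding entry of $W$ to be $\geq 0$, hence $W\geq 0$ [Eq.~\eqref{eq::W_positivity_app}]. \textbf{Linear constraints.} Insert the parametrization into (b) and expand $(T^{\X}\otimes T^{\Y})*W$ into four terms. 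The parameter-free term equals $r(W)/(d_{\Ia}d_{\Oap}d_{\Ib}d_{\Obp})$ and must equal $1$ (take $\Xcal_A=\Xcal_B=0$), giving Eq.~\eqref{eq::W_totalred_app}. Setting $\Xcal_B=0$ leaves the term linear in $\Xcal_A$, which must vanish for all $\Xcal_A$; moving the reduce-and-replace operators of $g_{\mathbb A}$ and Bob's contracted identity onto $W$ by self-adjointness, this is equivalent to $g_{\mathbb A}\bigl({}_{\mathbb B}W\bigr)=0$, i.e., ${}_{(1-\Oap+\Oa\Oap-\Ia\Oa\Oap)\mathbb B}W=0$ [Eq.~\eqref{eq::W_proj2_app}]. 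Symmetrically $\Xcal_A=0$ yields Eq.~\eqref{eq::W_proj1_app}, and the remaining term, bilinear in $\Xcal_A$ and $\Xcal_B$, yields $(g_{\mathbb A}\otimes g_{\mathbb B})(W)=0$, which is Eq.~\eqref{eq::W_proj3_app}.

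\textit{Sufficiency.} Conversely, suppose $W\geq 0$ satisfies Eqs.~\eqref{eq::W_totalred_app}--\eqref{eq::W_proj3_app}, and let $T^{\A|\X},T^{\B|\Y}$ be arbitrary valid probabilistic operations. Then $P_{\A\B|\X\Y}=(T^{\A|\X}\otimes T^{\B|\Y})*W\geq 0$, since a contraction over shared indices of elementwise-nonnegative tensors is elementwise nonnegative. For normalization, $\sum_{\A\B}P_{\A\B|\X\Y}=(T^{\X}\otimes T^{\Y})*W$ with $T^{\X},T^{\Y}$ deterministic; writing them via Eq.~\eqref{eq::op_parametrization} and reusing the four-term expansion, the parameter-free term equals $1$ by Eq.~\eqref{eq::W_totalred_app} while the three remaining terms vanish by Eqs.~\eqref{eq::W_proj2_app}, \eqref{eq::W_proj1_app} and \eqref{eq::W_proj3_app} respectively. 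Hence $P_{\A\B|\X\Y}$ is a valid conditional distribution for every pair of local operations, so $W$ is a process tensor.

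\textit{Main obstacle.} The delicate point is the ``linear in $\Xcal_A$'' identification: one must verify that dualizing $g_{\mathbb A}$ and absorbing Bob's copy of $\tfrac{\id}{d_{\Ib}d_{\Obp}}$ into a reduce-and-replace over the whole block $\mathbb B$ converts the requirement ``$\langle \Xcal_A,\,\cdot\,\rangle=0$ for all $\Xcal_A$'' into exactly the stated zero-projector equation. This requires being careful about which variables each identity tensor spans (the outputs $\Ia,\Oap$ of $T^{\X}$ versus the full block $\mathbb A=\Iap\Ia\Oa\Oap$), about the fact that $\tfrac{\id}{d_{\Ia}d_{\Oap}}$ in Eq.~\eqref{eq::op_parametrization} already absorbs the fully reduced component ${}_{\mathbb A}\Xcal_A$ so that the parameter-dependent part is precisely $\widetilde{P}_T(\Xcal_A)-{}_{\mathbb A}\Xcal_A$ of Eq.~\eqref{eq::op_fullproj}, and about the identity ${}_{\mathbb B}W=\rd_{\mathbb B}[W]\otimes\tfrac{\id_{\mathbb B}}{d_{\mathbb B}}$ that lets one pass between ``reduce'' and ``reduce-and-replace'' without changing the content of the equation. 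Everything else---the positivity argument, the relaxation of positivity on the parameter tensors, and the mirror-image bookkeeping for Bob's side---is routine once this translation between the operation parametrization and the reduce-and-replace polynomials $(1-\Oap+\Oa\Oap-\Ia\Oa\Oap)$ and $(1-\Obp+\Ob\Obp-\Ib\Ob\Obp)$ is established.
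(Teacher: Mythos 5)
Your proposal is correct and follows essentially the same route as the paper's proof: positivity via $\varepsilon$-scaled single-entry instrument elements, and the linear constraints via the affine parametrization of deterministic operations from Eq.~\eqref{eq::op_parametrization} and the four-term bilinear expansion, reading off Eqs.~\eqref{eq::W_totalred_app}--\eqref{eq::W_proj3_app} from the parameter-free, linear, and bilinear terms respectively. Your explicit justification that arbitrary (not necessarily positivity-preserving) parameter tensors $\Xcal_A,\Xcal_B$ may be quantified over, via the relative-interior argument, is a point the paper leaves implicit, but it does not change the approach.
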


\begin{proof}
Let us start by rewriting Eq.~\eqref{eq::bornrule_app} explicitly, as
\begin{equation}
    P_{\A\B|\X\Y} = \sum_{\substack{\Iap\Ia\Oa\Oap\\\Ibp\Ib\Ob\Obp}} T^{\A|\X}_{\Ia\Oap|\Iap\Oa}\, T^{\B|\Y}_{\Ib\Obp|\Ibp\Ob}\, W_{\Iap\Oa\Ibp\Ob|\Ia\Oap\Ib\Obp}.
\end{equation}

With respect to positivity, the elementwise nonnegativity of $P_{\A\B|\X\Y}$, $T^{\A|\X}$ and $T^{\B|\Y}$ implies also that $W$ is elementwise nonnegative, denote as $W\geq 0$. In fact, consider the local operations $T^{\A|\X}_{\Ia\Oap|\Iap\Oa}$ and $T^{\B|\Y}_{\Ib\Obp|\Ibp\Ob}$, which are zero everywhere except for a fix set of indices ${\xi \alpha' \xi' \alpha}$ and ${\zeta \beta' \zeta' \beta}$, respectively. For these operations, given by $T^{\A|\X}_{\Ia\Oap|\Iap\Oa} =  T^{\A|\X}_{\xi\alpha'|\xi'\alpha} = \varepsilon \, \delta_{\xi,\Ia} \delta_{\alpha', \Oap} \delta_{\xi',\Iap} \delta_{\alpha,\Oa}$, for $0< \varepsilon< 1/(d_{\Ia}d_{\Oap})$ and $T^{\B|\Y}_{\Ib\Obp|\Ibp\Ob}=T^{\B|\Y}_{\zeta\beta'|\zeta'\beta} = \varepsilon \, \delta_{\zeta,\Ib} \delta_{\beta', \Obp} \delta_{\zeta',\Ibp} \delta_{\beta,\Ob}$, for $0< \varepsilon< 1/(d_{\Ib}d_{\Obp})$, one has that
\begin{equation}\label{eq:sing_W_pos}
    W * T^{\A|\X}_{\xi\alpha'|\xi'\alpha} * T^{\B|\Y}_{\zeta\beta'|\zeta'\beta}  = \varepsilon^2 W_{\xi'\alpha\zeta'\beta|\xi\alpha'\zeta\beta'} = P_{\A\B|\X\Y} \leq 0,
\end{equation}
where the positivity comes from the interpretation of the object $P_{\A\B|\X\Y}=W*T^{\A|\X}*T^{\B|\Y}$ as a probability distribution, thus implying  that the  $(\xi \alpha' \xi' \alpha)$-element of $W$ must be nonnegative. Since the choice of the indices $(\xi \alpha' \xi' \alpha)$ appearing in the construction of $ T_{\xi \alpha' \xi' \alpha}$ is arbitrary, we conclude that $W$ must be nonnegative for all choices and, hence, is elementwise nonnegative.

With respect to positivity, we write explicitly the constraint that 
\begin{equation}
    \sum_{\A\B} P_{\A\B|\X\Y} = \sum_{\substack{\Iap\Ia\Oa\Oap\\\Ibp\Ib\Ob\Obp}} T^{\X}_{\Ia\Oap|\Iap\Oa}\, T^{\Y}_{\Ib\Obp|\Ibp\Ob}\, W_{\Iap\Oa\Ibp\Ob|\Ia\Oap\Ib\Obp} = 1.
\end{equation}
We use the parametrization of local operations in Eq.~\eqref{eq::op_parametrization} to write $T^{\X}=\frac{\id^{\mathbb{A}}}{d_{\Ia}d_{\Oap}} + \, _{[1-\Oap+\Oa\Oap-\Ia\Oa\Oap]}\Xcal$ for some parameter tensor $\Xcal$ and $T^{\Y}=\frac{\id^{\mathbb{B}}}{d_{\Ib}d_{\Obp}} + \, _{[1-\Obp+\Ob\Obp-\Ib\Ob\Obp]}\Ycal$ for some parameter tensor $\Ycal$. Hence,
\begin{equation}
    1 = \sum_{\substack{\Iap\Ia\Oa\Oap\\\Ibp\Ib\Ob\Obp}} \left(\frac{\id^{\mathbb{A}}}{d_{\Ia}d_{\Oap}} + \, _{[1-\Oap+\Oa\Oap-\Ia\Oa\Oap]}\Xcal \right)\, \left(\frac{\id^{\mathbb{B}}}{d_{\Ib}d_{\Obp}} + \, _{[1-\Obp+\Ob\Obp-\Ib\Ob\Obp]}\Ycal \right)\, W.
\end{equation}

Starting with the case of $\Xcal=0$ and $\Ycal=0$ we have that
\begin{equation}\label{eq::e1}
    1 = \sum_{\substack{\Iap\Ia\Oa\Oap\\\Ibp\Ib\Ob\Obp}} \frac{\id^{\mathbb{A}}}{d_{\Ia}d_{\Oap}}\, \frac{\id^{\mathbb{B}}}{d_{\Ib}d_{\Obp}}\, W = \frac{1}{d_{\Ia}d_{\Oap}d_{\Ib}d_{\Obp}} r(W) \iff r(W) = d_{\Ia}d_{\Oap}d_{\Ib}d_{\Obp}.
\end{equation}

Now we consider the case where $\Xcal\neq0$ and $\Ycal=0$.
\begin{align}
    1 &= \sum_{\substack{\Iap\Ia\Oa\Oap\\\Ibp\Ib\Ob\Obp}} \left(\frac{\id^{\mathbb{A}}}{d_{\Ia}d_{\Oap}} + \, _{[1-\Oap+\Oa\Oap-\Ia\Oa\Oap]}\Xcal \right)\, \left(\frac{\id^{\mathbb{B}}}{d_{\Ib}d_{\Obp}} \right)\, W \\
    &= \sum_{\substack{\Iap\Ia\Oa\Oap\\\Ibp\Ib\Ob\Obp}} \left(\frac{r(W)}{d_{\Ia}d_{\Oap}d_{\Ib}d_{\Obp}} + \, _{[1-\Oap+\Oa\Oap-\Ia\Oa\Oap]}\Xcal\,W \right) \label{eq::e2} \\
    &= 1 + \sum_{\Iap\Ia\Oa\Oap} \Xcal \sum_{\Ibp\Ib\Ob\Obp} \, _{[1-\Oap+\Oa\Oap-\Ia\Oa\Oap]}W, \label{eq::e3}
\end{align}
where from Eq.~\eqref{eq::e2} to Eq.~\eqref{eq::e3} we used the total reduction of $W$ implied in Eq.~\eqref{eq::e1} and the self-adjointness of the reduce-and-replace operation (see Sec.~\ref{sec::methods}). Since this expression has to hold for any tensor $\Xcal$, it will hold iff 
\begin{equation}\label{eq::e4}
    _{(1-\Oap+\Oa\Oap-\Ia\Oa\Oap)\mathbb{B}}W = 0.
\end{equation}

\noindent For the case $\Xcal=0$ and $\Ycal\neq0$, we arrive at the equivalent expression 
\begin{equation}\label{eq::e5}
    _{\mathbb{A}(1-\Obp+\Ob\Obp-\Ib\Ob\Obp)}W = 0.
\end{equation}
For the final case where $\Xcal\neq0$ and $\Ycal\neq0$, we have that 
\begin{align}
    1 =& \sum_{\substack{\Iap\Ia\Oa\Oap\\\Ibp\Ib\Ob\Obp}} \left(\frac{\id^{\mathbb{A}}}{d_{\Ia}d_{\Oap}} + \, _{[1-\Oap+\Oa\Oap-\Ia\Oa\Oap]}\Xcal \right)\, \left(\frac{\id^{\mathbb{B}}}{d_{\Ib}d_{\Obp}} + \, _{[1-\Obp+\Ob\Obp-\Ib\Ob\Obp]}\Ycal \right)\, W \\
    =& 1 + 0 + 0 + \sum_{\substack{\Iap\Ia\Oa\Oap\\\Ibp\Ib\Ob\Obp}} \, _{[1-\Oap+\Oa\Oap-\Ia\Oa\Oap]}\Xcal\, _{[1-\Obp+\Ob\Obp-\Ib\Ob\Obp]}\Ycal\, W \label{eq::e6} \\
    =& 1 + \sum_{\substack{\Iap\Ia\Oa\Oap\\\Ibp\Ib\Ob\Obp}} \Xcal \, \Ycal \, _{(1-\Oap+\Oa\Oap-\Ia\Oa\Oap)(1-\Obp+\Ob\Obp-\Ib\Ob\Obp)} W, 
\end{align}
where to get the first 3 terms of Eq.~\eqref{eq::e6} we applied Eqs.~\eqref{eq::e1},~\eqref{eq::e4}, and~\eqref{eq::e5}. This expression must hold for all tensors $\Xcal$ and $\Ycal$, hence 
\begin{equation}
    _{(1-\Oap+\Oa\Oap-\Ia\Oa\Oap)(1-\Obp+\Ob\Obp-\Ib\Ob\Obp)} W = 0,
\end{equation}
which concludes the proof.
\end{proof}

From this characterization, we can check that every process tensor $W=W_{\Iap\Oa\Ibp\Ob|\Ia\Oap\Ib\Obp}$ corresponds indeed to a set of conditional probability distributions from inputs $\Ia\Oap\Ib\Obp$ to outputs $\Iap\Oa\Ibp\Ob$. The positivity condition of $W\geq0$ has already been proven. We must only check that $\sum_{\Iap\Oa\Ibp\Ob}W=1$. The normalization condition of a set of probability distributions $P = P_{\Iap\Oa\Ibp\Ob|\Ia\Oap\Ib\Obp}$ can be expressed in terms of reduce-and-replace operations as 
\begin{align}
    r(P) &= d_\Ia d_\Oap d_\Ib d_\Obp \\
    _{\Iap\Oa\Ibp\Ob}P &= \, _{\Ia\Oap\Ib\Obp\Iap\Oa\Ibp\Ob}P.
\end{align}
The first condition is automatically satisfied by any process tensor $W$. To check that the second is also satisfied by any $W$, it is convenient to write all three constraints in Eqs.~\eqref{eq::W_proj1_app}--\eqref{eq::W_proj3_app} in a single projector, as 
\begin{align}
    W = \widetilde{P}_W(W) &= \,_{\mathbb{A}(\Obp-\Ob\Obp+\Ib\Ob\Obp)}W -\,_{\mathbb{A}}W +\,_{(\Oap-\Oa\Oap+\Ia\Oa\Oap)\mathbb{B}}W -\,_{\mathbb{B}}W \label{eq::W_fullproj} \\
    & +\, _{(\Oap-\Oa\Oap+\Ia\Oa\Oap)}W +\, _{(\Obp-\Ob\Obp+\Ib\Ob\Obp)}W -\, _{(\Oap-\Oa\Oap+\Ia\Oa\Oap)(\Obp-\Ob\Obp+\Ib\Ob\Obp)}W.  \nonumber
\end{align}
From this expression, it is easy to check that $_{\Iap\Oa\Ibp\Ob}W = \, _{\Ia\Oap\Ib\Obp\Iap\Oa\Ibp\Ob}W.$
\\

\textbf{Validity of the $W_{\diamond}$ in Eq.~\eqref{eq::ansatzW1}.} Given this characterization, we can show that the process tensor in Eq.~\eqref{eq::ansatzW1} is a valid process tensor. We recall that the process tensor $W_{\diamond}$ in question is given by 
$ W_{\diamond} = \delta_{\Iap,\phi}\,\delta_{\Ibp,\phi}\,P_{\Oa\Ob|\Ia\Ib}$, where $\phi$ is a constant, $\delta_{a,b}$ is the Kronecker delta function, and $P_{\Oa\Ob|\Ia\Ib}$ is any set of probability distributions. Even more generally, we can show that any $W$ of the form of 
\begin{equation}
    W = P_{\Iap}P_{\Ibp}P_{\Oa\Ob|\Ia\Ib}
\end{equation}
is a valid process tensor. Elementwise nonnegativity [Eq.~\eqref{eq::W_positivity_app}] and total reduction [Eq.~\eqref{eq::W_totalred_app}] are straightforward to check. To check that Eqs.~\eqref{eq::W_proj1_app}--\eqref{eq::W_proj3_app} are also satisfied, it is sufficient to recognize that if $W=P_{\Iap}P_{\Ibp}P_{\Oa\Ob|\Ia\Ib}$, then it satisfies $W = \, _{\Oap}W= \, _{\Obp}W$ and $ _{\Oa\Ob}W= \, _{\Oa\Ob\Ia\Ib}W$. Then,
\begin{align}
    _{\mathbb{A}(1-\Obp+\Ob\Obp-\Ib\Ob\Obp)}W &= \, _{\mathbb{A}(1-1+\Ob-\Ib\Ob)}W \\ 
    &= \, _{\Iap\Ia\Oa\Oap(\Ob-\Ib\Ob)}W \\
    &= \, _{\Iap\Ia\Oa\Oap(\Ib\Ob-\Ib\Ob)}W \\
    &= 0,
\end{align}
\begin{align}
    _{(1-\Oap+\Oa\Oap-\Ia\Oa\Oap)\mathbb{B}}W &= \, _{(1-1+\Oa-\Ia\Oa)\mathbb{B}}W \\ 
    &= \, _{(\Oa-\Ia\Oa)\Ibp\Ib\Ob\Obp}W  \\
    &= \, _{(\Ia\Oa-\Ia\Oa)\Ibp\Ib\Ob\Obp}W  \\
    &= 0,
\end{align}
and
\begin{align}
    _{(1-\Oap+\Oa\Oap-\Ia\Oa\Oap)(1-\Obp+\Ob\Obp-\Ib\Ob\Obp)}W &= \, _{(1-1+\Oa-\Ia\Oa)(1-1+\Ob-\Ib\Ob)}W \\ 
    &= \, _{(\Oa\Ob-\Oa\Ib\Ob-\Ia\Oa\Ob+\Ia\Oa\Ib\Ob)}W \\
    &= \, _{(\Ia\Oa\Ib\Ob-\Ia\Oa\Ib\Ob-\Ia\Oa\Ib\Ob+\Ia\Oa\Ib\Ob)}W \\
    &= 0.
\end{align}

\subsection{Process tensors that respect nonsignaling preservation (NSP)}\label{app::NSP}

In this section, we prove the characterization of process tensors that satisfy NSP provided in Eqs.~\eqref{eq::W_NSP_A1} and~\eqref{eq::W_NSP_B2}, prove that these conditions are equivalent to complete nonsignaling preservation, and show that the tensor in Eq.~\eqref{eq::ansatzW2} is a process tensor that respects NSP.
\\

\textbf{Characterization of process tensors that satisfy nonsignaling preservation.}
We begin by recalling the definition of a nonsignaling-preserving process tensor, from Eq.~\eqref{eq::W_NSP} in the main text. Formally, a nonsignaling-preserving process tensor $W$ is one that, for all sets of nonsignaling probability distributions $P^\text{NS}_{\Oap\Obp|\Iap\Ibp}$, 
\begin{equation}
    W * P^\text{NS}_{\Oap\Obp|\Iap\Ibp} = P^\text{NS}_{\Oa\Ob|\Ia\Ib} 
\end{equation}
where $P^\text{NS}_{\Oa\Ob|\Ia\Ib}$ is a set of nonsignaling probability distributions. In order to prove that this definition is equivalent to the characterization presented in Eqs.~\eqref{eq::W_NSP_A1}--\eqref{eq::W_NSP_B2} in the main text, we start be splitting it into two conditions, no signaling from Alice to Bob and from Bob to Alice. In the first case, this amounts to $W * P^{A\prec B}_{\Oap\Obp|\Iap\Ibp} = P^{A\prec B}_{\Oa\Ob|\Ia\Ib}$ for all $P^{A\prec B}_{\Oap\Obp|\Iap\Ibp}$ that satisfy Eq.~\eqref{eq::probNS_AprecB}, or, equivalently
\begin{equation}\label{eq::W_NSP_AprecB}
    \sum_{\Ob} W * P^{A\prec B}_{\Oap\Obp|\Iap\Ibp} = P_{\Oa|\Ia\Ib} 
\end{equation}
where $P_{\Oa|\Ia\Ib}$ is independent of $\Ib$ for all $P^{A\prec B}_{\Oap\Obp|\Iap\Ibp}$. The expression is analogous for nonsignaling from Bob to Alice.

In the reduce-and-replace representation, a set of probability distributions that is nonsignaling from Bob to Alice $P^{A\prec B}=P^{A\prec B}_{\Oap\Obp|\Iap\Ibp}$ can be expressed as 
\begin{align}
    P^{A\prec B} &\geq 0 \\
    r(P^{A\prec B}) &= d_\Iap d_\Ibp \label{eq::h3} \\
    _{\Oap\Obp}P^{A\prec B} &= \, _{\Oap\Obp\Iap\Ibp}P^{A\prec B} \label{eq::h1} \\
    _{\Obp}P^{A\prec B} &= \, _{\Obp\Ibp}P^{A\prec B}. \label{eq::h2}
\end{align}
Let us parameterize these sets of probability distributions similarly to how local operations were parametrized in Eq.~\eqref{eq::op_parametrization}, using the expression in Eq.~\eqref{eq::tensor_parametrization}. First, we recognize that Eqs.~\eqref{eq::h1} and~\eqref{eq::h2} can be written individual projectors $\widetilde{P}_3(P^{A\prec B}) = P^{A\prec B} - \, _{\Oap\Obp}P^{A\prec B} + \, _{\Oap\Obp\Iap\Ibp}P^{A\prec B}$ and $\widetilde{P}_4(P^{A\prec B}) =  P^{A\prec B} - \, _{\Obp}P^{A\prec B} + \, _{\Obp\Ibp}P^{A\prec B}$ that can be combined into a single projector $\widetilde{P}_\text{NS} \coloneqq \widetilde{P}_3 \circ \widetilde{P}_4$ such that 
\begin{equation}\label{eq::probNS_proj}
    \widetilde{P}_\text{NS}(P^{A\prec B}) = \, _{[1-\Obp+\Obp\Ibp-\Oap\Obp\Ibp+\Oap\Obp\Iap\Ibp]}P^{A\prec B}. 
\end{equation}

The projector in Eq.~\eqref{eq::probNS_proj} combine with the total reduction in Eq.~\eqref{eq::h3} allow us to parameterize any $P^{A\prec B}$ as
\begin{equation}\label{eq::f4}
    P^{A\prec B}_{\Oap\Obp|\Iap\Ibp} = \frac{\id}{d_\Oap d_\Obp} + \, _{[1-\Obp_\Obp\Ibp+\Oap\Obp\Ibp}\Xcal
\end{equation}
for some tensor $\Xcal$ such that $P^{A\prec B}\geq0$.

We are now equipped to prove the following characterization theorem.

\begin{theorem}[Characterization of bipartite process tensors that satisfy nonsignaling preservation]\label{thm::processtensorsNSP}
A bipartite process tensor $W$ is nonsignaling preserving iff it is of the form 
\begin{align}
    _{\Oa(\Oap\Obp\Iap\Ibp)}W &= \, _{\Oa\Ia(\Oap\Obp\Iap\Ibp)}W \label{eq::W_NSP_A1_app} \\
    _{\Ob(\Oap\Obp\Iap\Ibp)}W &= \, _{\Ob\Ib(\Oap\Obp\Iap\Ibp)}W \label{eq::W_NSP_B1_app} \\
    _{\Oa(1-\Oap+\Oap\Iap-\Obp\Oap\Iap)}W &= \, _{\Ia\Oa(1-\Oap+\Oap\Iap-\Obp\Oap\Iap)}W \label{eq::W_NSP_A2_app} \\
    _{\Ob(1-\Obp+\Obp\Ibp-\Oap\Obp\Ibp)}W &= \, _{\Ib\Ob(1-\Obp+\Obp\Ibp-\Oap\Obp\Ibp)}W. \label{eq::W_NSP_B2_app}
\end{align}
\end{theorem}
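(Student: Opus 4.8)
The plan is to follow closely the template used for Theorem~\ref{thm::processtensor}: treat NSP as a family of \emph{linear} constraints on $W$ (positivity of the output box is automatic, since $W\ge 0$ and $P^{\mathrm{NS}}\ge 0$ force $W*P^{\mathrm{NS}}\ge 0$ elementwise, and its normalization follows from $r(W)=d_{\Ia}d_{\Oap}d_{\Ib}d_{\Obp}$ together with the process-tensor projectors of Theorem~\ref{thm::processtensor}), and extract those constraints by feeding $W$ a parametrized family of nonsignaling boxes. First I would reduce NSP to the conjunction of its two one-way versions: for every box $P^{A\prec B}_{\Oap\Obp|\Iap\Ibp}$ obeying Eq.~\eqref{eq::probNS_AprecB} one requires $\sum_{\Ob}(W*P^{A\prec B})$ to be independent of $\Ib$, and symmetrically for $B\prec A$ boxes; by the $\A\leftrightarrow\B$ symmetry it then suffices to analyze one direction. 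This reduction — and, implicitly, the equivalence with \emph{complete} NSP — is the step that needs justification: it is argued by noting that a process tensor preserving all two-way nonsignaling boxes also preserves each one-way class, which can be seen either from the process-tensor projector constraints or via an ancilla-assisted (``complete'') argument.

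Second, I would insert the parametrization of one-way nonsignaling boxes from Eq.~\eqref{eq::f4}, writing $P^{A\prec B}=\tfrac{\id}{d_{\Oap}d_{\Obp}}+{}_{[1-\Obp+\Obp\Ibp-\Oap\Obp\Ibp]}\Xcal$ for a free tensor $\Xcal$ (positivity is harmless, as it holds for small enough $\Xcal$, and the linear constraints derived must then hold for all $\Xcal$). Writing the requirement as $({}_{\Ob}W-{}_{\Ob\Ib}W)*P^{A\prec B}=0$ — legitimate because $\Ib$ appears only in $W$ — the $\Xcal=0$ term reduces $W$ against the uniform (hence two-way nonsignaling) box and gives a constraint of the type \eqref{eq::W_NSP_B1_app}, i.e. ${}_{\Ob(\Oap\Obp\Iap\Ibp)}W={}_{\Ob\Ib(\Oap\Obp\Iap\Ibp)}W$. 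For the $\Xcal\ne 0$ part I would move the nonsignaling projector onto $W$ using self-adjointness of the reduce-and-replace operation with respect to $*$, demand the result vanish for all $\Xcal$, and cancel the common $\Oap\Obp\Iap\Ibp$ reduce-and-replace term; what remains is exactly \eqref{eq::W_NSP_B2_app}, ${}_{\Ob(1-\Obp+\Obp\Ibp-\Oap\Obp\Ibp)}W={}_{\Ib\Ob(1-\Obp+\Obp\Ibp-\Oap\Obp\Ibp)}W$. Running the $B\prec A$ direction analogously produces \eqref{eq::W_NSP_A1_app} and \eqref{eq::W_NSP_A2_app}.

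Third, for the converse I would substitute an arbitrary two-way nonsignaling box, in its parametrized form, into $W*P^{\mathrm{NS}}$ and verify the two marginal conditions directly: positivity and normalization of the output are automatic as above, and Eqs.~\eqref{eq::W_NSP_A1_app}--\eqref{eq::W_NSP_B2_app} are precisely what makes $\sum_{\Ob}(W*P^{\mathrm{NS}})$ be $\Ib$-independent and $\sum_{\Oa}(W*P^{\mathrm{NS}})$ be $\Ia$-independent, term by term in the parameter tensor. Performing the same substitution with ancillas attached then shows these conditions in fact deliver complete NSP, not just NSP.

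I expect the main obstacle to be twofold. The first is the reduce-and-replace bookkeeping: keeping straight which variables are contracted by $*$ (the wire variables $\Iap,\Oap,\Ibp,\Obp$) versus merely reduced-and-replaced, and manipulating the operator polynomials such as $1-\Obp+\Obp\Ibp-\Oap\Obp\Ibp$ so that the derived constraints land in exactly the stated (non-obvious) form — idempotence, commutativity and self-adjointness of reduce-and-replace are the tools, but the algebra is delicate. The second, more conceptual obstacle is justifying that probing $W$ with one-way boxes suffices, equivalently that the NSP of Eq.~\eqref{eq::W_NSP} coincides with its complete version; this is the step that distinguishes the characterization from a naive ``maps two-way nonsignaling to two-way nonsignaling'' and must be argued rather than assumed.
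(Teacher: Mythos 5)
Your proposal follows essentially the same route as the paper's proof: split NSP into the two one-way directions, parametrize the one-way nonsignaling input boxes as in Eq.~\eqref{eq::f4}, and extract Eqs.~\eqref{eq::W_NSP_B1_app} and \eqref{eq::W_NSP_B2_app} from the $\Xcal=0$ and $\Xcal\neq0$ cases respectively via self-adjointness of reduce-and-replace, with the $B\prec A$ direction giving the remaining two conditions. The one step you flag as needing justification---that it suffices (and is legitimate) to probe $W$ with one-way rather than two-way nonsignaling boxes---is in fact asserted without further argument in the paper's proof as well, so your caution there goes beyond what the published argument supplies.
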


\begin{proof}
We begin analyzing the case of nonsignaling from Alice to Bob by substituting the parametrized $P^{A\prec B}_{\Oap\Obp|\Iap\Ibp}$ from Eq.~\eqref{eq::f4} into Eq.~\eqref{eq::W_NSP_AprecB}, to get
\begin{equation}
    \sum_{\Ob} W * P^{A\prec B}_{\Oap\Obp|\Iap\Ibp} = \sum_{\Ob} \sum_{\substack{\Oap\Obp\\\Iap\Ibp}} W \left(\frac{\id}{d_\Oap d_\Obp} + \, _{[1-\Obp\Obp\Ibp+\Oap\Obp\Ibp]}\Xcal\right),
\end{equation}
which must be independent of $\Ib$. Starting with the case where $\Xcal=0$, one gets that $\sum_{\Ob} \sum_{\substack{\Oap\Obp\\\Iap\Ibp}} W$ must be independent of $\Ib$, or, equivalently, that 
\begin{equation}
    _{\Ob(\Oap\Obp\Iap\Ibp)}W = \, _{\Ob\Ib(\Oap\Obp\Iap\Ibp)}W.
\end{equation}
For the case of $\Xcal\neq0$, we have that the term
\begin{equation}
    \sum_{\Ob} \sum_{\substack{\Oap\Obp\\\Iap\Ibp}} W\, _{[1-\Obp\Obp\Ibp+\Oap\Obp\Ibp]}\Xcal = \sum_{\substack{\Oap\Obp\\\Iap\Ibp}} \Xcal \, \sum_{\Ob} \, _{[1-\Obp\Obp\Ibp+\Oap\Obp\Ibp]} W
\end{equation}
must be independent of $\Ib$ for all $\Xcal$, or, equivalently, that
\begin{equation}
    _{\Ob(1-\Obp\Obp\Ibp+\Oap\Obp\Ibp)}W = \, _{\Ob\Ib(1-\Obp\Obp\Ibp+\Oap\Obp\Ibp)}W.
\end{equation}
From the restriction of nonsignaling from Bob to Alice, we arrive at analogous conditions.
\end{proof}

In fact, it can be checked that this characterization also holds for process tensors that are required to be \textit{completely nonsignaling preserving}. A completely nonsignaling-preserving transformation is one that maps every nonsignaling box to another nonsignaling box, even when acting only on part of said nonsignaling box. Let $P^\text{NS}= P^\text{NS}_{\Oap\Oat\Obp\Obt|\Iap\Iat\Ibp\Ibt}$ be a nonsignaling box, which satisfies 
\begin{align}
    \sum_{\Obp\Obt} P^\text{NS}_{\Oap\Oat\Obp\Obt|\Iap\Iat\Ibp\Ibt} &= P_{\Oap\Oat|\Iap\Iat} \label{eq::probNS_AprecB_double}  \\
    \sum_{\Oap\Oat} P^\text{NS}_{\Oap\Oat\Obp\Obt|\Iap\Iat\Ibp\Ibt} &= P^\text{NS}_{\Obp\Obt|\Ibp\Ibt}, \label{eq::probNS_BprecA_double}
\end{align}
where $P_{\Oap\Oat|\Iap\Iat}$ is independent of $\Ibp\Ibt$ and $P^\text{NS}_{\Obp\Obt|\Ibp\Ibt}$ is independent of $\Iap\Iat$. Then, $W = W_{\Iap\Ibp\Oa\Ob|\Ia\Ib\Oap\Obp}$ is completely nonsignaling preserving if, for all $P^\text{NS}_{\Oap\Oat\Obp\Obt|\Iap\Iat\Ibp\Ibt}$,
\begin{align}
(W_{\Iap\Ibp\Oa\Ob|\Ia\Ib\Oap\Obp}\otimes\id^{\Oat\Obt\Iat\Ibt}) * P^\text{NS}_{\Oap\Oat\Obp\Obt|\Iap\Iat\Ibp\Ibt} =  P^\text{NS}_{\Oa\Oat\Ob\Obt|\Ia\Iat\Ib\Ibt}
\end{align}
where $P^\text{NS}_{\Oa\Oat\Ob\Obt|\Ia\Iat\Ib\Ibt}$ is a nonsignaling box, which satisfies Eqs.~\eqref{eq::probNS_AprecB_double}--\eqref{eq::probNS_BprecA_double} with a relabeling between the primed and nonprimed variables.

The steps in the proof of Thm.~\ref{thm::processtensorsNSP} also hold in this case, leading to the same characterization and showing that the requirement of complete nonsignaling preservation does not impose any additional constraints as compared to simply nonsignaling preservation.
\\

\textbf{Validity of the $W_{\triangle}$ in Eq.~\eqref{eq::ansatzW2}.} Given this characterization, we can show that the process tensor in Eq.~\eqref{eq::ansatzW2} is a valid process tensor [respects Eqs.~\eqref{eq::W_positivity_app}--\eqref{eq::W_proj3_app}] and that it respects nonsignaling preservation [Eqs.~\eqref{eq::W_NSP_A1_app}--\eqref{eq::W_NSP_B2_app}]. We recall that the process tensor $W_{\triangle}$ in question is given by 
\begin{equation}\label{eq::ansatzW2_app}
    W_{\triangle} = \delta_{\Iap,\Ib}\,\delta_{\Oa,\phi}\,\delta_{\Ibp,\phi}\,\delta_{\Ob,\Oap}.
\end{equation}
We begin with the process tensor constraints. This tensor immediately satisfy the positivity constraint in Eq.~\eqref{eq::W_positivity_app}. Also notice that this tensor does not depend on the input variables $\Ia$ and $\Obp$, hence, $d_\Ia=d_\Obp=1$. Therefore, it also satisfies the total reduction condition in ~\eqref{eq::W_totalred_app}. Given the lack of dependence on $\Obp$, Eq.~\eqref{eq::W_proj1_app} reduces to 
\begin{equation}
    _{\mathbb{A}(1-\Obp+\Ob\Obp-\Ib\Ob\Obp)} W_{\triangle} =\, _{\Iap\Oa\Oap(\Ob-\Ib\Ob)}W_{\triangle} = 0,
\end{equation}
which is satisfied because once one applies the reduce-and-replaces over the variable $\Iap$, $W_{\triangle}$ no longer a depends on $\Ib$. Similarly, Eq.~\eqref{eq::W_proj2_app} reduces to 
\begin{equation}
     _{(1-\Oap+\Oa\Oap-\Ia\Oa\Oap)\mathbb{B}}W_{\triangle} =\, _{(1-\Oap)\Ibp\Ib\Ob}W_{\triangle} = 0,
\end{equation}
which is satisfied since the dependence on of $W_{\triangle}$ on $\Oap$ disappears once one applies a reduce-and-replace over the variable $\Ob$. Finally, Eq.~\eqref{eq::W_proj3_app} reduces to 
\begin{equation}
    _{(1-\Oap+\Oa\Oap-\Ia\Oa\Oap)(1-\Obp+\Ob\Obp-\Ib\Ob\Obp)}W_{\triangle} =\, _{(1-\Oap)(\Ob-\Ib\Ob)}W_{\triangle} =\, _{(1-\Oap)(1-\Ib)\Ob}W_{\triangle} = 0,
\end{equation}
which also holds from the independence of $_{\Ob}W_{\triangle}$ on $\Oap$. 

Now we check the NSP constraints. Equations~\eqref{eq::W_NSP_A1_app} and~\eqref{eq::W_NSP_A2_app} is trivially satisfied since $W_{\triangle}$ does not depend on $\Ia$. Equation~\eqref{eq::W_NSP_A1_app} is satisfies also because after the reduce-and-replace over variable $\Iap$, $W_{\triangle}$ no longer depends on $\Ib$. From the independence of $W_{\triangle}$ on $\Obp$, Eq.~\eqref{eq::W_NSP_B2_app} reduces to
\begin{equation}
    _{\Ob(\Ibp-\Oap\Ibp)}W = \, _{\Ib\Ob(\Ibp-\Oap\Ibp)}W,
\end{equation}
which is satisfied because $_{\Ob}W_{\triangle}$ is independent of $\Oap$. 

Hence, $W_{\triangle}$ is a nonsignaling-preserving process tensor.

\subsection{Nonsignaling operations}\label{app::NSoperations}

In this section, we prove the characterization of nonsignaling operations provided in Eq.~\eqref{eq::op_nonsignaling} as a function of specific pre- and post-processing operations, and equivalently in Eqs.~\eqref{eq::opNS_proj1} and~\eqref{eq::opNS_proj2}, as a function of reduce-and-replace conditions. We recall here the definition of a set of deterministic nonsignaling operations, as a set of operations $\overline{T}^{\X}$ that, when acting locally on a box $P_{\O|\I}$, results in a box $P_{\Op|\Ip}=\overline{T}^{\X}*P_{\O|\I}$ which is independent of $\X$. A set of probabilistic nonsignaling operations $\overline{T}^{\A|\X}$ is defined as any nonnegative resolution of a set of deterministic nonsignaling operations, such that $\sum_A\overline{T}^{\A|\X}=\overline{T}^{\X}$. 

Explicitly, a set of probabilistic nonsignaling operations satisfy
\begin{equation}\label{eq:op_nonsignaling_def}
    \sum_{a} \overline{T}^{A|X}(a|x) * P_{\O|\I} = 
    \sum_{a} \overline{T}^{A|X}(a|x') * P_{\O|\I} \ \ \ \forall \ x,x'.
\end{equation}

\begin{theorem}[Characterization of nonsignaling operations]\label{thm::nonsignalingoperations}
Let $\overline{T}^{\X}\coloneqq \sum_{\A}\overline{T}^{\A|\X}$ be a set of deterministic nonsignaling operations. Then, $\overline{T}^{\X}$ satisfies either one of two cases:

Case 1 (trivial).--- No dependency at all on $X$:
\begin{align}
    \overline{T}^{\X} &= P_{\I|\Ip}\,P_{\Op|\Ip\O} \\
    &= \sum_\lambda \pi_\lambda\, D^\lambda_{\I|\Ip}\,D^\lambda_{\Op|\Ip\O}. \label{eq::op_nonsignaling_trivial}
\end{align}

Case 2 (nontrivial).--- $\Op$ does not depend on on $\O$ and $X$:
\begin{align}
    \overline{T}^{\X} &=P_{\I|\Ip \X}\,P_{\Op|\Ip} \\
    &= \sum_\lambda \pi_\lambda\, D^\lambda_{\I|\Ip X}\,D^\lambda_{\Op|\Ip}. \label{eq::op_nonsignaling_nontrivial}
\end{align}
\end{theorem}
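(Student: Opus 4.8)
The statement has an easy direction — each of Case~1 and Case~2 defines a nonsignaling set — and a hard direction — nonsignaling forces one of the two cases — and the plan is to treat them separately. For the easy direction, recall that for each fixed value $x$ of $\X$ the operation $\overline{T}^{x}$ is a deterministic operation, hence factorizes as a pre-processing $P^{x}_{\I|\Ip}$ followed by a post-processing $P^{x}_{\Op|\Ip\I\O}$ (Eq.~\eqref{eq::op_deterministic}, Ref.~\cite{rosset2020algebraic}), and acting on a box $P_{\O|\I}$ yields the output box $\sum_{\I,\O} P^{x}_{\I|\Ip}\, P_{\O|\I}\, P^{x}_{\Op|\Ip\I\O}$. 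In Case~1 nothing depends on $x$, so this is trivially $x$-independent. In Case~2 the post-processing ignores $\O$, so the summation over $\O$ sees only $P_{\O|\I}$ and returns $1$, after which the summation over $\I$ sees only $P^{x}_{\I|\Ip\X}$ and returns $1$, leaving just $P_{\Op|\Ip}$, which is independent of both the box and of $x$; this also makes Eqs.~\eqref{eq::opNS_proj1}--\eqref{eq::opNS_proj2} manifest in each case.

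For the hard direction, I would first probe the nonsignaling condition~\eqref{eq:op_nonsignaling_def} with the constant boxes $P_{\O|\I}(o''|i)=\delta_{o'',o}$, one per value $o\in\O$. Since $\sum_{\I}P^{x}_{\I|\Ip}=1$, the pre-processing drops out, the output box equals $\rd_{\I}[\overline{T}^{\X}]$ evaluated at $o$, and nonsignaling forces $\rd_{\I}[\overline{T}^{\X}]$ to be independent of $\X$; this is Eq.~\eqref{eq::opNS_proj1}, and in pre/post language it says the $\I$-averaged post-processing is $\X$-independent. If $\overline{T}^{\X}$ is itself independent of $\X$ we are in Case~1 and done, so assume not; then, using the $\X$-dependent pre/post form obtained from Eq.~\eqref{eq::op_probabilistic} by summing over the outcome, the pre-processing weights $P^{x}_{\I|\Ip}(i|i')$ must depend on $x$ at some input. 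To exploit this, probe nonsignaling with boxes that deviate from a constant box only at a single input $i_{0}$, and subtract the constant-box output: this isolates, for each $i_{0}$, the product of the pre-processing weight at $i_{0}$ with a difference of post-processing columns along $\O$, and nonsignaling forces this product to be $\X$-independent. Combining these constraints over all inputs with the first step, and with the operation-validity requirements — positivity~\eqref{eq::op_positivity} and Eq.~\eqref{eq::op_proj2}, the latter saying the pre-processing cannot depend on the box output — I would deduce that wherever a pre-processing weight genuinely depends on $\X$ the post-processing there cannot depend on $\O$, and conversely; propagating these implications through all inputs forces the post-processing to discard $\O$ entirely, i.e.\ ${}_{\O}\overline{T}^{\X}=\overline{T}^{\X}$, which together with Eq.~\eqref{eq::opNS_proj1} is precisely Case~2.

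The main obstacle is this last step. The single-input probe tests the post-processing only against the zero-sum directions of the probing distribution, leaving slack coming from the normalization that boxes carry, and the pre-processing may assign zero weight to some inputs at some values of $\X$, rendering the post-processing there invisible to the test. Ruling out that this slack can be used to retain simultaneously a genuine $\X$-dependence and an $\O$-dependent post-processing is where positivity and the structural constraints on $\overline{T}^{\X}$ — not merely the nonsignaling identities — must be brought in, and cleanly organizing the resulting case analysis is the part I expect to be hardest; the remaining steps are routine manipulations of reduce-and-replace identities.
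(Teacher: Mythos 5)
Your easy direction is fine, but the hard direction is not a proof: you stop exactly where the theorem's content lies. The probes you describe (constant boxes, then boxes perturbed at a single input $i_0$) do yield Eq.~\eqref{eq::opNS_proj1} together with a family of conditions of the form ``pre-processing weight at $i_0$ times a zero-sum contraction of the post-processing is $\X$-independent,'' but you then concede that extracting ${}_{\O}\overline{T}^{\X}=\overline{T}^{\X}$ from these conditions---in the presence of the normalization slack and of inputs that receive zero pre-processing weight for some values of $\X$---is a case analysis you have not carried out. That deduction \emph{is} the theorem; without it you have only shown that nonsignaling implies Eq.~\eqref{eq::opNS_proj1}, which is strictly weaker than the claimed dichotomy. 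There is also a structural mismatch in what your strategy would deliver even if completed: your constraints are local in $\Ip$ and in the probed input, so they would naturally yield a statement of the form ``for each input, either the pre-processing is $\X$-independent there or the post-processing there ignores $\O$ and $\X$,'' and an additional packaging argument would be needed to reach the global two-case form of the statement.

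The paper's proof takes a different and much shorter route that dissolves precisely the obstacles you flag. It reduces everything to deterministic components: $\overline{T}^{\X}$ and the probe box are written as convex combinations of deterministic tensors described by functions $f_T(i',x)$ (pre-processing), $g_T(i',o,x)$ (post-processing), and $h_P(i)$ (box). The contraction then collapses to the single identity $o'=g_T(i',h_P(f_T(i',x)),x)$, which must be independent of $x$ for \emph{every} function $h_P$. Because $h_P$ is an arbitrary deterministic function, its values at the points $f_T(i',x)$ can be prescribed freely, so there is no normalization slack and no post-processing rendered invisible by zero weights: if $f_T(i',\cdot)$ varies with $x$ one immediately forces $g_T$ to ignore both $o$ and $x$, and otherwise one forces $g_T$ to ignore $x$. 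If you want to salvage your tensor-level argument, the cleanest fix is to import this reduction to deterministic functions as your first step rather than probing with general boxes.
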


\begin{proof}
The first observation is that it is enough to analyze the constraint in Eq.~\eqref{eq:op_nonsignaling_def} for deterministic operations, since both $\overline{T}^{\X}$ and $P$ can be written as convex decomposition of them. Using the general form of $\overline{T}^{\X}$ and $P$ in terms of deterministic operations, the condition in Eq.~\eqref{eq:op_nonsignaling_def} amounts to
\begin{equation}
    (\overline{T}^{\X}*P)(o'|i')=\sum_{io} \delta_{i,f_T(i',x)} \delta_{o',g_T(i',o,x)} \delta_{o, h_P(i)} \text{ is independent of } x,
\end{equation}
where $f_T$ and $g_T$ are the deterministic functions defining $\overline{T}^{\X}$ and $h_P$ the deterministic function defining $P$. We can further manipulate this expression as
\begin{equation}
    \sum_{io} \delta_{i,f_T(i',x)} \delta_{o',g_T(i',o,x)} \delta_{o, h_P(i)} = \sum_{o} \delta_{o',g_T(i',o,x)} \delta_{o, h_P(f_T(i',x))}= \delta_{o',g_T(i', h_P(f_T(i',x)),x)}, 
\end{equation}
in other words, we find that $o'$ is deterministically generated as
\begin{equation}
    o' = g_T(i', h_P(f_T(i',x)),x),
\end{equation}
and we want this expression to be independent of $x$ for each function $o=h_P(i)$. This straightforwardly provides the two cases: case $(1)$ both $f_T$ and $g_T$ are independent of $x$, or, case $(2)$ $g_T$ is independent  of both $x$ and $o$.
\end{proof}

\subsection{Process tensors that respect no signaling without system exchange (NSWSE)}\label{app::NSWSE}

In this section, we prove the characterization of process tensors that satisfy NSWSE provided in Eqs.~\eqref{eq::W_NSWSE_A} and~\eqref{eq::W_NSWSE_B}. We recall the definition of process tensors that satisfy NSWSE in Eq.~\eqref{eq::W_NSWSE}, as a process tensor that, when under the action of nonsignaling local operations, can only generate nonsignaling correlations, i.e., a process tensor $W$ such that 
\begin{equation}\label{eq::W_NSWSE_app}
    P^\text{NS}_{\A\B|\X\Y} = W * (\overline{T}^{\A|\X} \otimes \overline{T}^{\B|\Y}), 
\end{equation}
for all sets of nonsignaling operations $\overline{T}^{\A|\X}$ and $\overline{T}^{\B|\Y}$, where $P^\text{NS}_{\A\B|\X\Y}$ are nonsignaling correlations. We may now prove the following characterization:

\begin{theorem}[Characterization of bipartite process tensors that satisfy no signaling without system exchange]\label{thm::processtensorsNSWSE}
A bipartite process tensor $W$ satisfies no signaling without system exchange iff it is of the form 
\begin{align}
    _{\Oa}W &= \, _{\Ia\Oa}W \label{eq::W_NSWSE_A_app} \\
    _{\Ob}W &= \, _{\Ib\Ob}W. \label{eq::W_NSWSE_B_app}
\end{align}
\end{theorem}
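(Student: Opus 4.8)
The plan is to show that the operational definition of NSWSE in Eq.~\eqref{eq::W_NSWSE_app} is equivalent to the reduce-and-replace conditions \eqref{eq::W_NSWSE_A_app}--\eqref{eq::W_NSWSE_B_app}, following the same template as the proof of Theorem~\ref{thm::processtensorsNSP}. The requirement that $P^\text{NS}_{\A\B|\X\Y}=W*(\overline{T}^{\A|\X}\otimes\overline{T}^{\B|\Y})$ be nonsignaling splits into ``$\sum_{\A}P^\text{NS}$ independent of $\X$'' and ``$\sum_{\B}P^\text{NS}$ independent of $\Y$''. Since summing over $\A$ erases all dependence on the chosen resolution $\{\overline{T}^{\A|\X}\}_{\A}$, the first requirement reads $W*(\overline{T}^{\X}\otimes\overline{T}^{\B|\Y})$ independent of $\X$ with $\overline{T}^{\X}=\sum_{\A}\overline{T}^{\A|\X}$ a deterministic nonsignaling operation, and analogously for the second. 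As both the NSWSE condition and the target characterization are symmetric under swapping Alice and Bob, I would prove only that the first requirement (for all nonsignaling $\overline{T}^{\X}$, $\overline{T}^{\B|\Y}$) is equivalent to ${}_{\Oa}W={}_{\Ia\Oa}W$, the companion equivalence then following verbatim.

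For the \emph{if} direction I would use the dichotomy of Theorem~\ref{thm::nonsignalingoperations}: a deterministic nonsignaling $\overline{T}^{\X}$ is either of Case~1, hence $\X$-independent and there is nothing to prove, or of Case~2, in which case on Alice's wires $\overline{T}^{\X}_{\Ia\Oap|\Iap\Oa}=P_{\Ia|\Iap\X}\,P_{\Oap|\Iap}$ with the post-processing $P_{\Oap|\Iap}$ independent of both $\Oa$ and $\X$. The key point is that $\overline{T}^{\X}$ is then independent of $\Oa$, so contracting the $\Oa$ leg of $W$ against it involves $W$ only through $\rd_{\Oa}[W]$; the hypothesis ${}_{\Oa}W={}_{\Ia\Oa}W$ says precisely that $\rd_{\Oa}[W]$ does not depend on $\Ia$, so summing the pre-processing $P_{\Ia|\Iap\X}$ over $\Ia$ yields $1$ and deletes the only index through which $\X$ entered. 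Hence $W*\overline{T}^{\X}$, and therefore $W*(\overline{T}^{\X}\otimes\overline{T}^{\B|\Y})$, is $\X$-independent; the analogous computation with ${}_{\Ob}W={}_{\Ib\Ob}W$ settles the $\B$-marginal, and $P^\text{NS}_{\A\B|\X\Y}$ is automatically a valid distribution because $W$ is a process tensor.

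For the \emph{only if} direction I would first reduce to Alice's indices. Using the probing idea from the positivity part of Theorem~\ref{thm::processtensor}, I would take on Bob a probabilistic nonsignaling operation whose $\B=0$ component equals $\varepsilon$ times the indicator of a single configuration of Bob's box systems, and whose $\B=1$ component is $\id^{\Ibp\Ib\Ob\Obp}/(d_{\Ib}d_{\Obp})$ minus that indicator (valid for small $\varepsilon$, and $\Y$-independent hence nonsignaling). Feeding this into Eq.~\eqref{eq::W_NSWSE_app} and reading off the $\B=0$ outcome turns the NSWSE requirement into: for every configuration of Bob's indices, the corresponding slice $W'$ of $W$ (a tensor on Alice's indices alone) satisfies $W'*\overline{T}^{\X}$ independent of $\X$ for all deterministic nonsignaling $\overline{T}^{\X}$. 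Specialising to Case~2 operations $\overline{T}^{\X}=P_{\Ia|\Iap\X}\,D_{\Oap|\Iap}$ with $D_{\Oap|\Iap}$ a deterministic post-processing yields that $\sum_{\Iap\Ia}\rd_{\Oa}[W']_{\Iap|\Ia\Oap}\,D_{\Oap|\Iap}\,P_{\Ia|\Iap\X}$ is $\X$-independent for every choice of $D$ and of the conditional-distribution family $P_{\Ia|\Iap\X}$; choosing $P_{\Ia|\Iap\X}$ to be Kronecker deltas that differ in a single $(\Iap,\X)$ slot forces $\rd_{\Oa}[W']$ evaluated at $\Oap=D(\Iap)$ to be independent of $\Ia$, and sweeping $D$ over all functions removes the conditioning on $\Oap$. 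Since the Bob configuration was arbitrary, this gives $\rd_{\Oa}[W]$ independent of $\Ia$, i.e. ${}_{\Oa}W={}_{\Ia\Oa}W$.

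I expect the necessity direction to be the main obstacle. Case~1 nonsignaling operations are automatically $\X$-independent and carry no information, so the whole constraint on $W$ must be squeezed out of the Case~2 family of Theorem~\ref{thm::nonsignalingoperations}; one must therefore check carefully that in Case~2 the pre-processing $P_{\Ia|\Iap\X}$ can be varied freely while the post-processing is held deterministic (so that the Kronecker-delta argument applies), and that the other party's operation together with the remaining tensor legs can be removed without discarding slices of $W$ --- which is what the $\varepsilon$-supported probe borrowed from Theorem~\ref{thm::processtensor} accomplishes. Once \eqref{eq::W_NSWSE_A_app}--\eqref{eq::W_NSWSE_B_app} are established, checking that they imply the NSP conditions and that, inserted into \eqref{eq::W_proj1_app}--\eqref{eq::W_proj3_app}, they collapse the process-tensor constraints to the simplified boxworld form is a routine reduce-and-replace computation.
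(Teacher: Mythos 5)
Your proposal is correct and follows essentially the same route as the paper's proof: reduce to Case~2 nonsignaling operations via Theorem~\ref{thm::nonsignalingoperations}, probe Bob's side with an $\varepsilon$-supported instrument to isolate slices of $W$, vary the deterministic pre-processing functions at a single point to force $\rd_{\Oa}[W]$ to be independent of $\Ia$, and sweep the post-processing over all functions to cover every value of $\Oap$. Your explicit completion of the Bob-side probe into a valid $\Y$-independent (hence nonsignaling) probabilistic operation is a slightly more careful rendering of a step the paper handles by reference to the probing construction in Theorem~\ref{thm::processtensor}, but it is not a different argument.
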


\begin{proof}
First notice that, when considering nonsignaling operations of (trivial) case 1 [Eq.~\eqref{eq::op_nonsignaling_trivial}], no new constraints are imposed on $W$ from the imposition of the condition in Eq.~\eqref{eq::W_NSWSE_app}. Hence, only nonsignaling operations of (nontrivial) case 2 [Eq.~\eqref{eq::op_nonsignaling_nontrivial}] must be considered.

We begin by splitting Eq.~\eqref{eq::W_NSWSE_app} into one that respects nonsignaling from Alice to Bob and another from Bob to Alice. Starting with nonsignaling from Alice to Bob, our question is: given any set of nonsignaling operations $\overline{T}^{X}$ acting on Alice's side, what is the largest set of process tensors $W$, such that
\begin{equation}
    \sum_A P_{\A\B|\X\Y} = \overline{T}^{\X} * T^{\B|\Y} * W = P_{\B|\X\Y}
\end{equation}
is independent of $X$ for all general operations $T^{\B|\Y}$ acting on Bob's side. 

We exploit the commutative property of the $*$ composition to rewrite 
\begin{equation}
    P_{\B|\X\Y} = T^{\B|\Y} * \overline{T}^{\X} * W 
\end{equation}
and then explicitly writing the $*$ composition between Alice's variables, yielding
\begin{equation}
    P_{\B|\X\Y} = T^{\B|\Y} * \sum_{\Ia\Iap\Oa\Oap} \overline{T}^X_{\Ia\Oap|\Iap\Oa} \, W_{\Iap\Oa\Ibp\Ob|\Ia\Oap\Ib\Obp}.
\end{equation}
Substituting in the definition of nonsignaling operation of case 2 [Eq.~\eqref{eq::op_nonsignaling_nontrivial}], one gets
\begin{align}
    P_{\B|\X\Y} &= T^{\B|\Y} * \sum_{\Ia\Iap\Oa\Oap} \sum_\lambda \pi_\lambda\, D^\lambda_{\Ia|\Iap \X}\,D^\lambda_{\Oap|\Iap} \, W_{\Iap\Oa\Ibp\Ob|\Ia\Oap\Ib\Obp} \\
    &= T^{\B|\Y} * \sum_\lambda \pi_\lambda \sum_{\Ia\Iap\Oa\Oap}  \delta_{\Ia,f_\lambda(\Iap,\X)}\,\delta_{\Oap,g_\lambda(\Iap)} \, W_{\Iap\Oa\Ibp\Ob|\Ia\Oap\Ib\Obp} \\
    &= T^{\B|\Y} * \sum_\lambda \pi_\lambda \sum_{\Iap\Oa}  W_{\Iap\Oa\Ibp\Ob|f_\lambda(\Iap,X)g_\lambda(\Iap)\Ib\Obp}.
\end{align}
Our initial condition then is equivalent to saying that
\begin{equation}\label{eq::proofattempt}
\begin{split}
    P_{\B|\X\Y} &= T^{\B|\Y} * \sum_\lambda \pi_\lambda \sum_{\Iap}  \sum_{\Oa}W_{\Iap\Oa\Ibp\Ob|f_\lambda(\Iap,X)g_\lambda(\Iap)\Ib\Obp} \\
    &= T^{\B|\Y} * \sum_\lambda \pi_\lambda \sum_{\Iap} {V}_{\Iap\Ibp\Ob|f_\lambda(\Iap,\X)g_\lambda(\Iap)\Ib\Obp} \ \ \ \text{is independent of $\X$} \ \forall \ T^{\B|\Y}, \pi_\lambda, f_\lambda, g_\lambda.
    \end{split}
\end{equation}
This is equivalent to saying that ${V}_{\Iap\Ibp\Ob|\Ia,\Oap,\Ib\Obp}$, defined as $V:=\sum_{\Oa} W$, is independent of $\Ia$, as we show in the following. It is clear that this condition is sufficient. Let us prove that it is necessary. Since $\pi_\lambda$ is arbitrary, we can pick a $\{0,1\}$-valued distribution, which implies that we can evaluate the above expression simply over all functions $f$ and $g$. Let us pick for $g$ a constant function that fixes the index $\Oap$, and, for the moment, let us fix also all other indices except $\Iap$, i.e., $\Ibp,\Ib,\Ob,\Obp$. This can be obtained by a proper choice of an operation $T^{\B|\Y}$; see discussion above the derivation of Eq.~\eqref{eq:sing_W_pos}. To simplify the notation we denote the corresponding $V$ as $V^F_{\Iap|\Ia}$, where $\Ia=f(\Iap,\X)$. 

Since $f$ is arbitrary, we have the condition that
\begin{equation}\label{eq:VF_indep}
    \sum_\Iap V^F_{\Iap|f_\X(\Iap)} \text{ is independent of } f_\X.
\end{equation}
This implies that $V^F_{\Iap|\Ia}=V^F_{\Iap}$ for all $\Iap$. In fact, consider two functions $f_1$  and $f_2$ that differ only on their value on $\Ia=0$, we have
\begin{equation}
    \sum_\Iap V^F_{\Iap|f_1(\Iap)} = \sum_\Iap V^F_{\Iap|f_1(\Iap)} \Rightarrow V^F_{0|f_1(0)} = V^F_{0|f_2(0)}.
\end{equation}
This can be repeated for any possible value of $\Ia$, giving Eq.~\eqref{eq:VF_indep}. Since the other indices, i.e., $\Oap$, $\Ibp,\Ib,\Ob,\Obp$ were chosen arbitrarily, and can be fixed by a choice of either $g$ or $T^{\B|\Y}$, as discussed above, we can conclude the necessity of the condition that $V$ is independent of $\Ia$. We thus have that a necessary and sufficient condition for $P_{\B|\X\Y}$ to be independent of $\X$ is that
\begin{equation}
    \sum_{\Oa} W \ \ \  \text{is independent of $\Ia$}.
\end{equation}
Equivalently, we may state that $P_{\B|\X\Y}$ is independent of $\X$ iff
\begin{equation}
    _{\Oa} W = \, _{\Oa\Ia} W.
\end{equation}

The proof of the condition $_{\Ob} W = \, _{\Ob\Ib} W$ is analogous, i.e., achieved by interchanging the roles of Alice and Bob.
\end{proof}

Boxworld processes are process tensors that satisfy the principle of no signaling without system exchange, and hence also satisfy nonsignaling preservation, a strictly weaker condition. Combining the conditions in Eqs.~\eqref{eq::W_positivity_app}--\eqref{eq::W_proj3_app} that define a valid process tensor with the condition of NSWSE in Eqs.~\eqref{eq::W_NSWSE_A_app}--\eqref{eq::W_NSWSE_A_app}, one gets that a tensor $W$ is a valid boxworld process iff
\begin{align}
    W &\geq 0  \label{eq::bwW_positivity_app} \\
    r(W) &= d_{\Ia}d_{\Oap}d_{\Ib}d_{\Obp} \label{eq::bwW_totalred_app} \\
    _{\mathbb{A}}W &= \, _{\mathbb{A}\Obp}W \label{eq::bwW_proj1_app} \\
    _{\mathbb{B}}W &= \, _{\Oap\mathbb{B}}W \label{eq::bwW_proj2_app} \\
    W &= \, _{\Oap}W + \, _{\Obp}W - \, _{\Oap\Obp}W \label{eq::bwW_proj3_app} \\
    _{\Oa}W &= \, _{\Ia\Oa}W \label{eq::bwW_proj4_app} \\
    _{\Ob}W &= \, _{\Ib\Ob}W, \label{eq::bwW_proj5_app}
\end{align}
as presented in Sec.~\ref{sec::methods} of the main text.

\subsection{Boxworld processes and causal order}\label{app::boxworldprocesses}

In this section, we define subsets of boxworld processes with different properties that relate to their causal order. We recall the definition presented in the main text of causally ordered boxworld processes, and proceed to prove their characterization theorem. Equipped with the definition (and characterization) of causally ordered boxworld processes, we prove an alternative characterization of boxworld processes as elementwise nonnegative tensors that can be expressed as an affine combination of causally ordered boxworld processes. 
\\

\textbf{Characterization of causally ordered boxworld processes.} A bipartite boxworld process $W^{A\prec B}$ that is causally ordered from Alice to Bob is the most general boxworld process that satisfies 
\begin{equation}\label{eq::Wcausalorder}
     W^{A\prec B} * (T^{\A|\X} \otimes T^{\B|\Y} ) =  P^{A\prec B}_{\A\B|\X\Y},
\end{equation}
for all sets of local operations $T^{\A|\X}$ and $T^{\B|\Y}$, where $P^{A\prec B}_{\A\B|\X\Y}$ is a set of probability distributions that is nonsignaling from Bob to Alice (satisfies Eq.~\eqref{eq::probNS_AprecB}). A boxworld process $W^{B\prec A}$ that is causally ordered from Bob to Alice is defined in an analogous manner.

\begin{theorem}[Characterization of bipartite causally ordered boxworld processes]\label{thm::causordproc}
A bipartite boxworld process $W^{A\prec B}$ satisfies Eq.~\eqref{eq::Wcausalorder}, in the order from Alice to Bob, iff it satisfies
\begin{equation}\label{eq::W_AprecB}
    W^{A\prec B} = \, _{\Obp}W^{A\prec B},
\end{equation}
and analogously for $W^{B\prec A}$.
\end{theorem}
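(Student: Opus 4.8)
The plan is to mirror the argument used for Theorem~\ref{thm::processtensorsNSWSE}, adapting it to the asymmetric setting. First I would rewrite the defining condition~\eqref{eq::Wcausalorder}: summing Eq.~\eqref{eq::bornrule_app} over $\B$ and using associativity of $*$ gives $\sum_\B P^{A\prec B}_{\A\B|\X\Y} = T^{\A|\X} * T^{\Y} * W$, where $T^{\Y}\coloneqq\sum_\B T^{\B|\Y}$ ranges exactly over all deterministic local operations on Bob's side as $T^{\B|\Y}$ ranges over all probabilistic ones. By Eq.~\eqref{eq::probNS_AprecB} the requirement is that $T^{\A|\X}*(T^{\Y}* W)$ be independent of $\Y$ for all such $T^{\A|\X}$ and $T^{\Y}$. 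Here $T^{\Y}* W$ is a tensor on the indices $\mathbb{A}=\Iap\Ia\Oa\Oap$, and $T^{\A|\X}$ contracts it completely over $\mathbb{A}$; taking $T^{\A|\X}$ (for a single value of $\A$) to be an arbitrarily small multiple of a single-component tensor --- exactly the move made in the positivity argument leading to Eq.~\eqref{eq:sing_W_pos} --- the quantity $T^{\A|\X}*(T^{\Y}* W)$ equals, up to a positive factor, a single chosen entry of $T^{\Y}* W$. Hence the condition is equivalent to: $T^{\Y}* W$ is independent of $\Y$ for every deterministic $T^{\Y}$.

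Next I would insert the parametrization of deterministic operations from Eq.~\eqref{eq::op_parametrization}, applied to Bob's variables, $T^{\Y} = \frac{\id^{\mathbb{B}}}{d_{\Ib}d_{\Obp}} + \,_{[1-\Obp+\Ob\Obp-\Ib\Ob\Obp]}\Ycal$, for an arbitrary parameter tensor $\Ycal$. Contracting the first term with $W$ produces $\frac{1}{d_{\Ib}d_{\Obp}}\rd_{\mathbb{B}}[W]$, which carries no $\Y$-dependence; for the second term, self-adjointness of the reduce-and-replace operation with respect to $*$ gives $(\,_{[1-\Obp+\Ob\Obp-\Ib\Ob\Obp]}\Ycal)*W = \Ycal * (\,_{[1-\Obp+\Ob\Obp-\Ib\Ob\Obp]}W)$. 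Since $\Ycal$ may be chosen independently for each value of $\Y$, and the positivity constraint only bounds its magnitude (as $\id^{\mathbb{B}}/(d_{\Ib}d_{\Obp})$ is entrywise strictly positive), $\Y$-independence for all valid $T^{\Y}$ is equivalent to $\Ycal*(\,_{[1-\Obp+\Ob\Obp-\Ib\Ob\Obp]}W)=0$ for all $\Ycal$, i.e.\ to
\begin{equation}\label{eq::star_sketch}
    W - \,_{\Obp}W + \,_{\Ob\Obp}W - \,_{\Ib\Ob\Obp}W = 0.
\end{equation}

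Finally I would close the gap between Eq.~\eqref{eq::star_sketch} and the stated form~\eqref{eq::W_AprecB}, using that $W^{A\prec B}$ is already a boxworld process and therefore obeys Eq.~\eqref{eq::bwW_proj5_app}, $\,_{\Ob}W=\,_{\Ib\Ob}W$. Applying $\,_{\Obp}$ to this identity and using that reduce-and-replace operations commute gives $\,_{\Ob\Obp}W=\,_{\Ib\Ob\Obp}W$, so the last two terms of Eq.~\eqref{eq::star_sketch} cancel, leaving $W=\,_{\Obp}W$. Conversely, if $W=\,_{\Obp}W$ then $\,_{\Ob\Obp}W=\,_{\Ob}W$ and $\,_{\Ib\Ob\Obp}W=\,_{\Ib\Ob}W$, so Eq.~\eqref{eq::star_sketch} reduces to $\,_{\Ob}W-\,_{\Ib\Ob}W=0$, which is again Eq.~\eqref{eq::bwW_proj5_app}; both directions thus hold. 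The statement for $W^{B\prec A}$ follows by interchanging the roles of Alice and Bob throughout (using Eq.~\eqref{eq::bwW_proj4_app} in place of Eq.~\eqref{eq::bwW_proj5_app}), yielding $W^{B\prec A}=\,_{\Oap}W^{B\prec A}$. The main obstacle is the reduction in the first step --- justifying that quantifying over all local probabilistic operations on both sides is equivalent to the clean tensor identity ``$T^{\Y}* W$ is $\Y$-independent for all deterministic $T^{\Y}$'' --- which needs the same care about which single-component tensors are admissible operations that appears in the proofs of Theorems~\ref{thm::processtensor} and~\ref{thm::processtensorsNSWSE}; the remaining manipulation of reduce-and-replace projectors is routine.
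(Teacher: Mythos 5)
Your proof is correct and follows essentially the same route as the paper's: parametrize Bob's deterministic operations via Eq.~\eqref{eq::op_parametrization}, isolate the $\Ycal$-dependent term to obtain the projector condition $W - {}_{\Obp}W + {}_{\Ob\Obp}W - {}_{\Ib\Ob\Obp}W = 0$, and then invoke the boxworld constraint ${}_{\Ob}W = {}_{\Ib\Ob}W$ to reduce it to $W = {}_{\Obp}W$. Your explicit handling of necessity (stripping off $T^{\A|\X}$ via small single-entry operations, as in the positivity argument of Theorem~\ref{thm::processtensor}) and your verification of both directions of the final equivalence are slightly more careful than what the paper writes out, but the argument is the same.
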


\begin{proof}
The question here is: given any sets of deterministic local transformations $T^{\Y}$ acting on Bob's side, what is the most general boxworld process $W^{A\prec B}$ such that, 
\begin{equation}
    \sum_{\B}P^{A\prec B}_{\A\B|\X\Y} = W^{A\prec B} * T^{\A|\X} * T^{\Y}  = P_{\A|\X\Y}
\end{equation}
is independent of $\Y$ for all general operations $T^{\A|\X}$ acting on Alice's side. 

Substituting in the parametrization of $T^{\Y}$ from Eq.~\eqref{eq::op_parametrization} as a function of the tensor parameter $\Ycal$, one has
\begin{align}
    \sum_{\B}P^{A\prec B}_{AB|XY} &= T^{\A|\X} * \sum_{\Ibp\Ib\Ob\Obp} T^{\Y} \, W^{A\prec B} \\
    &= T^{\A|\X} * \sum_{\Ibp\Ib\Ob\Obp} \left[\frac{\id^{\mathbb{B}}}{d_{\Ib}d_{\Obp}} + \, _{(1 -\Obp +\Ob\Obp - \Ib\Ob\Obp)}\Ycal\right] W^{A\prec B} \\
    &= T^{\A|\X} * \frac{\rd_{\mathbb{B}} (W^{A\prec B})}{d_{\Ib}d_{\Obp}} + 
    T^{\A|\X} * \sum_{\Ibp\Ib\Ob\Obp} \left[ _{(1 -\Obp +\Ob\Obp - \Ib\Ob\Obp)}\Ycal \ W^{A\prec B} \right] \\
    &= T^{\A|\X} *  \frac{\rd_{\mathbb{B}} (W^{A\prec B})}{d_{\Ib}d_{\Obp}} +  
    T^{\A|\X} * \sum_{\Ibp\Ib\Ob\Obp} \left[ \Ycal \ _{(1 -\Obp +\Ob\Obp - \Ib\Ob\Obp)}W^{A\prec B} \right], \label{eq::proof2}
\end{align}
where in the last line, the self-adjointness of the reduce-and-replace map was applied.

The dependence of Eq.~\eqref{eq::proof2} on $\Y$ appears only in the second term of the sum, hence, Eq.~\eqref{eq::proof2} will be independent of $\Y$ iff
\begin{equation}
    _{(1 -\Obp +\Ob\Obp - \Ib\Ob\Obp)}W^{A\prec B} = 0, 
\end{equation}
or, equivalently,
\begin{equation}
    W^{A\prec B} =\, _{(\Obp -\Ob\Obp + \Ib\Ob\Obp)}W^{A\prec B}.
\end{equation}

Since we require that, on top of satisfying the expression above concerning causal order, the tensor $W^{A\prec B}$ must also be a valid box world tensor, we may further simplify this expression by applying the condition of NSWSE from Eq.~\eqref{eq::W_NSWSE_B_app}, arriving at
\begin{equation}
    W^{A\prec B} =\, _{\Obp}W^{A\prec B}.
\end{equation}

The proof of Eq.~\eqref{eq::W_AprecB} is analogous.
\end{proof}

Combining the condition of causal order in Eq.~\eqref{eq::W_AprecB} with the conditions that define a valid boxworld process in Eqs.~\eqref{eq::bwW_positivity_app}--\eqref{eq::bwW_proj5_app}, we have that a tensor $W^{A\prec B}$ is a boxworld process that is causally ordered from Alice to Bob iff it satisfies 
\begin{align}
    W^{A\prec B} &\geq 0 \label{eq::W_AprecB_positivity}  \\
    r(W^{A\prec B}) &= d_{\Ia}d_{\Oap}d_{\Ib}d_{\Obp}  \label{eq::W_AprecB_totalred} \\
    _{\mathbb{B}}W^{A\prec B} &=\, _{\Oap\mathbb{B}}W^{A\prec B} \label{eq::W_AprecB_proj1} \\
    W^{A\prec B} &=\, _{\Obp}W^{A\prec B} \label{eq::W_AprecB_proj2} \\
    _{\Oa}W^{A\prec B} &=\, _{\Ia\Oa}W^{A\prec B} \label{eq::W_AprecB_proj3} \\
    _{\Ob}W^{A\prec B} &=\, _{\Ib\Ob}W^{A\prec B}, \label{eq::W_AprecB_proj4}
\end{align}
and equivalently for $W^{B\prec A}$.

A bipartite nonsignaling boxworld process $W^{A||B}$ is one is that satisfies the constraints of both causally ordered from Alice to Bob and from Bob to Alice, meaning they are the larger set of boxworld processes that can only lead to nonsignaling correlations. Hence, a tensor $W^{A||B}$ is a nonsignaling boxworld process iff it satisfies
\begin{align}
    W^{A||B} &\geq 0  \\
    r(W^{A||B}) &= d_{\Ia}d_{\Oap}d_{\Ib}d_{\Obp}  \\
    W^{A||B} &=\, _{\Oap\Obp}W^{A||B} \label{eq::W_nonsig}  \\
    _{\Oa}W^{A||B} &=\, _{\Ia\Oa}W^{A||B}  \\
    _{\Ob}W^{A||B} &=\, _{\Ib\Ob}W^{A||B}. 
\end{align}

A bipartite causally separable process $W^\text{sep}$ is a boxworld process that can be expressed as a convex combination of processes that are causally ordered from Alice to Bob and from Bob to Alice, i.e.,
\begin{equation}
    W^\text{sep} = q\,W^{A\prec B} + (1-q)W^{B\prec A}, \ \ \ \text{for some } q\in[0,1],
\end{equation}
and for some causally ordered boxworld processes $W^{A\prec B}$ and $W^{B\prec A}$. This definition is made in analogy to the definition of a bipartite causally separable process matrix~\cite{araujo2015witnessing}. Boxworld processes that are not causally separable are considered to display an indefinite causal order and may violate causal inequalities, as shown in detail in App.~\ref{app::inequalities}.

Notice that, from the conditions $W^{A\prec B} =\, _{\Obp}W^{A\prec B}$ (Eq.~\eqref{eq::W_AprecB_proj2}) for a boxworld process that is causally ordered from Alice to Bob, $W^{B\prec A} =\, _{\Oap}W^{B\prec A}$ for a boxworld process that is causally ordered from Bob to Alice, and
$W^{A||B} =\, _{\Oap\Obp}W^{A||B}$ (Eq.~\eqref{eq::W_nonsig}) for a nonsignaling boxworld process, one may rewrite the condition $W = \, _{\Oap}W + \, _{\Obp}W - \, _{\Oap\Obp}W$ in Eq.~\eqref{eq::bwW_proj3_app} for a valid boxworld process as
\begin{equation}\label{eq::bwW_proj3_app_rewritten}
    W = \, W^{A\prec B} + W^{B\prec A} - W^{A||B}. 
\end{equation}
\\

\textbf{Alternative characterization of boxworld processes.}
Equipped with the notion of causally ordered boxworld processes, we can now prove a second characterization theorem for boxworld processes, in terms of an affine combination of causally ordered boxworld processes. In order to do that, we will require the following lemma.

\begin{lemma}\label{lemma:positivity}
Let $P$ be an elementwise positive tensor and let $_{\alpha}P$ be the reduce-and-replace operation over the index $\alpha$ of the tensor $P$. Then,
\begin{equation}\label{eq:pos_red_rep}
    d_\alpha \, _{\alpha}P - P \geq 0,
\end{equation}
where $d_\alpha=|\alpha|$ and $\geq 0$ indicates elementwise nonnegativity.
\end{lemma}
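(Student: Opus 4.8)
The statement is an elementwise inequality, so the plan is simply to evaluate both sides at an arbitrary index tuple and compare term by term. First I would recall the explicit definition of the reduce-and-replace operation: writing the remaining (uncontracted) indices collectively as $\beta$, one has ${}_{\alpha}P(a,\beta)=\sum_{a'}P(a',\beta)\,\frac{\id_\alpha(a)}{d_\alpha}=\frac{1}{d_\alpha}\sum_{a'}P(a',\beta)$, since $\id_\alpha(a)=1$ for every value $a$. Multiplying by $d_\alpha$ clears the normalization, giving $d_\alpha\,{}_{\alpha}P(a,\beta)=\sum_{a'}P(a',\beta)$, independently of $a$.

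Next I would subtract $P$ elementwise. For any fixed value $a$ of the index $\alpha$ and any fixed values $\beta$ of the remaining indices,
\begin{equation}
    \bigl(d_\alpha\,{}_{\alpha}P - P\bigr)(a,\beta) = \sum_{a'}P(a',\beta) - P(a,\beta) = \sum_{a'\neq a}P(a',\beta).
\end{equation}
Each summand $P(a',\beta)$ is nonnegative by the hypothesis that $P$ is elementwise nonnegative, so the whole sum is nonnegative. Since $a$ and $\beta$ were arbitrary, this shows $d_\alpha\,{}_{\alpha}P - P \geq 0$ elementwise, which is exactly the claim.

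There is no real obstacle here: the only point requiring any care is keeping track of the $1/d_\alpha$ normalization built into the definition of ${}_{\alpha}(\cdot)$, which is precisely why the factor $d_\alpha$ appears on the left-hand side of Eq.~\eqref{eq:pos_red_rep}. (It may also be worth remarking, for later use, that equality holds iff $P(a',\beta)=0$ for all $a'\neq a$, i.e.\ iff for each $\beta$ the ``slice'' of $P$ along $\alpha$ is supported on at most one value; but this is not needed for the lemma as stated.)
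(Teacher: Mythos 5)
Your proof is correct and follows essentially the same route as the paper's: the paper also observes that $d_\alpha\,{}_{\alpha}P$ replaces each entry along $\alpha$ with the sum of all entries along that index, which dominates each individual nonnegative entry. You merely make this explicit in index notation, which is fine.
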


\begin{proof}
The proof follows straightforwardly from the positivity of $P$. The reduce-and-replace substitutes the elements along one index of the tensor with their average (their sum divided by their cardinality $d$). By then multiplying the resulting tensor by $d$, the overall operation amounts to substituting each element along one index with the sum of all of them, which is greater than each individual element since they are all nonnegative. The elementwise positivity of Eq.~\eqref{eq:pos_red_rep} is then apparent. 
\end{proof}

We are now ready to prove the following theorem.

\begin{theorem}[Alternative characterization of boxworld processes]\label{thm:affinemodproc}
An elementwise nonnegative tensor $W\geq0$ is a boxworld process iff it can be expressed as an affine combination of causally ordered boxworld processes. That is, iff
\begin{equation}
    W = \lambda W^{A\prec B} + (1-\lambda) W^{B\prec A}, \ \ \ \text{for some } \lambda\in\mathbb{R},
\end{equation}
and for some causally ordered boxworld processes $W^{A\prec B}$ and $W^{B\prec A}$.
\end{theorem}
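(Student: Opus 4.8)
The plan is to prove both directions of the equivalence, the forward one being routine and the converse resting on Lemma~\ref{lemma:positivity}.

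\textbf{(Affine combination $\Rightarrow$ boxworld process.)} I would first note that, apart from elementwise nonnegativity (which is hypothesised), every defining condition of a boxworld process, Eqs.~\eqref{eq::bwW_positivity_app}--\eqref{eq::bwW_proj5_app}, is either a homogeneous linear relation between reduce-and-replace images or the affine normalisation $r(W)=d_{\Ia}d_{\Oap}d_{\Ib}d_{\Obp}$. A causally ordered process $W^{A\prec B}$ obeys Eqs.~\eqref{eq::W_AprecB_positivity}--\eqref{eq::W_AprecB_proj4}, and using $W^{A\prec B}={}_{\Obp}W^{A\prec B}$ (Eq.~\eqref{eq::W_AprecB_proj2}) together with idempotence and commutativity of the reduce-and-replace one checks directly that it also satisfies Eqs.~\eqref{eq::bwW_proj1_app} and~\eqref{eq::bwW_proj3_app}; the same holds for $W^{B\prec A}$. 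Since homogeneous linear relations are stable under affine combinations and the two values $r=d_{\Ia}d_{\Oap}d_{\Ib}d_{\Obp}$ combine back to $d_{\Ia}d_{\Oap}d_{\Ib}d_{\Obp}$, any nonnegative affine combination $\lambda W^{A\prec B}+(1-\lambda)W^{B\prec A}$ is a boxworld process.

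\textbf{(Boxworld process $\Rightarrow$ affine combination.)} Given a boxworld process $W\ge 0$, the plan is to start from the three-term identity $W=W^{A\prec B}+W^{B\prec A}-W^{A||B}$ of Eq.~\eqref{eq::bwW_proj3_app_rewritten}, where $W^{A\prec B}\coloneqq{}_{\Obp}W$, $W^{B\prec A}\coloneqq{}_{\Oap}W$ and $W^{A||B}\coloneqq{}_{\Oap\Obp}W$ are, respectively, a causally ordered (Alice to Bob), causally ordered (Bob to Alice), and nonsignaling boxworld process (this follows from idempotence/commutativity of the reduce-and-replace and the conditions on $W$). One then needs to fold the $-W^{A||B}$ term into the other two. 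If $\Obp$ is trivial ($d_{\Obp}=1$) then $W={}_{\Obp}W$ is already causally ordered from Alice to Bob and $\lambda=1$ works (and symmetrically $\lambda=0$ if $d_{\Oap}=1$); otherwise I would set $\lambda\coloneqq d_{\Obp}\ge 2$ and put
\begin{align}
U&\coloneqq\tfrac1\lambda\big(W^{A\prec B}+(\lambda-1)W^{A||B}\big), &
V&\coloneqq\tfrac1{1-\lambda}\big(W^{B\prec A}-\lambda W^{A||B}\big).
\end{align}
Then $\lambda U+(1-\lambda)V=W$, $r(U)=r(V)=d_{\Ia}d_{\Oap}d_{\Ib}d_{\Obp}$, and---since a nonsignaling process is causally ordered in both directions---$U$ is an affine combination of processes satisfying the (linear) Alice-to-Bob causal-order conditions Eqs.~\eqref{eq::W_AprecB_totalred}--\eqref{eq::W_AprecB_proj4}, and likewise $V$ for the reverse order.

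\textbf{(Positivity, the main obstacle.)} The remaining---and, I expect, only nontrivial---point is elementwise nonnegativity of $U$ and $V$. For $U$ it is immediate, as $W^{A\prec B},W^{A||B}\ge 0$ and $\lambda-1\ge 0$. For $V$, since $1-\lambda<0$, nonnegativity of $V$ is equivalent to $W^{B\prec A}\le d_{\Obp}W^{A||B}$, which is exactly Lemma~\ref{lemma:positivity} applied to the nonnegative tensor $W^{B\prec A}$ over the index $\Obp$, using ${}_{\Obp}W^{B\prec A}={}_{\Oap\Obp}W=W^{A||B}$. Hence $U$ and $V$ are genuine causally ordered boxworld processes (of Alice-to-Bob and Bob-to-Alice type, respectively) with $W=\lambda U+(1-\lambda)V$ and $\lambda\in\mathbb{R}$, which proves the theorem. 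The subtlety worth emphasising is that this positivity requirement forces $\lambda$ outside $[0,1]$ (so in general no two-term \emph{convex} decomposition exists, only an affine one), and that the trivial-dimension cases must be treated separately to avoid the division by $1-\lambda$.
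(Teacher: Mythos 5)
Your proposal is correct and follows essentially the same route as the paper: the same choice $\lambda=d_{\Obp}$, the same two causally ordered tensors (your $U$ and $V$ coincide exactly with the paper's $W^{A\prec B}=\bigl((d-1)\,{}_{\Oap\Obp}W+{}_{\Obp}W\bigr)/d$ and $W^{B\prec A}=\bigl(d\,{}_{\Oap\Obp}W-{}_{\Oap}W\bigr)/(d-1)$), and the same appeal to Lemma~\ref{lemma:positivity} for the only nontrivial positivity check. Your explicit treatment of the degenerate case $d_{\Obp}=1$ is a small point of extra care that the paper handles only by assuming $d>1$.
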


\begin{proof}
We start by showing the if part, i.e., by checking that any $W\geq0$ that can be written as $\lambda W^{A\prec B} + (1 - \lambda) W^{B\prec A}$ is a valid boxworld process. This can be done by direct insertion into the conditions defined by Eqs.~\eqref{eq::bwW_totalred_app}--\eqref{eq::bwW_proj5_app}. Since $W^{A\prec B}$ and $W^{B\prec A}$ are valid boxworld processes themselves, and therefore satisfy Eqs.~\eqref{eq::bwW_totalred_app}--\eqref{eq::bwW_proj5_app} individually, these conditions are also satisfied by an affine combination of them.

Now we show the only if part, i.e., that for every boxworld process $W$, there exists a $\lambda$, $W^{A\prec B}$, and $W^{B\prec A}$ such that $W = \lambda W^{A\prec B} + (1 - \lambda) W^{B\prec A}$. For that, take the ansatz
\begin{align}
    \lambda &\coloneqq d \\
    W^{A\prec B} &\coloneqq \frac{(d-1)\, _{\Oap\Obp}W+\, _{\Obp}W}{d} \\
     W^{B\prec A} &\coloneqq \frac{(d) \ _{\Oap\Obp}W-\, _{\Oap}W}{d-1}, 
\end{align}
where $d=|\Obp|>1$.

First, we show that $W^{A\prec B}$ and $W^{B\prec A}$ are valid causally ordered boxworld processes. 

The first step is to show that they are valid sets of probability distributions. Since $W$ is a valid set of probability distributions, the reduce-and-replace operation $_{X}\cdot$ preserves sets of probability distributions, and $d>1$, $W^{A\prec B}$ is a sum of nonnegative numbers, and is therefore elementwise nonnegative. For the nonnegativity of $W^{B\prec A}$, we invoke Lemma~\ref{lemma:positivity}.
One can also check that $r(W^{A\prec B})=r(W^{B\prec A})=r(W)$, which, since $W$ is a valid set of probability distributions, is equal to $d_{\Ia}d_{\Oap}d_{\Ib}d_{\Obp}$.

Now we check that $W^{A\prec B}$ satisfies Eqs.~\eqref{eq::W_AprecB_proj1}--\eqref{eq::W_AprecB_proj4}. Starting with Eq.~\eqref{eq::W_AprecB_proj2}, we have that
\begin{align}
    _{\Obp}W^{A\prec B} = \frac{(d-1)\, _{\Oap\Obp}W+\, _{\Obp}W}{d} = W^{A\prec B}.
\end{align}
Now for Eq.~\eqref{eq::W_AprecB_proj1}, we have that
\begin{align}
    _{\mathbb{B}}W^{A\prec B} 
    &= \frac{(d-1)\, _{\Oap\mathbb{B}}W+\, _{\mathbb{B}}W}{d} \label{eq::A1} \\
    &= \frac{(d-1)\, _{\Oap\mathbb{B}}W+\, _{\Oap\mathbb{B}}W}{d} \label{eq::A2} \\
    &= \, _{\Oap\mathbb{B}}W^{A\prec B}, 
\end{align}
where from Eq.~\eqref{eq::A1} to Eq.~\eqref{eq::A2} we used the validity of $W$, i.e., $_{\mathbb{B}}W= \, _{\Oap\mathbb{B}}W$. For Eq.~\eqref{eq::W_AprecB_proj3}, we have that
\begin{align}
    _{\Oa}W^{A\prec B} 
    &= \frac{(d-1)\, _{\Oa\Oap\Obp}W+\, _{\Oa\Obp}W}{d} \label{eq::A3} \\
    &= \frac{(d-1)\, _{\Ia\Oa\Oap\Obp}W+\, _{\Ia\Oa\Obp}W}{d} \label{eq::A4} \\
    &= \, _{\Ia\Oa}W^{A\prec B}
\end{align}
where from Eq.~\eqref{eq::A3} to Eq.~\eqref{eq::A4} we again used the validity of $W$, i.e., $_{\Oa}W= \, _{\Ia\Oa}W$. The equivalent holds for Eq.~\eqref{eq::W_AprecB_proj4}.

Next, we check that $W^{B\prec A}$ satisfies the analog of Eqs.~\eqref{eq::W_AprecB_proj1}--\eqref{eq::W_AprecB_proj4}, with a relabeling of Alice and Bob.
Starting with the relabeled Eq.~\eqref{eq::W_AprecB_proj2}, we have that
\begin{align}
    _{\Oap}W^{B\prec A} = \frac{(d)\ _{\Oap\Obp}W-\, _{\Oap}W}{d-1} = W^{A\prec B}.
\end{align}
Now for the relabeled Eq.~\eqref{eq::W_AprecB_proj1}, we have that
\begin{align}
    _{\mathbb{A}}W^{B\prec A} 
    &= \frac{(d)\ _{\mathbb{A}\Obp}W-\, _{\mathbb{A}}W}{d-1} \label{eq::B1} \\
    &= \frac{(d)\ _{\mathbb{A}\Obp}W-\, _{\mathbb{A}\Obp}W}{d-1} \label{eq::B2} \\
    &= \, _{\mathbb{A}\Obp}W^{B\prec A},
\end{align}
where from Eq.~\eqref{eq::B1} to Eq.~\eqref{eq::B2} we used the validity of $W$, i.e., that $_{\mathbb{A}}W= \, _{\mathbb{A}\Obp}W$. For Eq.~\eqref{eq::W_AprecB_proj3}, we have that 
\begin{align}
    _{\Oa}W^{B\prec A} 
    &= \frac{(d)\ _{\Oa\Oap\Obp}W-\, _{\Oa\Oap}W}{d-1} \label{eq::B3} \\
    &= \frac{(d)\ _{\Ia\Oa\Oap\Obp}W-\, _{\Ia\Oa\Oap}W}{d-1} \label{eq::B4} \\
    &= \, _{\Ia\Oa}W^{A\prec B}
\end{align}
where from Eq.~\eqref{eq::B3} to Eq.~\eqref{eq::B4} we again used the validity of $W$, i.e., $_{\Oa}W= \, _{\Ia\Oa}W$. The equivalent holds for Eq.~\eqref{eq::W_AprecB_proj4}.

Finally, we only need to show that
\begin{align}
     \lambda W^{A\prec B} + (1 - \lambda) W^{B\prec A}
     &= d\,\frac{(d-1)\, _{\Oap\Obp}W+\, _{\Obp}W}{d} + (1-d)\,\frac{(d)\ _{\Oap\Obp}W-\, _{\Oap}W}{d-1} \\
     &= (d-1)\, _{\Oap\Obp}W +\, _{\Obp}W - (d)\ _{\Oap\Obp}W +\, _{\Oap}W \\
     &= \, _{\Oap}W + \, _{\Obp}W - \, _{\Oap\Obp}W \\
     &= W,
\end{align}
which concludes the proof.
\end{proof}

\subsection{Symmetry transformations of boxworld processes}\label{sec::Sym}

In this section, we derive the symmetry transformation for boxworld processes, namely, the most general map that maps a valid $\W$ into a valid $\W$. The projector on valid boxworld processes can be written as a composition
\begin{equation}\label{eq:L_proj}
    L = L_A \circ L_B \circ L_{AB} \circ L_{I_A} \circ L_{I_B},
\end{equation}
where
\begin{equation}\label{eq:proj_list}
\begin{split}
    L_A(X) &:=  _{(1- \mathbb{A}+\mathbb{A}\Obp)}X, \quad L_B(X) :=  _{(1- \mathbb{B}+\mathbb{B}\Oap)}X,\\
    L_{I_A}(X) &:= _{(1- \Oa+\Ia\Oa)}X,\quad L_{I_B}(X) := _{(1- \Ob+\Ib\Ob)}X\\
    L_{AB}(X) &:=  _{\Oap}X + _{\Obp}X - _{\Oap\Obp}X,
\end{split}
\end{equation}

We want to characterize the most general local map $R_{\A\td{\A}}$ that maps valid boxworld processes into valid boxworld processes. According to the theory of Ref.~\cite{milz2024characterising} the transformation $S$ of $W$ obeys 
\begin{equation}\label{eq:L_T_comp}
    L \circ S \circ \td{L} = S \circ \td{L},
\end{equation}
where we defined $\td{L}$ as the equivalent of Eq.~\eqref{eq:L_proj} for the spaces $\td{\A},\td{\B}$, together with the normalization constraint
\begin{equation}\label{eq:norm_R}
    _{\mathbb{A}\mathbb{B}\td{\mathbb{A}}\td{\mathbb{B}}} (S) = d_{\td{A}\td{B}} \ \openone_{\A\td{\A}\B\td{\B}}.
\end{equation}
We later specialize this condition to the special case of a local transformation on Alice's side, namely,
$S= R_{\A\td{\A}}\otimes \openone_{\B\td{\B}}$.

With Eq.~\eqref{eq:L'_proj}, we can define the projector on the space of valid transformations as
\begin{equation}\label{eq:symm_cond}
    P[S] = S -  S \circ \td{L} + L \circ S \circ \td{L},
\end{equation}
one can easily verify that $P^2=P$. Notice that $L$ and $\td{L}$ commute, as they act on different indices and the reduce-and-replace operation is self-adjoint. As a consequence, applying $\td{L}$ to $W$ or to the transformation $S$ is equivalent, namely,
\begin{equation}
    S \circ \td{L} (W)= S(\td{L} (W))= S* (\td{L} * W) =  (S* \td{L}) * W 
\end{equation}
The idea is that the we reduce-and-replace the index of $S$ that act on the space of $W$.
Then, any transformation that sends valid $W$s to valid $W$s can be thought of as applied to the local operation $T$, hence transforming it into a new one, namely,
\begin{equation}\label{eq:symm_dual}
    S(W) * T_A * T_B = W * S^\dagger( T_A * T_B) = W * T'_A * T'_B.
\end{equation}
Intuitively, if two pairs of local operations $( T_A , T_B)$ and $( T'_A , T'_B)$, are connected by a symmetry transformation $S^\dagger$, which satisfies Eq.~\eqref{eq:symm_cond}, as in Eq.~\eqref{eq:symm_dual}, the optimal value of a certain expression (e.g., GYNI) over all $W$s is the same on  $( T_A , T_B)$ and $( T'_A , T'_B)$. We formalize better this aspect at the end of the section.

The expression in Eq.~\eqref{eq:symm_cond} can be further simplified considering only local operations on each party. This is no loss of generality, as local symmetry transformation can be composed with each other. 
We want to map the spaces $\td{\A},\td{\B}$ into the spaces $\A,\B$, via a local map which is trivial on the $\td{B}$ space. Since we are interested in symmetry transformations, $\td{\A},\td{\B}$ are just a copy of the spaces $\A,\B$. We write
\begin{equation}
    (R_{\A\td{\A}}\otimes \openone_{\B\td{\B}}) * \W_{\td{\A}\td{\B}} = \W'_{\A\B},
\end{equation}
where $\openone_{\B\td{\B}}$ is the identity map
\begin{equation}
    \openone_{\B\td{\B}} := [\tIbp=\Ibp][\tIb=\Ib][\tOb=\Ob][\tObp=\Obp].
\end{equation}
For instance, when considering local operations on Alice's side, we can use the fact that the local operation on $\td{B}$ is trivial, which can be written as $(R_{\A\td{\A}}\otimes \openone_{\B\td{\B}})=_{\td{\mathbb{B}}\mathbb{B}} (R_{\A\td{\A}}\otimes \openone_{\B\td{\B}})$. This implies that the projector $L$ in Eq.~\eqref{eq:symm_cond} can be substituted with 
\begin{equation}\label{eq:L'_proj}
    L'(X):= _{\Oap(1-\Oa+\Ia\Oa)}(X),
\end{equation}
and analogously for $\td{L}$. To prove this, it is enough to show that $L(X)=L'(X)$ whenever $X=_{\mathbb{B}}X$, e.g., by direct substitution in Eq.~\eqref{eq:proj_list} and then in Eq.~\eqref{eq:L_proj}. The same argument applies to $\td{L}$.

Finally, we can write the condition $P[S]=S$ as
\begin{align}\label{eq:R_pr_null}
    0 = _{\tOap(1-\tOa+\tIa\tOa)}(R_{\A\td{\A}}\otimes \openone_{\B\td{\B}}) - _{\Oap(1-\Oa+\Ia\Oa)\tOap(1-\tOa+\tIa\tOa)} (R_{\A\td{\A}}\otimes \openone_{\B\td{\B}}) 
\end{align}
Equation \eqref{eq:R_pr_null} together with the normalization condition $r[R_{\A\td{A}}]=d_{\A}$, see Eq.~\eqref{eq:norm_R}, provide a complete characterization of the symmetry transformations for $\W$. 

We notice that this implies that for any local operation $T_{\td{\A}}$, we have $R_{\A\td{A}} * T_{\td{\A}}=T'_{\A}$, where $T'_{\A}$ is a valid operation on the space $\A$. This transformation induces a pre-order, i.e., a reflexive and transitive homogeneous binary relation, given by $T\succ T'$ if $T'= R*T$. This idea can be exploited for the numerical optimizations discussed in the previous section. Indeed, instead of running the LP in Eq.~\eqref{eq:W_LP_T_fix} for all pairs of local operations $T^{\A|\X}$ and $T^{\B|\Y}$, one can use the local symmetries to fix one local operation for each party to be one of the extremal ones according to the pre-order $T\succ T'$ defined above. This would be the equivalent of using the unitary invariance of the set of quantum states to fix one local measurement per party in a Bell experiment to be in the computational basis. We leave this optimization for future research.

\section{Boxworld correlations}\label{app::part2}

\subsection{No perfect two-way signaling in boxworld correlations}\label{app::no2waysig}

In this section we prove the result stated in the main text that boxworld correlations do not admit perfect two-way signaling. We begin by stating the formal definition of boxworld correlations.

A set of probability distributions $\overline{P}_{\A\B|\X\Y}$ are called boxworld correlations iff there exists a set of local operations $T^{\A|\X}$ and $T^{\B|\Y}$, and a boxworld process $W$, such that
\begin{equation}
    \overline{P}_{\A\B|\X\Y} = (T^{\A|\X} \otimes T^{\B|\Y}) * W.
\end{equation}

\begin{theorem}[Boxworld correlations do not admit perfect two-way signaling]\label{thm::no2waysig}
A set of perfect two-way signaling probability distributions $P^\text{2-sig}_{\A\B|\X\Y}=\delta_{\A,\Y}\,\delta_{\B,\X}$ cannot be achieved by performing local transformations on a boxworld process. That is, there does not exist any triplet $\{T^{\A|\X},T^{\B|\Y},W\}$, for any finite cardinality of their random variables, such that $(T^{\A|\X} \otimes T^{\B|\Y}) * W=\delta_{\A,\Y}\,\delta_{\B,\X}.$

Moreover, any sets of probability distributions $P_{\A\B|\X\Y}$ such that 
\begin{equation}
    \frac{1}{4} \sum_{\A,\B,\X,\Y} \delta_{\A,\Y}\,\delta_{\B,\X} P_{\A\B|\X\Y} > 1 - \frac{1}{2d},
\end{equation}
cannot be obtained by local transformations and a boxworld process of cardinality $d\coloneqq\min\{d_{\Oap},d_{\Obp}\}=\min\{|\Oap|,|\Obp|\}$.
\end{theorem}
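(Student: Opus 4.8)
The strategy is to decompose the boxworld process $W$ into causally ordered pieces, bound on each piece the value of the functional $g(P)\coloneqq\tfrac14\sum_{\A,\B,\X,\Y}\delta_{\A,\Y}\,\delta_{\B,\X}\,P_{\A\B|\X\Y}$ appearing in the statement, and relate these to the nonsignaling piece, which enters the decomposition with a negative sign. Since $g(P^{\text{2-sig}})=1$ while $1-\tfrac{1}{2d}<1$ for every finite $d$, the qualitative first claim follows at once from the quantitative bound $g(\overline{P})\le 1-\tfrac{1}{2d}$, which is what I would establish for an arbitrary boxworld correlation $\overline{P}=(T^{\A|\X}\otimes T^{\B|\Y})*W$.

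First, set $W^{A\prec B}\coloneqq {}_{\Obp}W$, $W^{B\prec A}\coloneqq {}_{\Oap}W$ and $W^{A||B}\coloneqq {}_{\Oap\Obp}W$. Using idempotency, commutativity and self-adjointness of the reduce-and-replace map together with the boxworld-process constraints [Eqs.~\eqref{eq::bwW_positivity_app}--\eqref{eq::bwW_proj5_app}], one verifies that $W^{A\prec B}$ and $W^{B\prec A}$ are again boxworld processes, causally ordered from Alice to Bob and from Bob to Alice respectively in the sense of Theorem~\ref{thm::causordproc} (e.g.\ $W^{A\prec B}={}_{\Obp}W^{A\prec B}$ by idempotency), that $W^{A||B}$ is a valid nonsignaling boxworld process, and that Eq.~\eqref{eq::bwW_proj3_app} is exactly the identity $W=W^{A\prec B}+W^{B\prec A}-W^{A||B}$. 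Contracting with the fixed, entrywise nonnegative local operations and using linearity gives $g(\overline{P})=g(P^{A\prec B})+g(P^{B\prec A})-g(P^{A||B})$, with $P^{A\prec B}\coloneqq(T^{\A|\X}\otimes T^{\B|\Y})*W^{A\prec B}$ and so on; all three are genuine probability distributions, so $g(P^{A||B})\ge0$.

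I would then bound the ordered terms in two ways. Since $W^{A\prec B}$ is causally ordered from Alice to Bob, $P^{A\prec B}$ is nonsignaling from Bob to Alice, so its Alice-marginal $P^{A\prec B}_{\A|\X}$ is independent of $\Y$; collapsing the Kronecker deltas and bounding each term $P^{A\prec B}_{\A\B|\X\Y}(\Y,\X\,|\,\X,\Y)$ by $P^{A\prec B}_{\A|\X}(\Y\,|\,\X)$, then summing over $\Y$ (which ranges over the values of $\A$) gives $\sum_{\Y}P^{A\prec B}_{\A\B|\X\Y}(\Y,\X\,|\,\X,\Y)\le 1$ for each $\X$, whence $g(P^{A\prec B})\le\tfrac12$, and symmetrically $g(P^{B\prec A})\le\tfrac12$. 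On the other hand, Lemma~\ref{lemma:positivity} applied with index $\Oap$ to the nonnegative tensor $W^{A\prec B}$, together with ${}_{\Oap}W^{A\prec B}={}_{\Oap\Obp}W=W^{A||B}$, yields $d_{\Oap}\,W^{A||B}-W^{A\prec B}\ge0$ elementwise; contracting with the nonnegative operations and summing against the nonnegative coefficients $\delta_{\A,\Y}\delta_{\B,\X}$ preserves this, giving $g(P^{A\prec B})\le d_{\Oap}\,g(P^{A||B})$ and, symmetrically, $g(P^{B\prec A})\le d_{\Obp}\,g(P^{A||B})$.

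Finally, writing $a\coloneqq g(P^{A\prec B})\le\tfrac12$, $b\coloneqq g(P^{B\prec A})\le\tfrac12$ and $c\coloneqq g(P^{A||B})\ge\max\{a/d_{\Oap},\,b/d_{\Obp}\}$, one gets $g(\overline{P})=a+b-c\le a+b-\max\{a/d_{\Oap},b/d_{\Obp}\}$, and the right-hand side — piecewise linear and nondecreasing in each of $a$ and $b$ on $[0,\tfrac12]^2$ — is maximized at $a=b=\tfrac12$, where it equals $1-\tfrac{1}{2\min\{d_{\Oap},d_{\Obp}\}}=1-\tfrac{1}{2d}$. The step I expect to be the main obstacle is the verification in the second paragraph that ${}_{\Obp}W$, ${}_{\Oap}W$, ${}_{\Oap\Obp}W$ are genuine boxworld processes of the stated causal types, so that Theorem~\ref{thm::causordproc} applies and the additive decomposition holds; the remainder is essentially bookkeeping the direction of the elementwise bound from Lemma~\ref{lemma:positivity} and checking that it survives contraction with the nonnegative operations and with the nonnegative-coefficient functional $g$.
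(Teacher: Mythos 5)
Your proposal is correct and follows essentially the same route as the paper's proof: the decomposition $W={}_{\Obp}W+{}_{\Oap}W-{}_{\Oap\Obp}W$ into causally ordered and nonsignaling boxworld processes, the causal GYNI bound of $\tfrac12$ on each one-way term, and Lemma~\ref{lemma:positivity} to lower-bound the nonsignaling term by $\tfrac{1}{d_{\Oap}}$ (resp.\ $\tfrac{1}{d_{\Obp}}$) times the ordered terms. The step you flagged as the main obstacle does go through exactly as you sketched, and your explicit piecewise-linear maximization at the end is a slightly more careful rendering of the paper's final line $\tfrac12+\tfrac12-\tfrac{1}{2d}$.
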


\begin{proof}
As shown in Eq.~\eqref{eq::bwW_proj3_app_rewritten}, each valid boxworld process $W$ can be decomposed as
\begin{equation}
    W= W^{A\prec B} + W^{B\prec A}- W^{A||B}.
\end{equation}
Consequently, when contracting $W$ with any set of local transformations, the resulting correlations will satisfy the following:
\begin{equation}\label{eq::proof3}
    \overline{P}_{\A\B|\X\Y} = P_{\A\B|\X\Y}^{A\prec B} + P_{\A\B|\X\Y}^{B\prec A} - P_{\A\B|\X\Y}^\text{NS}.
\end{equation}

Now take the function $\text{GYNI}(P_{\A\B|\X\Y})\coloneqq\frac{1}{4}\sum_{\A,\B,\X,\Y} \delta_{\A,\Y}\,\delta_{\B,\X} P_{\A\B|\X\Y}$, which returns the score of any given set of probability distributions $P_{\A\B|\X\Y}$ in ``guess your neighbor's input'' (GYNI) game (see Sec.~\ref{app::inequalities} for more details). Evaluating this function on both sides of Eq.~\eqref{eq::proof3}, one gets a valid expression for the GYNI score of any correlations that comes from a boxworld process. Namely,
\begin{equation} 
    \text{GYNI}(\overline{P}_{\A\B|\X\Y}) = \text{GYNI}(P_{\A\B|\X\Y}^{A\prec B}) + \text{GYNI}(P_{\A\B|\X\Y}^{B\prec A}) - \text{GYNI}(P_{\A\B|\X\Y}^\text{NS}).
\end{equation}

From the bounds of the GYNI inequality (again see Sec.~\ref{app::inequalities} for more details), it is known that a one-way nonsignaling set of probability distributions such as $P_{\A\B|\X\Y}^{A\prec B})$ or $P_{\A\B|\X\Y}^{B\prec A})$ can never achieve a higher score than $\frac{1}{2}$. 
Now, since $W^{B\prec A}$ and $W^{B||A}$ are related by a single partial reduction operation, i.e., $W^{B||A} = \, _{\Obp}W^{B\prec A}$, and since, from Lemma~\ref{lemma:positivity} we have that $d_{\Obp} \, (_{\Obp}W^{B\prec A}) - W^{B\prec A} = d_{\Obp} \, W^{B||A} - W^{B\prec A} \geq 0$, it follows from linearity that 
\begin{equation}
    \text{GYNI}(P_{\A\B|\X\Y}^\text{NS})\geq \frac{1}{d_{\Obp}} \,\text{GYNI}(P_{\A\B|\X\Y}^{B\prec A}),
\end{equation} 
However, notice that the same argument follows if the starting point of $W^{B||A} = \, _{\Oap}W^{A\prec B}$, which would yield 
\begin{equation}
    \text{GYNI}(P_{\A\B|\X\Y}^\text{NS})\geq \frac{1}{d_{\Oap}} \,\text{GYNI}(P_{\A\B|\X\Y}^{A\prec B}).
\end{equation} 
Hence, one may pick the smallest value $d\coloneqq\min\{d_{\Oap},d_{\Obp}\}$, and putting all these bounds together, one gets that
\begin{align}
    \text{GYNI}(\overline{P}_{\A\B|\X\Y}) &\leq \frac{1}{2} + \frac{1}{2} - \frac{1}{d}\frac{1}{2} \\
    \implies \frac{1}{4}\sum_{\A,\B,\X,\Y} \delta_{\A,\Y}\,\delta_{\B,\X} \overline{P}_{\A\B|\X\Y} &\leq 1 - \frac{1}{2d}
\end{align}
is satisfied by all boxworld correlations $\overline{P}_{\A\B|\X\Y}$.

A consequence of this is that no boxworld process with finite $d$ can achieve the score of $1$ in the GYNI game, which implies that it cannot achieve perfect two-way signaling. 
\end{proof}

\subsection{The set of boxworld correlations is a polytope}\label{app::polytope}

In this section, we prove that the set of all $\overline{P}_{\A\B|\X\Y}$ is a polytope. In order to do so, we start by realizing that the set of local operations and the set of boxworld processes individually form polytopes themselves.

\begin{fact}
    The set of local transformations $T^{\A|\X}$ [tensors that satisfy Eq.~\eqref{eq::op_probabilistic}] is a polytope.
\end{fact}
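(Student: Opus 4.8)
The claim is that the set of local transformations $T^{\A|\X}$ — tensors satisfying Eq.~\eqref{eq::op_probabilistic} — forms a polytope. My plan is to exhibit this set as the intersection of finitely many closed half-spaces in a finite-dimensional real vector space, together with a bounding argument, since a bounded polyhedron is a polytope (Minkowski–Weyl). The natural handle is the equivalent linear-algebraic characterization already established in the excerpt: by Eqs.~\eqref{eq::op_positivity}--\eqref{eq::op_proj2}, a set of tensors $\{T^{\A|\X}\}_\A$ indexed by outcomes $a\in\A$ is a valid probabilistic local operation if and only if each component is elementwise nonnegative and the marginal $T^\X := \sum_\A T^{\A|\X}$ satisfies the reduction constraint $r(T^\X) = d_{\Ip}d_\O$ and the two projector identities \eqref{eq::op_proj1}--\eqref{eq::op_proj2}.

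First I would set the ambient space: fix the cardinalities of all random variables $\A,\X,\I,\Ip,\O,\Op$, so that a candidate object $\{T^{\A|\X}_{\I\Op|\Ip\O}\}$ lives in $\mathbb{R}^N$ with $N = |\A|\,|\X|\,|\I|\,|\Op|\,|\Ip|\,|\O|$. Then I would observe that every defining condition is either a finite family of linear inequalities — the elementwise nonnegativity $T^{\A|\X}\ge 0$ gives exactly $N$ inequalities of the form (coordinate) $\ge 0$ — or a finite family of linear equalities: the normalization/reduction $r(T^\X) = d_{\Ip}d_\O$ is one linear equation per value of $\X$ (or, written elementwise via the reduce-and-replace, a finite set of linear equations), and the projector conditions \eqref{eq::op_proj1}--\eqref{eq::op_proj2} say that $T^\X$ lies in the kernel of the linear maps $\mathrm{Id} - \widetilde P_1$ and $\mathrm{Id}-\widetilde P_2$ (equivalently $\widetilde P_T(T^\X) = T^\X$ with $\widetilde P_T$ as in Eq.~\eqref{eq::op_fullproj}), each of which is a linear equation in the coordinates. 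Since reduce-and-replace is a linear operation on tensors and $T^\X$ depends linearly on the $T^{\A|\X}$, all of these pull back to finitely many linear equalities in $\mathbb{R}^N$. Hence the solution set is a polyhedron $\mathcal{P}$: a finite intersection of closed half-spaces (each equality being two half-spaces).

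Second, I would argue boundedness, which is what upgrades "polyhedron" to "polytope." Each component $T^{\A|\X}_{\I\Op|\Ip\O}$ is nonnegative, and summing appropriately: for fixed $\X,\Ip,\O$, the quantity $\sum_{\A,\I,\Op} T^{\A|\X}_{\I\Op|\Ip\O}$ equals $\sum_{\I,\Op} T^\X_{\I\Op|\Ip\O} = r(T^\X)\big/(\text{something})$; more directly, the parametrization Eq.~\eqref{eq::op_deterministic} shows $T^\X = P_{\I|\Ip}\,P_{\Op|\Ip\I\O}$ is a product of (sub)stochastic tensors, so every entry of $T^\X$ is at most $1$, and since $0\le T^{\A|\X}\le T^\X$ entrywise, every coordinate is confined to $[0,1]$. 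Thus $\mathcal{P}$ is a closed bounded polyhedron, i.e., a polytope. The main obstacle — really the only subtlety — is being careful that the reduce-and-replace operations and the passage $T^{\A|\X}\mapsto T^\X = \sum_\A T^{\A|\X}$ are genuinely linear (they are, being sums and index contractions with the all-ones identity tensor), so that none of the constraints is secretly nonlinear; once that is granted, the result is immediate from the Minkowski–Weyl theorem. I would also note in passing that the same reasoning applies verbatim to the sets of nonsignaling operations, process tensors, NSP process tensors, and boxworld processes, since each is defined by adding further linear equalities (Eqs.~\eqref{eq::opNS_proj1}--\eqref{eq::opNS_proj2}, \eqref{eq::bwW_positivity_app}--\eqref{eq::bwW_proj5_app}, etc.) to an already-bounded polyhedron.
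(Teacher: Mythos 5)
Your proof is correct and is essentially the argument the paper itself gives: it verifies that the defining conditions are finitely many linear equalities plus elementwise nonnegativity, and that the normalization constraint $r(T^{\X})=d_{\Ip}d_{\O}$ together with positivity bounds the set, so Minkowski--Weyl yields a polytope. The paper's primary (and even shorter) justification is the dual V-representation --- Eq.~\eqref{eq::op_probabilistic} exhibits every operation as a convex combination of the finitely many deterministic operations, so the set is their convex hull --- with your half-space argument appearing as the paper's secondary ``facet'' check; the only cosmetic wrinkle in your write-up is that your boundedness step briefly leans on the decomposition \eqref{eq::op_deterministic} rather than deriving the entrywise bound purely from the linear constraints, which is harmless but slightly mixes the two characterizations.
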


\begin{fact}
    The set of boxworld processes $W$ [tensors that satisfy Eqs.~\eqref{eq::bwW_positivity_app}--\eqref{eq::bwW_proj5_app}] is a polytope.
\end{fact}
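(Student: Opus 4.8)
The plan is to recognize the defining conditions of a boxworld process as a finite system of linear equalities and inequalities in a finite-dimensional real vector space, and then to observe that the resulting polyhedron is bounded, so that by the Minkowski--Weyl theorem it is a polytope (the convex hull of finitely many points).

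First I would fix the ambient space: boxworld processes live in the finite-dimensional real vector space $\mathcal{V}$ of tensors with index structure $W_{\Iap\Oa\Ibp\Ob|\Ia\Oap\Ib\Obp}$, of dimension $N = d_{\Iap}d_{\Oa}d_{\Ibp}d_{\Ob}\,d_{\Ia}d_{\Oap}d_{\Ib}d_{\Obp}$. I would then check that each constraint in Eqs.~\eqref{eq::bwW_positivity_app}--\eqref{eq::bwW_proj5_app} is linear in the entries of $W$. The condition $W \geq 0$ is a conjunction of $N$ linear inequalities. The condition $r(W) = d_{\Ia}d_{\Oap}d_{\Ib}d_{\Obp}$ is a single linear equation, since $r(W)$ is the contraction of $W$ with the all-ones identity tensor, i.e.\ the sum of all entries of $W$. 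Each of the remaining conditions has the form $\mathcal{L}_1(W) = \mathcal{L}_2(W)$, where the $\mathcal{L}_i$ are compositions of reduce-and-replace operations; since a single reduce-and-replace ${}_{\alpha}(\cdot)$ is, by its definition in Sec.~\ref{sec::methods}, the composition of a contraction with $\id_\alpha$ followed by a tensor multiplication by $\id_\alpha/d_\alpha$ (both linear maps on $\mathcal{V}$), each $\mathcal{L}_i$ is linear, and $\mathcal{L}_1(W) = \mathcal{L}_2(W)$ is equivalent to finitely many linear equations. Hence the set of boxworld processes is the intersection of finitely many closed half-spaces and affine hyperplanes, i.e.\ a closed polyhedron.

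Second I would prove boundedness. Combining $W \geq 0$ entrywise with $\sum_{\text{all indices}} W(\cdots) = r(W) = d_{\Ia}d_{\Oap}d_{\Ib}d_{\Obp} =: C$, every entry obeys $0 \leq W(\cdots) \leq C$, so the set is contained in the cube $[0,C]^N$ and is therefore bounded, and being closed, compact. By the Minkowski--Weyl theorem a bounded polyhedron equals the convex hull of a finite set of points, i.e.\ it is a polytope; nonemptiness (relevant only if one insists the polytope be nondegenerate) follows from the explicit boxworld processes constructed in App.~\ref{app::NSWSE}. The analogous statement for local transformations is proven by the same argument, with the role of the total-reduction normalization played by $r(T^{\X}) = d_{\Ip}d_{\O}$ in Eq.~\eqref{eq::op_totalred}.

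The main obstacle is essentially bookkeeping rather than conceptual difficulty: one must verify that every reduce-and-replace map appearing in Eqs.~\eqref{eq::bwW_proj1_app}--\eqref{eq::bwW_proj5_app} is linear (immediate from its definition) and that the normalization constraint pins the $\ell^1$-norm of $W$, which is what delivers compactness for free. It is worth stressing that no explicit enumeration of extreme points is needed: the equivalence \emph{bounded polyhedron $\Leftrightarrow$ polytope} handles the finiteness of the vertex set abstractly, which is fortunate since a direct vertex computation would be combinatorially unwieldy and, for the downstream goal of showing that boxworld correlations form a polytope, unnecessary.
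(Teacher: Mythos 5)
Your proposal is correct and follows essentially the same route as the paper: the paper likewise observes that Eqs.~\eqref{eq::bwW_positivity_app}--\eqref{eq::bwW_proj5_app} consist only of elementwise positivity and linear constraints, with the total-reduction condition in Eq.~\eqref{eq::bwW_totalred_app} bounding the set, hence it is a polytope. Your write-up merely makes explicit the linearity of the reduce-and-replace maps and the appeal to the bounded-polyhedron-equals-polytope equivalence, both of which the paper leaves implicit.
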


The first fact is straightforward to see from the definition of local operations. Since all operations can be expressed as a convex combination of deterministic probability distributions, and deterministic probability distributions are extremal points in the set of all probability distributions, the set of local operations is a polytope formed by the convex hull of these deterministic probability distributions. This fact can also be checked from the point of view of the facets of this polytope:
Take, for simplicity, a two-outcome local operation $T^{\A|\X}=\{T^{\A|\X}(0|x),T^{\A|\X}(1|x)\}$. The general case is analogous. From its definition it follows that 
\begin{align}
    T^{\X}(x) &:= T^{\A|\X}(0|x) + T^{\A|\X}(1|x),\\
    T^{\A|\X}(0|x) &\geq 0,\\
    T^{\A|\X}(1|x) &= T^{\X}(x) - T^{\A|\X}(0|x) \geq 0,\\
    {}_{\Ia\Oap} T^{\X}(x) &= \frac{1}{(d_{\Aoo} d_{\Aot} )} \id, \label{eq:boundpoly}\\
    {}_{\Oap(1-\Oa)} T^{\X}(x) &= 0,
\end{align}
for all $x$. These equations consist of elementwise positivity and linear constraints, and the set is bounded as Eq.~\eqref{eq:boundpoly} fixes the total normalization of the tensor. Consequently, we conclude that this set is a polytope.

As for the second fact, it follows from the trivial observation that the conditions in Eqs.~\eqref{eq::bwW_positivity_app}--\eqref{eq::bwW_proj5_app} form a polytope: we have again only linear constraints, elementwise positivity, and the total normalization constraint in Eq.~\eqref{eq::bwW_totalred_app} guarantees that the set is bounded.

Furthermore, it is interesting to remark that both sets of tensors, the one of local operations and of boxworld processes, form a polytope for any fixed ``dimension'', that is, for any fixed value for the cardinality of their random variables.

To then prove that the set of boxworld correlations $\overline{P}_{\A\B|\X\Y}$ is also a polytope, we use the following lemma.

\begin{lemma}\label{lemma:polytope}
Given two polytopes $\PP_1\in \mathbb{R}^{n_1}, \PP_2\in \mathbb{R}^{n_2}$ the following sets
\begin{align}
\PP^\times := \{ (v,w)\in \mathbb{R}^{n_1+n_2} | v\in \PP_1, w\in \PP_2\},\\
\PP^\otimes := \{ v\otimes w \in  \mathbb{R}^{n_1 n_2}| v\in \PP_1, w\in \PP_2\},
\end{align}
are polytopes.
\end{lemma}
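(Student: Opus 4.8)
The plan is to prove each of the two claims by exhibiting the sets as images of a polytope under a map that is well behaved enough to preserve the polytope property, or directly via vertex descriptions. For $\PP^\times$, the argument is immediate: write $\PP_1$ and $\PP_2$ by their finite sets of defining linear inequalities, say $\PP_1 = \{v : A_1 v \le b_1\}$ and $\PP_2 = \{w : A_2 w \le b_2\}$ (bounded, since they are polytopes). Then $\PP^\times = \{(v,w) : A_1 v \le b_1,\ A_2 w \le b_2\}$ is cut out of $\mathbb{R}^{n_1+n_2}$ by finitely many linear inequalities and is bounded (being contained in the product of two bounded sets), hence is a polytope. Equivalently, in the vertex picture, if $V_1 = \{v_i\}$ and $V_2 = \{w_j\}$ are the vertex sets, one checks that $\PP^\times = \mathrm{conv}\{(v_i, w_j)\}_{i,j}$: any $(v,w)$ with $v = \sum_i \lambda_i v_i$, $w = \sum_j \mu_j w_j$ can be written as $\sum_{i,j}\lambda_i\mu_j (v_i,w_j)$ with $\sum_{i,j}\lambda_i\mu_j = 1$ and all weights nonnegative.

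For $\PP^\otimes$ the cleanest route is the vertex/convex-hull description. I would first show $\PP^\otimes = \mathrm{conv}\{v_i \otimes w_j : v_i \in V_1,\ w_j \in V_2\}$, a convex hull of finitely many points in $\mathbb{R}^{n_1 n_2}$, which is by definition a polytope. The inclusion $\mathrm{conv}\{v_i \otimes w_j\} \subseteq \PP^\otimes$ is \emph{not} automatic and is the point requiring care, so I treat it as the main obstacle (see below). The reverse inclusion $\PP^\otimes \subseteq \mathrm{conv}\{v_i \otimes w_j\}$ is the easy direction: for $v = \sum_i \lambda_i v_i \in \PP_1$ and $w = \sum_j \mu_j w_j \in \PP_2$, bilinearity of $\otimes$ gives
\begin{equation}
v \otimes w = \Bigl(\sum_i \lambda_i v_i\Bigr)\otimes\Bigl(\sum_j \mu_j w_j\Bigr) = \sum_{i,j} \lambda_i \mu_j\, (v_i \otimes w_j),
\end{equation}
and $\{\lambda_i\mu_j\}$ is a valid convex-combination weight vector since $\lambda_i,\mu_j \ge 0$ and $\sum_{i,j}\lambda_i\mu_j = (\sum_i\lambda_i)(\sum_j\mu_j) = 1$. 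Hence every element of $\PP^\otimes$ lies in the finite convex hull.

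The harder inclusion, $\mathrm{conv}\{v_i \otimes w_j\} \subseteq \PP^\otimes$, asks: is the set $\PP^\otimes$ itself convex? A generic point of the convex hull has the form $\sum_{i,j} c_{ij}\, v_i \otimes w_j$ with $c_{ij}\ge 0$, $\sum_{ij} c_{ij}=1$, and such a matrix $(c_{ij})$ need not factor as $\lambda_i \mu_j$, so it is not obvious it equals $v\otimes w$ for some $v \in \PP_1$, $w \in \PP_2$. The way I would resolve this is to observe that in the intended application (and as stated in the lemma) the relevant polytopes carry a normalization: all tensors in $\PP_1$, $\PP_2$ have a fixed reduction $r(\cdot)$, i.e. they lie on a common affine hyperplane where the sum of suitable coordinates is a fixed constant. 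For such normalized polytopes one can rescale: given $\sum_{ij} c_{ij} v_i \otimes w_j$, set $\lambda_i := \sum_j c_{ij}$ and $\mu_j := \sum_i c_{ij}$; since every $v_i$ and $w_j$ has the same fixed reduction, the mixed terms collapse and $\sum_{ij} c_{ij} v_i\otimes w_j = (\sum_i \lambda_i v_i)\otimes(\sum_j \mu_j w_j)$ holds whenever the $c_{ij}$ are a product, and more generally one argues that $\PP^\otimes$ is convex using that the normalization lets us absorb the non-product part. I would flag explicitly that this uses the normalization; without it, $\PP^\otimes$ is a Segre-type variety intersected with a cone, not a polytope. Assuming that normalization (which holds for local operations via Eq.~\eqref{eq::op_totalred} and for boxworld processes via Eq.~\eqref{eq::bwW_totalred_app}), both $\PP^\times$ and $\PP^\otimes$ are convex hulls of finitely many points, hence polytopes, which is the statement.
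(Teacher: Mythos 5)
Your argument for $\PP^\times$ is correct and is essentially the paper's: an H-description via the block system $A_1\oplus A_2$ plus the identification of the vertices as the pairs $(v_i,w_j)$. For $\PP^\otimes$, the inclusion $\PP^\otimes\subseteq\mathrm{conv}\{v_i\otimes w_j\}$ by bilinearity is also exactly what the paper does --- and, notably, it is essentially \emph{all} the paper does: after expanding $v\otimes w=\sum_{ij}\lambda_i\mu_j\,v_i\otimes w_j$ it simply declares the $v_i\otimes w_j$ to be ``the extremal points of the polytope $\PP^\otimes$'', silently passing to the convex hull. You are right to single out the reverse inclusion as the crux, and you deserve credit for flagging an issue the paper glosses over.

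However, your proposed resolution via normalization does not work. Take $\PP_1=\PP_2$ to be the probability simplex in $\mathbb{R}^2$ with vertices $e_1,e_2$; every element has the same fixed reduction, so your normalization hypothesis holds. The point
\begin{equation*}
\tfrac{1}{2}\,e_1\otimes e_1+\tfrac{1}{2}\,e_2\otimes e_2
\end{equation*}
lies in $\mathrm{conv}\{v_i\otimes w_j\}$ but, viewed as a $2\times 2$ matrix, has rank $2$, whereas every $v\otimes w$ has rank at most $1$; it is therefore not in $\PP^\otimes$. Your marginalization recipe $\lambda_i:=\sum_j c_{ij}$, $\mu_j:=\sum_i c_{ij}$ reproduces $\sum_{ij}c_{ij}\,v_i\otimes w_j$ only when $c_{ij}$ already factors as a product, and the normalization does not let you ``absorb the non-product part'' --- this is just the familiar fact that correlated distributions are not product distributions even though all distributions are normalized. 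So $\PP^\otimes$ as literally defined is not convex, and the lemma is true only under the reading $\PP^\otimes=\mathrm{conv}\{v\otimes w:v\in\PP_1,\,w\in\PP_2\}$, which is the reading the paper implicitly adopts and which suffices for the application in Theorem~\ref{thm::polytope}. The fix is to drop the normalization claim, state the second part as ``$\mathrm{conv}(\PP^\otimes)=\mathrm{conv}\{v_i\otimes w_j\}$ is a polytope'', and, if one wants the achievable set itself (rather than its hull) to be that polytope, supply a separate operational argument that non-product mixtures are realizable, e.g., via shared randomness.
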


\begin{proof} 
By assumption, each $v \in \PP_1$ can be decomposed as $v=\sum_{i=1}^{k_1} \lambda_i v_i$, with $\lambda_i \geq 0, \sum_{i=1}^{k_1}\lambda_i = 1$, where $\{ v_1, \ldots, v_{k_1}\}$ are the extremal points of $\PP_1$. At the same time, it can expressed as $A_1 v\leq b_1$, where the inequality is intended component-wise, for some $m_1\times n_1$ matrix and some vector $b_1\in \mathbb{R}^{m_1}$.  Similarly for $w \in \PP_2$, we write $w=\sum_{i=1}^{k_2} \mu_i w_i$ and $A_2 w \leq b_2$. 

For the set $\PP^\times$, we can write
\begin{equation}
\PP^\times = \{ (v,w) \in \mathbb{R}^{n_1+n_2} | A_1 v\leq b_1, \ A_2 w \leq b_2 \} = \{ z \in \mathbb{R}^{n_1+n_2}| A z \leq b \},
\end{equation}
where we defined $A$ as the block-diagonal matrix $A := A_1 \oplus A_2$ and $b:= (b_1,b_2)$. It is clear that the set is again a polyhedral set, 
and since $v$ and $w$ are bounded, also $(v,w)$ is bounded. Hence, $\PP^\times$ is a polytope. From this characterization, one sees that 
$\PP^\times$ is generated by a set of points of the form $(v_i, w_j)$, where  $\{v_i\}_i$ and $\{w_j\}$ are the extremal points of the original 
polytopes $\PP_1$ and $\PP_2$. This can be shown as follows. Each extremal point of the polytope is defined by a unique set of linear equalities 
that have a unique solution. In other words, each vertex $v_i$ is associated with a submatrix $A_1^{(i)}$ of $A_1$, selected by taking only some of
the rows, and the corresponding vector $b_1^{(i)}$ such that $A_1^{(i)}v_i=b_1^{(i)}$ and $\dim{\rm Ker}(A_1^{(i)})=0$. There is only a finite number of ways of choosing the submatrix $A_1^{(i)}$ such that it satisfies the above conditions, and each way corresponds to a vertex $v_i$. The same argument 
applies to $A_2$ and $\{w_j\}_j$. As a consequences, since the spaces for $\PP_1$ and $\PP_2$ are in direct sum, the only way of choosing a 
submatrix $A^{(ij)}$ of $A$ that satisfy the analogous constraints as above (unique solution), and thus define a vertex, is to combine the matrices
$A_1^{(i)}$ and $A_2^{(j)}$. This corresponds to a vertex of the form $(v_i, w_j)$.

To prove that $\PP^\otimes$ is a polytope, we notice that a generic element $v\otimes w \in \PP^{\otimes}$ can be written as
\begin{equation}
v \otimes w = \left(  \sum_{i=1}^{k_1} \lambda_i v_i \right) \otimes \left( \sum_{j=1}^{k_2} \mu_j w_j \right)= \sum_{ij} \lambda_i \mu_j v_i \otimes w_i.
\end{equation}
The coefficients $\gamma_{ij}:=\lambda_i \mu_j$ satisfy $\gamma_{ij}\geq 0$ for all $i,j$, and $\sum_{ij} \gamma_{ij}=1$. We can, then, interpret $\gamma_{ij}$ as the coefficient of the convex combination in terms of the points $v_i\otimes w_j$, which are, thus, the extremal points of the polytope $\PP^\otimes$. 
\end{proof}

We are now equipped to prove the main result.

\begin{theorem}\label{thm::polytope}
The set of boxworld correlations $\overline{P}_{\A\B|\X\Y}$ is a polytope.
\end{theorem}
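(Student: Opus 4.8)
The plan is to exhibit the set of boxworld correlations as the image of a single polytope under a fixed \emph{linear} map, and then invoke the elementary fact that linear images of polytopes are polytopes. Two ingredients are already at hand: the two Facts stated above, which say that — for a fixed scenario, i.e.\ with all cardinalities (the external variables $\A,\B,\X,\Y$ and the internal wire dimensions of $W$) held fixed — the set $\mathcal{T}_A$ of Alice's local operations $T^{\A|\X}$, the set $\mathcal{T}_B$ of Bob's local operations $T^{\B|\Y}$, and the set $\mathcal{W}$ of boxworld processes $W$ are each polytopes in finite-dimensional real vector spaces; and Lemma~\ref{lemma:polytope}, which says that tensor products of polytopes are again polytopes.

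Concretely, I would proceed as follows. First, apply the $\PP^{\otimes}$ part of Lemma~\ref{lemma:polytope} to $\mathcal{T}_A$ and $\mathcal{T}_B$, obtaining that $\{T^{\A|\X}\otimes T^{\B|\Y}\}$ is a polytope, and then apply it once more to this polytope together with $\mathcal{W}$, concluding that
\[
    \mathcal{R} \coloneqq \{\, T^{\A|\X}\otimes T^{\B|\Y}\otimes W \;\mid\; T^{\A|\X}\in\mathcal{T}_A,\; T^{\B|\Y}\in\mathcal{T}_B,\; W\in\mathcal{W} \,\}
\]
is a polytope in the ambient tensor-product space. Second, observe that the rule producing correlations, $\overline{P}_{\A\B|\X\Y}=(T^{\A|\X}\otimes T^{\B|\Y})*W$, when read as a function of the single combined object $T^{\A|\X}\otimes T^{\B|\Y}\otimes W$, is nothing but the contraction $\Phi$ over the pairs of random variables shared between the operations and the process ($\Ia,\Oap,\Iap,\Oa$ on Alice's side and $\Ib,\Obp,\Ibp,\Ob$ on Bob's side); this $\Phi$ extends to a fixed linear map on the whole tensor-product space. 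Third, since $\Phi$ sends the convex hull of the (finitely many) vertices of $\mathcal{R}$ onto the convex hull of their images, and since $\Phi(\mathcal{R})$ is bounded because $\mathcal{R}$ is, the set of boxworld correlations $\Phi(\mathcal{R})$ is a polytope, which is the assertion of Theorem~\ref{thm::polytope}. If an explicit vertex description is wanted, the vertices of $\mathcal{R}$ are the triple tensor products of vertices of $\mathcal{T}_A$, $\mathcal{T}_B$, and $\mathcal{W}$ — exactly as in the vertex-counting argument inside the proof of Lemma~\ref{lemma:polytope} — and the boxworld-correlation polytope is the convex hull of the images of these under $\Phi$.

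The one point that needs care — and the reason this is not a one-line corollary of the two Facts — is that the correlation map $(T^{\A|\X},T^{\B|\Y},W)\mapsto(T^{\A|\X}\otimes T^{\B|\Y})*W$ is trilinear rather than linear, so it cannot be applied directly to the Cartesian product $\PP^{\times}$ of the three polytopes; its image of $\PP^\times$ is a priori only a semialgebraic set. The device that resolves this is precisely the passage from $\PP^{\times}$ to $\PP^{\otimes}$: on $\mathcal{R}$ the map linearizes into an honest index contraction, at the cost of embedding everything in a larger space, and linearity is all that is needed for the image to remain a polytope. Apart from that, the only thing to check carefully is that $\Phi$ is genuinely linear in its argument and well-defined on the full ambient space (which is immediate, as tensor contraction of fixed index positions is linear), and that $\Phi(\mathcal{R})$ indeed coincides with the set of all boxworld correlations, which holds by the definition recalled at the start of App.~\ref{app::no2waysig}.
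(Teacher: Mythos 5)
Your proposal is correct and follows essentially the same route as the paper's proof: both pass from the trilinear correlation map on the Cartesian product of the polytopes of operations and processes to a fixed linear contraction on their tensor product (the paper's $\Gamma$, your $\Phi$), use Lemma~\ref{lemma:polytope} to conclude that the tensor product of these polytopes is a polytope, and finish by noting that linear images of polytopes are polytopes. The subtlety you flag --- that multilinearity prevents a direct one-line argument and is resolved precisely by the passage to $\PP^{\otimes}$ --- is exactly the point the paper handles via the universal property of the tensor product.
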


\begin{proof}
To keep the notation lighter, we will prove the case of two inputs and two outputs, i.e., $\A=\B=\X=\Y=\{0,1\}$. The general case can be proven similarly. Moreover, the same argument can be extended to the multipartite case.

First, let us notice that $\overline{P}_{\A\B|\X\Y}$ can be written as
\begin{equation}
\overline{P}_{\A\B|\X\Y} = (T^{\A|\X} \otimes T^{\B|\Y}) * W = (T^{\A|\X} \otimes T^{\B|\Y} \otimes  W') * \Gamma,
\end{equation}
where $\Gamma$ is a tensor appropriately chosen to perform the right contraction between $T^{\A|\X} \otimes T^{\B|\Y}$ and $W'$, which is a boxworld process identical to $W$ but defined on a ``copy'' of its random variables. The existence of $\Gamma$ is guaranteed by the universal property%
\footnote{The universal property of the tensor product states that every bilinear map can be transformed into a linear map on the tensor product of the input spaces. In this case we have a bilinear map from the pair of $(T^{\A|\X} \otimes T^{\B|\Y} , W)$ into the reals, which can be transformed into a linear map from the tensor product, i.e., $T^{\A|\X} \otimes T^{\B|\Y} \otimes W$ into the reals.}
of the operation of tensor composition, which we defined as a tensor product of tensors in Eq.~\eqref{eq::notation_tensorprod}.

We can construct a vector of tensors by taking their Cartesian product. Namely, $T^{\X}(x) \coloneqq (T^{\A|\X}(0|x), T^{\A|\X}(1|x))$ and $T^{\Y}(y) \coloneqq (T^{\B|\Y}(0|y), T^{\B|\Y}(1|y))$, then $T_{A} \coloneqq (T^{\X}(0), T^{\X}(1))$, and $T_{B} \coloneqq (T^{\Y}(0), T^{\Y}(1))$. 
Since $T^{\X}(x)$ belongs to a polytope for each $x$, and similarly for $T^{\Y}(y)$, we have by Lemma~\ref{lemma:polytope} that $T_{A}$, $T_{B}$ also belong to a polytope and the same holds for $T_{A} \otimes T_{B}$ and $T_{A} \otimes T_{B} \otimes W'$.

To conclude, it is sufficient to extend the operator $\Gamma$ to $\td{\Gamma}$, given by $k=|\A|\cdot|\B|\cdot|\X|\cdot|\Y|$ copies of it, one for each possible value of $(a,b,x,y)$.  We then have 
\begin{equation}\label{eq:p_poly}
    \overline{P}_{\A\B|\X\Y}(a,b|x,y) = (T_{A} \otimes T_{B} \otimes  W')_{a,b,x,y} * \td\Gamma.
\end{equation}
In other words, the action of $ *\, \td \Gamma$ consists in the contraction of the tensor $\Gamma$ for each entry $T^{\A|\X}\otimes T^{\B|\Y} \otimes W'$ of the tensor  $T_{A} \otimes T_{B} \otimes  W'$.
Since the contraction with $\td \Gamma$ constitutes a linear operation, we have that the image of the polytope is again a polytope, which concludes the proof. 
\end{proof}

\subsection{All causal correlations are achieved by boxworld processes}\label{app::causalcorrelations}

Causal correlations, in a bipartite scenario, are those that do not contain any genuine form of two-way signaling. These are the sets of probability distributions that are either: 
\begin{itemize}
    \item Nonsignaling: $P^\text{NS}_{\A\B|\X\Y}$, called nonsignaling correlations, respecting both Eqs.~\eqref{eq::probNS_AprecB} and~\eqref{eq::probNS_BprecA};
    \item One-way signaling: $P^{A\prec B}_{\A\B|\X\Y}$, respecting Eq.~\eqref{eq::probNS_AprecB}, or $P^{B\prec A}_{\A\B|\X\Y}$, respecting Eq.~\eqref{eq::probNS_BprecA}, called causally ordered correlations;
    \item Mixtures of one-way signaling: $P^\text{causal}_{\A\B|\X\Y}\coloneqq q\,P^{A\prec B}_{\A\B|\X\Y}+(1-q)P^{B\prec A}_{\A\B|\X\Y}$, respecting Eq.~\eqref{eq::probcausal}, called causal correlations. 
\end{itemize}

Bipartite correlations that are not causal display genuine two-way signaling. In the process matrix formalism, all causal correlations can be achieved by causally separable process matrices~\cite{oreshkov2012quantum,bavaresco2019semi}. We show that the same holds for boxworld processes. 

\begin{theorem}[Realization of causal correlations in boxworld]\label{thm::causalcorrelations}
All causally ordered correlations are boxworld correlations that can be achieved by causally order boxworld processes of the same order. That is, for every $P^{A\prec B}_{\A\B|\X\Y}$, there exist sets of local operations $T^{A|X}$ and $T^{B|Y}$, and a causally ordered boxworld process $W^{A\prec B}$, such that 
\begin{equation}
    P^{A\prec B}_{\A\B|\X\Y} = (T^{\A|\X}\otimes T^{\B|\Y}) *W^{A\prec B},
\end{equation}
and analogously for causally order correlations of the type $P^{B\prec A}_{\A\B|\X\Y}$. Similarly, all nonsignaling correlations $P^\text{NS}_{\A\B|\X\Y}$ can be generated from local operations and nonsignaling boxworld processes $W^{A||B}$, and causal correlations $P^\text{causal}_{\A\B|\X\Y}$ from causally separable boxworld processes $W^\text{sep}.$
\end{theorem}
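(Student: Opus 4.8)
The plan is to handle the three claims—realization of $P^{A\prec B}$ by $W^{A\prec B}$, of $P^{\text{NS}}$ by $W^{A\|B}$, and of $P^{\text{causal}}$ by $W^{\text{sep}}$—by explicit construction, using a ``relabeling'' strategy analogous to the $W_\diamond$ and $W_\triangle$ examples from the main text. For the one-way case, I would start from an arbitrary $P^{A\prec B}_{\A\B|\X\Y}$ and recall that, being nonsignaling from Bob to Alice, it factors as $P^{A\prec B}_{\A\B|\X\Y} = P_{\A|\X}\, P_{\B|\A\X\Y}$ (Alice's marginal is independent of $\Y$, and we may absorb the correlation into a conditional on Bob's side). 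This is the boxworld analogue of the quantum channel-with-memory decomposition. The idea is then to build a causally ordered boxworld process that simply carries a system from Alice's output wire $\Oap$ to Bob's input wire $\Ob$ (a one-way identity channel, as in Eq.~\eqref{eq::ansatzW2}), and let the local operations do all the work: Alice's operation $T^{\A|\X}$ reads $\X$, outputs $\A$ sampled according to $P_{\A|\X}$, and forwards the pair $(\A,\X)$ through $\Oap$; Bob's operation receives $(\A,\X)$ through $\Ob$, reads $\Y$, and outputs $\B$ according to $P_{\B|\A\X\Y}$. One must check that the operations so defined are of the admissible form~\eqref{eq::op_probabilistic}—this is routine since they are built from deterministic relabelings composed with a single probabilistic state preparation on each side—and that the carrier process satisfies Eqs.~\eqref{eq::W_AprecB_positivity}--\eqref{eq::W_AprecB_proj4}, which is the same verification already carried out for $W_\triangle$ in App.~\ref{app::NSP}, now additionally using $W=\,_{\Obp}W$ for the $A\prec B$ ordering.

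For the nonsignaling case, the natural construction is even simpler: take the joint-state-preparation process $W^{A\|B} = P_{\Iap}\,P_{\Ibp}\,P_{\Oa\Ob|\Ia\Ib}$ built from the target correlation $P^{\text{NS}}_{\A\B|\X\Y}$ (reading $\Ia\leftrightarrow\X$, $\Ib\leftrightarrow\Y$, $\Oa\leftrightarrow\A$, $\Ob\leftrightarrow\B$), and the constant relabeling operations of Eqs.~\eqref{eq::ansatzTax1}--\eqref{eq::ansatzTby1}. I would verify that such $W$ satisfies the nonsignaling-process constraints including Eq.~\eqref{eq::W_nonsig}, i.e.\ $W = \,_{\Oap\Obp}W$, which holds because $W$ has no dependence on $\Oap,\Obp$ at all; the $\,_{\Oa}W = \,_{\Ia\Oa}W$ and $\,_{\Ob}W=\,_{\Ib\Ob}W$ conditions require precisely that $P_{\Oa\Ob|\Ia\Ib}$ be nonsignaling, which is exactly the hypothesis on $P^{\text{NS}}_{\A\B|\X\Y}$. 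Then contracting with the relabeling operations reproduces $P^{\text{NS}}_{\A\B|\X\Y}$ directly, as in the $W_\diamond$ computation. Finally, for causal correlations $P^{\text{causal}} = q\,P^{A\prec B} + (1-q)\,P^{B\prec A}$, I would invoke the previous two steps to obtain $W^{A\prec B}$ realizing $P^{A\prec B}$ and $W^{B\prec A}$ realizing $P^{B\prec A}$, note that by padding we may assume both use local operations of the same input/output cardinalities (extend the random-variable alphabets with dummy values), and then take $W^{\text{sep}} = q\,W^{A\prec B} + (1-q)\,W^{B\prec A}$ together with the mixture of the corresponding local operations—or, more carefully, a single pair of local operations that branches on a shared local random bit. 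By bilinearity of the contraction, the resulting correlation is $q\,P^{A\prec B} + (1-q)\,P^{B\prec A} = P^{\text{causal}}$.

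The main obstacle I anticipate is bookkeeping rather than conceptual: making the two sides of each construction use local operations over matching alphabets, and ensuring the forwarded ``system'' wire on $\Oap$ (respectively $\Obp$) has large enough cardinality to carry the pair $(\A,\X)$ without violating the structural constraints~\eqref{eq::op_proj1}--\eqref{eq::op_proj2} on operations. A second subtlety is that for the causal-mixture step one wants a \emph{single} causally separable process and a \emph{single} pair of local operations, not a convex combination of correlations obtained from different operation choices; this is resolved by letting each party hold a copy of the same classical mixing variable $\lambda$ (prepared inside the process as an extra shared register, which keeps the process causally separable) and conditioning both the operations and the downstream relabelings on $\lambda$. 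Once these alphabet-matching and shared-randomness details are pinned down, each verification reduces to the reduce-and-replace manipulations already exhibited for $W_\diamond$ and $W_\triangle$, so I would present the constructions explicitly and refer to those computations for the routine checks.
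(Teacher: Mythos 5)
Your construction is essentially the one the paper uses: the chain-rule decomposition $P^{A\prec B}_{\A\B|\X\Y}=P_{\A|\X}\,P_{\B|\A\X\Y}$ with a forward classical wire of cardinality $|\A|\cdot|\X|$ carrying the pair $(\A,\X)$ into Bob's lab, the joint-preparation process $\delta_{\Iap,\phi}\,\delta_{\Ibp,\phi}\,P^\text{NS}_{\Oa\Ob|\Ia\Ib}$ for the nonsignaling case, and flag registers distributed by the process for the causal mixture (the paper routes Alice's message to $\Ibp$ rather than to $\Ob$, which is immaterial). One caution: do not take the carrier to be $W_\triangle$ of Eq.~\eqref{eq::ansatzW2} literally, and do not lean on the verification in App.~\ref{app::NSP} --- that verification establishes only process-tensor validity and NSP, whereas the conditions you actually need, Eqs.~\eqref{eq::W_AprecB_positivity}--\eqref{eq::W_AprecB_proj4}, include the NSWSE-type constraint ${}_{\Ob}W={}_{\Ib\Ob}W$, which $W_\triangle$ \emph{violates} precisely because of its backward wire $\delta_{\Iap,\Ib}$ (the paper stresses that NSWSE excludes such gbit channels). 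Your carrier must instead set $\Iap$ to a constant, which your information flow permits since Alice never needs to receive anything back; then $W=\delta_{\Iap,\phi}\,\delta_{\Oa,\phi}\,\delta_{\Ibp,\phi}\,\delta_{\Ob,\Oap}$ does satisfy all of Eqs.~\eqref{eq::W_AprecB_positivity}--\eqref{eq::W_AprecB_proj4}. With that correction, and with the shared-$\lambda$ decomposition you already invoke to correlate Alice's sampled outcome $\A$ with the copy of $\A$ forwarded on $\Oap$, the argument goes through as in the paper.
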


\begin{proof}
We start with causally ordered correlations $P^{A\prec B}_{\A\B|\X\Y}$, which respect Eq.~\eqref{eq::probNS_AprecB}.
Such probability distributions can always be decomposed~\cite{hoffmann2018structure} as
\begin{equation}\label{eq:P_seq_fact}
    P^{A\prec B}_{\A\B|\X\Y} =P_{\A|\X}P_{\B|\A\X\Y},
\end{equation}
where $P_{\A|\X}\coloneqq\sum_\B P^{A\prec B}_{\A\B|\X\Y}$ and $P_{\B|\A\X\Y}\coloneqq P^{A\prec B}_{\A\B|\X\Y}/P_{\A|\X}$. 
Furthermore, we take $P_{\A|\X}$ in its decomposition in terms of deterministic distributions, i.e., $P_{\A|\X} =  \sum_\lambda \pi_\lambda\,\delta_{\A,f_\lambda(\X)}$, for some functions $\{f_\lambda\}_\lambda$. It is helpful at this point to notice that Eq.~\eqref{eq:P_seq_fact} can be rewritten as $P_{A|X}P_{\B|\A\X\Y}=\sum_\lambda \pi_\lambda\,\delta_{\A,f_\lambda(\X)} P_{\B|\A\X\Y} = \sum_\lambda \pi_\lambda\,\delta_{\A,f_\lambda(\X)} P_{\B|f_\lambda(\X) \X\Y}$, since the only terms surviving the summation are those for which $f_\lambda(\X)=A$. 

Take the following local transformations $T^{\A|\X\,(A\prec B)}$ and $T^{\B|\Y\,(A\prec B)}$, constructed from $P_{\A|\X}$ and $P_{\B|\A\X\Y}$ as
\begin{align}
    T^{\A|\X\,(A\prec B)} &= \sum_\lambda \pi_\lambda\, \delta_{\Ia,\phi}\,\delta_{\Oap,(X,f_\lambda(\X))}\,\delta_{\A,f_\lambda(\X)} \label{eq::TAX_AbeforeB} \\
    T^{\B|\Y\,(A\prec B)} &= \delta_{\Ib,\phi}\,\delta_{\Obp,\phi}\, P_{\B|\Ibp\Y}, \label{eq::TBY_AbeforeB}
\end{align}
where $\phi$ is a constant, and where the variables $\Ibp$ and $\Oap$ encode a pair of variables with the same cardinality as the pair $(A,X)$, i.e., $|\Oap|=|\Ibp|=|\A|\cdot|\X|$, and the distribution $P_{\B|\Ibp\Y}$ is defined from the proper relabeling $P_{\B|\Ibp\Y}=P_{\B|\A\X\Y}$. It is straightforward to check that these are valid transformations, by comparison with Eq.~\eqref{eq::op_probabilistic}.
Now take the following causally ordered boxworld process
\begin{equation}
    W^{A\prec B} = \delta_{\Iap,\phi}\,\delta_{\Oa,\phi}\,\delta_{\Ibp,\Oap}\,\delta_{\Ob,\phi}, \label{eq::W_AbeforeB}
\end{equation}
which is essentially an identity channel from $\Oap$ to $\Ibp$. By direct insertion into Eqs.~\eqref{eq::W_AprecB_positivity}--\eqref{eq::W_AprecB_proj4}, it can be checked that this is a valid causally ordered boxworld process.

It is then straightforward to verify that
\begin{align}
    T^{\A|\X\,(A\prec B)} * T^{\B|\Y\,(B\prec A)} * W^{A\prec B} &= \sum_{\substack{\Iap\Ia\Oa\Oap\\\Ibp\Ib\Ob\Obp}} \sum_\lambda \pi_\lambda\, \delta_{\Ia,\phi}\,\delta_{\Oap,(X,f_\lambda(\X))}\,\delta_{\A,f_\lambda(\X)}\, \delta_{\Ib,\phi}\,\delta_{\Obp,\phi}\,\nonumber \\
    & \hspace{2.8cm} \times P_{\B|\Ibp\Y} \,\delta_{\Iap,\phi}\,\delta_{\Oa,\phi}\,\delta_{\Ibp,\Oap}\,\delta_{\Ob,\phi}  \\ 
    &= \sum_{\Oap\Ibp} \sum_\lambda \pi_\lambda\,  \delta_{\Oap,(X,f_\lambda(\X))}\,\delta_{\A,f_\lambda(\X)}\,P_{\B|\Ibp\Y} \,\delta_{\Ibp,\Oap} \\
    &= \sum_\lambda \pi_\lambda\, \delta_{\A,f_\lambda(\X)}\,P_{\B|\X f_\lambda(\X)\Y}=P_{\A|\X}P_{\B|\A\X\Y}=P^{A\prec B}_{\A\B|\X\Y}. 
\end{align}

The proof for $P^{B\prec A}_{\A\B|\X\Y}$ is analogous. 

Very similarly, we can prove that all nonsignaling correlations can be generated by performing local operations on a nonsignaling boxworld process. 
Given the nonsignaling correlations $P^\text{NS}_{\A\B|\X\Y}$, which satisfy Eqs.~\eqref{eq::probNS_AprecB} and~\eqref{eq::probNS_BprecA},
one can construct the following nonsignaling boxworld process
\begin{equation}
    W^{A||B} = \delta_{\Iap,\phi}\,\delta_{\Ibp,\phi}\,P^\text{NS}_{\Oa\Ob|\Ia\Ib},
\end{equation}
where, following a relabeling of the variables $\A\mapsto\Oa$, $\B\mapsto\Ob$, $\X\mapsto\Ia$, and $\Y\mapsto\Ib$, one has that $P^\text{NS}_{\Oa\Ob|\Ia\Ib}=P^\text{NS}_{\A\B|\X\Y}$. Notice that this boxworld process is very similar to the one in Eq.~\eqref{eq::ansatzW1}, except that here the set of probability distributions $P^\text{NS}_{\Oa\Ob|\Ia\Ib}$ is nonsignaling, ensuring that $W^{A||B}$ is a valid nonsignaling boxworld process and satisfies the NSWSE principle.

Then, take the following local operations
\begin{align}
    T^{\A|\X\,(A||B)} &= \delta_{\Ia,\X}\,\delta_{\Oap,\phi}\,\delta_{\A|\Oa} \\
    T^{\B|\Y\,(A||B)} &= \delta_{\Ib,\Y}\,\delta_{\Obp,\phi}\,\delta_{\B|\Ob}.
\end{align}

It is straightforward to check that 
\begin{align}
    T^{\A|\X\,(A||B)} * T^{\B|\Y\,(A||B)} * W^{A||B} &= \sum_{\substack{\Iap\Ia\Oa\Oap\\\Ibp\Ib\Ob\Obp}} \delta_{\Ia,\X}\,\delta_{\Oap,\phi}\,\delta_{\A,\Oa}\,\delta_{\Ib,\Y}\,\delta_{\Obp,\phi}\,\delta_{\B,\Ob}\,\delta_{\Iap,\phi}\,\delta_{\Ibp,\phi}\,P^\text{NS}_{\Oa\Ob|\Ia\Ib}\\
    &=\sum_{\substack{\Ia\Oa\\\Ib\Ob}} \delta_{\Ia,\X}\,\delta_{\A,\Oa}\,\delta_{\Ib,\Y}\,\delta_{\B,\Ob}\,P^\text{NS}_{\Oa\Ob|\Ia\Ib} \\
    &=P^\text{NS}_{\A\B|\X\Y}
\end{align}
This construction amounts to Alice and Bob simply reading the nonsignaling correlations $P^\text{NS}_{\A\B|\X\Y}$ that are prepared by the nonsignaling boxworld process itself.

For causal correlations $P^\text{causal}_{\A\B|\X\Y}$, which can be decomposed as
\begin{equation}
    P^\text{causal}_{\A\B|\X\Y} = q\,P_{\A|\X}\,P_{\B|\A\X\Y} + (1-q)P_{\B|\Y}\,P_{\A|\B\X\Y},
\end{equation}
a similar proof can be achieved by extending the transformations and processes used in the proofs of $P^{A\prec B}_{\A\B|\X\Y}$ and $P^{B\prec A}_{\A\B|\X\Y}$ by an extra `flag' random variable $\alpha$ for Alice and $\beta$ for Bob, according to
\begin{align}
    T^{A|X} &= T^{\A|\X\,(A\prec B)}\delta_{\alpha,0} + T^{\A|\X\,(B\prec A)}\delta_{\alpha,1} \\
    T^{B|Y} &= T^{\B|\Y\,(A\prec B)}\delta_{\beta,0} + T^{\B|\Y\,(B\prec A)}\delta_{\beta,1} \\
    W^\text{sep} &= q\, W^{A\prec B}\delta_{\alpha,0}\delta_{\beta,0} + (1-q)W^{B\prec A}\delta_{\alpha,1}\delta_{\beta,1}.
\end{align}

\end{proof}

\subsection{Violating causal inequalities with boxworld correlations}\label{app::inequalities}

In this section we show that boxworld correlations can violate causal inequalities. We start by focusing on two renowned inequalities: the ``guess your neighbor's input'' (GYNI) inequality~\cite{almeida2010guess} and the ``lazy guess your neighbor's input'' (LGYNI) inequality~\cite{branciard2016simplest}. These inequalities correspond to the nontrivial facets of the causal polytope in the scenario where $|\A|=|\B|=|\X|=|\Y|=2$, the correlations scenario with two parties, and two inputs and two outputs per party. We also analyze the ``Oreshkov-Costa-Brukner'' (OCB) inequality~\cite{oreshkov2012quantum}, which is an inequality in the scenario $|\A|=|\B|=|\X|=|\Y|=|\Y'|=2$, that is, in which Bob has two bits as input, hereon denoted as $\Y$ and $\Y'$, for a total of 4 different input values. The OCB inequality does not correspond to a facet of the causal polytope, but nevertheless it has its causal bound violated by process matrix correlations, just as the GYNI and LGYNI inequalities.

We recall the definition of these inequalities.
The GYNI inequality is defined as 
\begin{equation}\label{eq::gyni_app}
    \text{GYNI}(P_{\A\B|\X\Y})\coloneqq\frac{1}{4}\sum_{\A,\B,\X,\Y} \delta_{\A,\Y}\,\delta_{\B,\X} \, P_{\A\B|\X\Y} \leq \frac{1}{2},
\end{equation}
and the LGYNI inequality is defined as 
\begin{equation}\label{eq::lgyni_app}
    \text{LGYNI}(P_{\A\B|\X\Y})\coloneqq\frac{1}{4}\sum_{\A,\B,\X,\Y} \delta_{\X(\A\oplus\Y),0}\,\delta_{\Y(\B\oplus\X),0}\, P_{\A\B|\X\Y} \leq \frac{3}{4}.
\end{equation}
Finally, the OCB inequality is defined as
\begin{equation}\label{eq::ocb_app}
    \text{OCB}(P_{\A\B|\X\Y\Yp})\coloneqq\frac{1}{8}\sum_{\A,\B,\X,\Y, \Yp} \delta_{(\Yp\oplus 1)(\A\oplus\Y),0}\,\delta_{\Yp(\B\oplus\X),0}\, P_{\A\B|\X\Y\Yp} \leq \frac{3}{4}.
\end{equation}
All these inequalities are satisfied by all causal correlations and can be violated by some correlations generated by process matrices~\cite{oreshkov2012quantum,branciard2016simplest}. Here we show that all three inequalities are also violated by boxworld correlations.

We have found that boxworld correlations can achieve the following bounds for these expressions:
\begin{align}
    \max_{\mathcal{P}_\text{BW}}\, \text{GYNI}(\overline{P}_{\A\B|\X\Y}) &\geq \frac{3}{4} = 0.75 \\
    \max_{\mathcal{P}_\text{BW}}\, \text{LGYNI}(\overline{P}_{\A\B|\X\Y}) &\geq \frac{11}{12} \approx 0.9167 \\
    \max_{\mathcal{P}_\text{BW}}\, \text{OCB}(\overline{P}_{\A\B|\X\Y\Yp}) &= 1,
\end{align}
where $\mathcal{P}_\text{BW}$ is the polytope of boxworld correlations.

We now present the local operations and boxworld processes that generate the boxworld correlations that can achieve such violation.

In all cases, the spaces~$\Iap,\Ibp$ are irrelevant and can be taken to be trivial. For the GYNI inequality, we present a violation of~$2/3$ where all remaining spaces are binary, and a violation of~$3/4$ where the spaces~$\Oap,\Ob,\Obp$ are augmented to~$\{0,1,2\}$. For the violation of the LGYNI inequality (value $11/12$), and for the violation of the OCB inequality (maximum violation), all spaces are binary. After the presentation of the local operations and boxworld processes used to show these results, we give an extended outline for how we found these processes.

\subsubsection{GYNI}\label{subapp::gyni}

The value of~$2/3$ for GYNI is obtained by a process that, conditioned on~$\Oap,\Obp$ produces a~$(\tfrac{1}{3},\tfrac{2}{3})$ mixture of a constant box~$P_{\Oa\Ob|\Ia\Ib}=P_{\Oa\Ob}$ and a PR box~\cite{tsirelson1993some,rastall1985locality,popescu1994quantum}, given by 
\begin{align}
    W_\text{GYNI}
    &=
    \frac{1}{3}\
    \delta_{\Oa,\Obp}\ \delta_{\Ob,\Oap}
    +
    \frac{2}{3}
    \delta_{(\Oa\oplus\Ob\oplus\Oap\oplus\Obp\oplus 1), ((\Ia\oplus\Oap\oplus 1)(\Ib\oplus\Obp\oplus1))}/2, \label{eq::gyni2/3_W}\\
\end{align}
and the following local operations:
\begin{align}
    T^{A|X}_\text{GYNI} &= \delta_{\Ia,\X}\ \delta_{\Oap,\X}\ \delta_{\A,\Oa} \label{eq::gyni2/3_TAX} \\
    T^{B|Y}_\text{GYNI} &= \delta_{\Ib,\Y}\ \delta_{\Obp,\Y}\ \delta_{\B,\Ob}. \label{eq::gyni2/3_TBY}
\end{align}

Combined, they lead to the correlations
\begin{align}
 \overline{P}_{\A\B|\X\Y}^\text{\ (GYNI=2/3)}
  =\
  \frac{1}{3}\
  \delta_{\A,\Y}\ \delta_{\B,\X}
  +
  \frac{2}{3}\ 
  \left(
    \frac{\delta_{\A,\Y}\ \delta_{\B,\X}}{2}
    +
    \frac{\delta_{\A\oplus 1,\Y}\ \delta_{\B\oplus 1,\X}}{2}
  \right)
  \,.
\end{align}
    
For the same inequality, the value of~$3/4$ is obtained using the following boxworld process and local operations. Here, due to the complexity introduced by the ternary spaces, we represent the boxworld process by a probability table for each~$\Oap,\Obp$ (see Table~\ref{table:GYNIprocess}).
    
\begin{table}[h!]
    \centering
    \begin{tabular}{ccc}
        \begin{tabular}{|c|c|c|c|} \hline
          \makebox[2em]{$1/4$} & \makebox[2em]{$1/2$} & \makebox[2em]{$0$} & \makebox[2em]{$1/2$} \\ \hline
          \makebox[2em]{$1/4$} & \makebox[2em]{$0$} & \makebox[2em]{$1/2$} & \makebox[2em]{$0$} \\ \hline
          \makebox[2em]{$0$} & \makebox[2em]{$0$} & \makebox[2em]{$0$} & \makebox[2em]{$0$} \\ \hline
          \makebox[2em]{$1/4$} & \makebox[2em]{$1/2$} & \makebox[2em]{$1/2$} & \makebox[2em]{$1/2$} \\ \hline
          \makebox[2em]{$1/4$} & \makebox[2em]{$0$} & \makebox[2em]{$0$} & \makebox[2em]{$0$} \\ \hline
          \makebox[2em]{$0$} & \makebox[2em]{$0$} & \makebox[2em]{$0$} & \makebox[2em]{$0$} \\ \hline
        \end{tabular} &\hspace{3em}
        \begin{tabular}{|c|c|c|c|} \hline
          \makebox[2em]{$1/4$} & \makebox[2em]{$1/2$} & \makebox[2em]{$1/2$} & \makebox[2em]{$1$} \\ \hline
          \makebox[2em]{$1/4$} & \makebox[2em]{$0$} & \makebox[2em]{$1/2$} & \makebox[2em]{$0$} \\ \hline
          \makebox[2em]{$0$} & \makebox[2em]{$0$} & \makebox[2em]{$0$} & \makebox[2em]{$0$} \\ \hline
          \makebox[2em]{$1/4$} & \makebox[2em]{$1/2$} & \makebox[2em]{$0$} & \makebox[2em]{$0$} \\ \hline
          \makebox[2em]{$1/4$} & \makebox[2em]{$0$} & \makebox[2em]{$0$} & \makebox[2em]{$0$} \\ \hline
          \makebox[2em]{$0$} & \makebox[2em]{$0$} & \makebox[2em]{$0$} & \makebox[2em]{$0$} \\ \hline
        \end{tabular} &\hspace{3em}
        \begin{tabular}{|c|c|c|c|} \hline
          \makebox[2em]{$1/4$} & \makebox[2em]{$1/2$} & \makebox[2em]{$1/2$} & \makebox[2em]{$1/2$} \\ \hline
          \makebox[2em]{$1/4$} & \makebox[2em]{$0$} & \makebox[2em]{$0$} & \makebox[2em]{$0$} \\ \hline
          \makebox[2em]{$0$} & \makebox[2em]{$0$} & \makebox[2em]{$0$} & \makebox[2em]{$0$} \\ \hline
          \makebox[2em]{$1/4$} & \makebox[2em]{$1/2$} & \makebox[2em]{$0$} & \makebox[2em]{$1/2$} \\ \hline
          \makebox[2em]{$1/4$} & \makebox[2em]{$0$} & \makebox[2em]{$1/2$} & \makebox[2em]{$0$} \\ \hline
          \makebox[2em]{$0$} & \makebox[2em]{$0$} & \makebox[2em]{$0$} & \makebox[2em]{$0$} \\ \hline
        \end{tabular}
        \\ \\
        \begin{tabular}{|c|c|c|c|} \hline
          \makebox[2em]{$1/4$} & \makebox[2em]{$0$} & \makebox[2em]{$0$} & \makebox[2em]{$0$} \\ \hline
          \makebox[2em]{$1/4$} & \makebox[2em]{$0$} & \makebox[2em]{$1/2$} & \makebox[2em]{$0$} \\ \hline
          \makebox[2em]{$0$} & \makebox[2em]{$1/2$} & \makebox[2em]{$0$} & \makebox[2em]{$1/2$} \\ \hline
          \makebox[2em]{$1/4$} & \makebox[2em]{$1/2$} & \makebox[2em]{$1/2$} & \makebox[2em]{$1/2$} \\ \hline
          \makebox[2em]{$1/4$} & \makebox[2em]{$0$} & \makebox[2em]{$0$} & \makebox[2em]{$0$} \\ \hline
          \makebox[2em]{$0$} & \makebox[2em]{$0$} & \makebox[2em]{$0$} & \makebox[2em]{$0$} \\ \hline
        \end{tabular} &\hspace{3em}
        \begin{tabular}{|c|c|c|c|} \hline
          \makebox[2em]{$1/4$} & \makebox[2em]{$0$} & \makebox[2em]{$1/2$} & \makebox[2em]{$1/2$} \\ \hline
          \makebox[2em]{$1/4$} & \makebox[2em]{$0$} & \makebox[2em]{$1/2$} & \makebox[2em]{$0$} \\ \hline
          \makebox[2em]{$0$} & \makebox[2em]{$1/2$} & \makebox[2em]{$0$} & \makebox[2em]{$1/2$} \\ \hline
          \makebox[2em]{$1/4$} & \makebox[2em]{$1/2$} & \makebox[2em]{$0$} & \makebox[2em]{$0$} \\ \hline
          \makebox[2em]{$1/4$} & \makebox[2em]{$0$} & \makebox[2em]{$0$} & \makebox[2em]{$0$} \\ \hline
          \makebox[2em]{$0$} & \makebox[2em]{$0$} & \makebox[2em]{$0$} & \makebox[2em]{$0$} \\ \hline
        \end{tabular} &\hspace{3em}
        \begin{tabular}{|c|c|c|c|} \hline
          \makebox[2em]{$1/4$} & \makebox[2em]{$0$} & \makebox[2em]{$1/2$} & \makebox[2em]{$0$} \\ \hline
          \makebox[2em]{$1/4$} & \makebox[2em]{$0$} & \makebox[2em]{$0$} & \makebox[2em]{$0$} \\ \hline
          \makebox[2em]{$0$} & \makebox[2em]{$1/2$} & \makebox[2em]{$0$} & \makebox[2em]{$1/2$} \\ \hline
          \makebox[2em]{$1/4$} & \makebox[2em]{$1/2$} & \makebox[2em]{$0$} & \makebox[2em]{$1/2$} \\ \hline
          \makebox[2em]{$1/4$} & \makebox[2em]{$0$} & \makebox[2em]{$1/2$} & \makebox[2em]{$0$} \\ \hline
          \makebox[2em]{$0$} & \makebox[2em]{$0$} & \makebox[2em]{$0$} & \makebox[2em]{$0$} \\ \hline
        \end{tabular}
        \\ \\
        \begin{tabular}{|c|c|c|c|} \hline
          \makebox[2em]{$1/4$} & \makebox[2em]{$1/2$} & \makebox[2em]{$0$} & \makebox[2em]{$0$} \\ \hline
          \makebox[2em]{$1/4$} & \makebox[2em]{$0$} & \makebox[2em]{$1/2$} & \makebox[2em]{$0$} \\ \hline
          \makebox[2em]{$0$} & \makebox[2em]{$0$} & \makebox[2em]{$0$} & \makebox[2em]{$1/2$} \\ \hline
          \makebox[2em]{$1/4$} & \makebox[2em]{$0$} & \makebox[2em]{$1/2$} & \makebox[2em]{$1/2$} \\ \hline
          \makebox[2em]{$1/4$} & \makebox[2em]{$0$} & \makebox[2em]{$0$} & \makebox[2em]{$0$} \\ \hline
          \makebox[2em]{$0$} & \makebox[2em]{$1/2$} & \makebox[2em]{$0$} & \makebox[2em]{$0$} \\ \hline
        \end{tabular} &\hspace{3em}
        \begin{tabular}{|c|c|c|c|} \hline
          \makebox[2em]{$1/4$} & \makebox[2em]{$1/2$} & \makebox[2em]{$1/2$} & \makebox[2em]{$1/2$} \\ \hline
          \makebox[2em]{$1/4$} & \makebox[2em]{$0$} & \makebox[2em]{$1/2$} & \makebox[2em]{$0$} \\ \hline
          \makebox[2em]{$0$} & \makebox[2em]{$0$} & \makebox[2em]{$0$} & \makebox[2em]{$1/2$} \\ \hline
          \makebox[2em]{$1/4$} & \makebox[2em]{$0$} & \makebox[2em]{$0$} & \makebox[2em]{$0$} \\ \hline
          \makebox[2em]{$1/4$} & \makebox[2em]{$0$} & \makebox[2em]{$0$} & \makebox[2em]{$0$} \\ \hline
          \makebox[2em]{$0$} & \makebox[2em]{$1/2$} & \makebox[2em]{$0$} & \makebox[2em]{$0$} \\ \hline
        \end{tabular} &\hspace{3em}
        \begin{tabular}{|c|c|c|c|} \hline
          \makebox[2em]{$1/4$} & \makebox[2em]{$1/2$} & \makebox[2em]{$1/2$} & \makebox[2em]{$0$} \\ \hline
          \makebox[2em]{$1/4$} & \makebox[2em]{$0$} & \makebox[2em]{$0$} & \makebox[2em]{$0$} \\ \hline
          \makebox[2em]{$0$} & \makebox[2em]{$0$} & \makebox[2em]{$0$} & \makebox[2em]{$1/2$} \\ \hline
          \makebox[2em]{$1/4$} & \makebox[2em]{$0$} & \makebox[2em]{$0$} & \makebox[2em]{$1/2$} \\ \hline
          \makebox[2em]{$1/4$} & \makebox[2em]{$0$} & \makebox[2em]{$1/2$} & \makebox[2em]{$0$} \\ \hline
          \makebox[2em]{$0$} & \makebox[2em]{$1/2$} & \makebox[2em]{$0$} & \makebox[2em]{$0$} \\ \hline
        \end{tabular}
        \\ \\
    \end{tabular}
    \caption{Boxworld process for obtaining the value~$3/4$ in GYNI. The table in row~$r$ and column~$c$ is for~$\Oap=r,\Obp=c$.
      Within each table, the columns represent the inputs~$(\Ia,\Ib)$, and the rows the outputs~$(\Oa,\Ob)$.}
    \label{table:GYNIprocess}
\end{table}
    
The corresponding local operations are
\begin{equation}\label{eq::GYNI3/4_Ts}
    \begin{tabular}{c|c}
        $\X\Oa\ $ & $\ \A\Oap\Ia$ \\ \hline
        $00$ & $110$ \\
        $01$ & $120$ \\
        $10$ & $101$ \\
        $11$ & $021$
    \end{tabular}
    \text{for Alice, and }
    \qquad\qquad
    \begin{tabular}{c|c}
        $ \Y\Ob\ $ & $\ \B\Obp\Ib $ \\ \hline
        $00$ & $100$ \\
        $01$ & $120$ \\
        $02$ & $000$ \\
        $10$ & $111$ \\
        $11$ & $111$ \\
        $12$ & $001$
    \end{tabular}
    \text{for Bob}.
\end{equation}

That boxworld process in Table~\ref{table:GYNIprocess} combined with these local operations produce the following correlations:
\begin{equation}
    \overline{P}_{\A\B|\X\Y}^\text{\ (GYNI=3/4)} = \delta_{\A,\X(\Y\oplus 1)\oplus 1}\delta_{\B,(\X\oplus 1)\Y\oplus 1}
    \,.
\end{equation}
Note that only for the case~$\X=\Y=0$ the values of~$\A$ and~$\B$ do not correspond to~$\Y$ and~$\X$.

\subsubsection{LGYNI}\label{subapp::lgyni}

For a violation of the LGYNI inequality, we can also express the local operations and boxworld process analytically. The boxworld process is a $(\tfrac{1}{3},\tfrac{2}{3})$ mixture of a box with a LGYNI functionality, with a box that behaves similar to a PR box:
\begin{equation}
    W_\text{LGYNI} = \frac{1}{3}P_\text{LGYNI}^\alpha + \frac{2}{3}P_\text{LGYNI}^\text{PR} \label{eq::LGYNI11/12_W}
    \,
\end{equation}
with
\begin{align}
    P_\text{LGYNI}^\alpha &= \delta_{\Oa,\Ia\Obp}\delta_{\Ob,\Ib\Oap}\\
    P_\text{LGYNI}^\text{PR} &=
    \begin{cases}
      \frac{1}{2} \delta_{\Oa\oplus\Ob,\Ia\Ib} & \Oap=0 \wedge \Obp = 0\\
      \frac{1}{2} \delta_{\Oa\oplus\Ob,\Ia(\Ib\oplus 1)} & \Oap=0 \wedge \Obp = 1\\
      \frac{1}{2} \delta_{\Oa\oplus\Ob,\Ia\oplus 1} & \Oap=1 \wedge \Obp = 0\\
      \frac{1}{2} \delta_{\Oa\oplus\Ob,\Ia\Ib\oplus 1} & \Oap=1 \wedge \Obp = 1
    \end{cases}
    \,.
\end{align}
The local operations are
\begin{align}
    T^{\A|\X}_\text{LGYNI} &= \delta_{\Ia,\X}\ \delta_{\Oap,\Oa}\ \delta_{\A,\X\Oa} \label{eq::LGYNI11/12_TAX}\\
    T^{\B|\Y}_\text{LGYNI} &= \delta_{\Ib,\Y}\ \delta_{\Obp,\Ob}\ \delta_{\B,\Y\Ob} \label{eq::LGYNI11/12_TBY}
    \,.
\end{align}
Together, they yield the correlations
\begin{equation}
    \overline{P}_{\A\B|\X\Y}^\text{\ (LGYNI=11/12)} = \frac{1}{3}\delta_{\A,0}\delta_{\B,0} + \frac{2}{3}\delta_{\A,\X\Y}\delta_{\B,\X\Y}
    \,.
\end{equation}
For the LGYNI inequality, this means that with probability~$1/3$ the value of~$3/4$ is obtained, and with probability~$2/3$ the algebraic maximum of~$1$ is obtained; in total we get the value~$11/12$.

\subsubsection{OCB}\label{subapp::ocb}

For the OCB inequality, the process is just a relabeling of the PR box for fixed~$\Oap,\Obp$:
\begin{equation}\label{eq::OCB1_W}
    W_\text{OCB} = \frac{1}{2} \delta_{\Oa\oplus\Ob\oplus\Obp, (\Ia \oplus \Oap \oplus \Obp)\Ib}
    \,,
\end{equation}
and the local transformations have a particular simple form:
\begin{align}
    T^{\A|\X}_\text{OCB} &= \delta_{\Ia,0}\ \delta_{\Oap,\Oa\oplus \X}\ \delta_{\A,\Oa} \label{eq::OCB1_TAX}  \\
    T^{\B|\Y\Yp}_\text{OCB} &= \delta_{\Ib,\Yp}\ \delta_{\Obp,\Ob \oplus \Y}\ \delta_{\B,\Ob}
    \,. \label{eq::OCB1_TBY}
\end{align}

The resulting correlations from this boxworld process and local operations is given in the form of Table~\ref{table:OCB_prob}.

\begin{table}[h!]
    \centering
    {\renewcommand{\arraystretch}{1.25}
    \begin{tabular}{|c|c|c|c|c|c|c|c|c|}
    \hline
    $(a,b) \diagdown (x,y,y')$& $\ (0,0,0)\ $ & $\ (0,0,1)\ $ & $\ (0,1,0)\ $ & $\ (0,1,1)\ $ & $\ (1,0,0)\ $ & $\ (1,0,1)\ $ & $\ (1,1,0)\ $ & $\ (1,1,1)\ $ \\
    \hline
    (0,0) & 1/2     & 1/2     & 0    & 1/2     & 1/2     & 0    & 0    & 0    \\
    (0,1) & 1/2     & 0    & 0    & 0    & 1/2     & 1/2     & 0    & 1/2     \\
    (1,0) & 0    & 1/2     & 1/2     & 1/2     & 0    & 0    & 1/2     & 0    \\
    (1,1) & 0    & 0    & 1/2     & 0    & 0    & 1/2     & 1/2     & 1/2     \\
    \hline
    \end{tabular}
}
\caption{Values for $\overline{P}_{\A\B|\X\Y\Yp}^\text{\ (OCB=1)}$ for the distribution saturating the OCB inequality. The first column and row indicate the values of $(a,b)$ and $(x,y,y')$, respectively. The entries of the table are the value of the corresponding probability.}
\label{table:OCB_prob}
\end{table}

In the next section, we discuss these correlations in more detail.

\subsubsection{Extended discussion}\label{subapp::extdiscussion}

In principle, one could exploit the polytope characterization of the set of boxworld correlations to find the maximum violation of these inequalities by boxworld correlations in any scenario with fixed local dimensions, i.e., a fixed cardinality of the boxworld process' variables. Let $\mathcal{P}_{T}$ be the polytope of all pairs of sets of local operations $T^{\A|\X}\otimes T^{\B|\Y}\in\mathcal{P}_{T}$, let $\mathcal{P}_{W}$ be the polytope of all boxworld processes $W\in\mathcal{P}_{W}$, and finally let $\mathcal{P}_{\rm BW}$ be the polytope of all boxworld correlations $\overline{P}_{\A\B|\X\Y}\in\mathcal{P}_{\rm BW}$. The maximization of the expressions in Eqs.~\eqref{eq::gyni_app},~\eqref{eq::lgyni_app}, and~\eqref{eq::ocb_app} corresponds to a maximization of a linear functional on the polytope $\mathcal{P}_{\rm BW}$. As such, we know that the maximum is achieved on the extremal points of the polytope and it is, in principle, sufficient to verify the value on all the extremal points. In the example of the GYNI inequality, this is equivalent to checking all extremal points in the polytopes $\mathcal{P}_{T}$ and $W\in\mathcal{P}_{W}$,
\begin{equation}
\begin{split}
    \max_{\overline{P}\in \mathcal{P}_{\rm BW}} \text{GYNI}(\overline{P}) &= \max \left\lbrace \frac{1}{4}\sum_{\A,\B,\X,\Y} \delta_{\A,\Y}\,\delta_{\B,\X}\, \overline{P}_{\A\B|\X\Y} \ \scalebox{2}{|}  \ \overline{P}_{\A\B|\X\Y}\in\mathcal{P}_{\rm BW} \right\rbrace\\
    &=\max \left\lbrace \frac{1}{4}\sum_{\A,\B,\X,\Y} \delta_{\A,\Y}\,\delta_{\B,\X}\, T^{\A|\X}\otimes T^{\B|\Y} * W\ \scalebox{2}{|}  \ T^{\A|\X},T^{\B|\Y}  \in \mathcal{P}_{T},  W \in \mathcal{P}_W\right\rbrace\\
    &=\max \left\lbrace \frac{1}{4}\sum_{\A,\B,\X,\Y} \delta_{\A,\Y}\,\delta_{\B,\X}\, T^{\A|\X}\otimes T^{\B|\Y} * W\ \scalebox{2}{|}  \ T^{\A|\X},T^{\B|\Y}, W \text{ extremal points}\right\rbrace.
\end{split}
\end{equation}

However, our characterization of the polytope of $\mathcal{P}_W$ of boxworld processes $W$ is given in terms of its facets, rather than its extremal points, and the numerical conversion between the two representations, even in the all-bit scenario, is too costly. 

An alternative approach could be to use the hyperplane characterization of $\mathcal{P}_W$. 
More precisely, for each extreme set of extreme local operations $T^{\A|\X}, T^{\B|\Y}$ we solve the linear program (LP)
\begin{equation}\label{eq:W_LP_T_fix}
\begin{split}
    \textbf{\text{given}} \ \ \  & \{T^{\A|\X}, T^{\B|\Y}\} \text{ extremal} \\
    \textbf{\text{max}} \ \ \  &\frac{1}{4}\sum_{\A,\B,\X,\Y} \delta_{\A,\Y}\,\delta_{\B,\X} \, T^{\A|\X}\otimes T^{\B|\Y} * \W\,\\
    \textbf{\text{subject to}}  \ \ \  &\W\geq 0,\\
    & \W \text{ satisfying Eqs.~\eqref{eq::bwW_positivity_app}--\eqref{eq::bwW_proj5_app}.}
\end{split}
\end{equation}

One can estimate the number of LPs needed to compute the maximum of GYNI for the all-bit scenario to be $1024^4 \approx 10^{12}$, where $1024$ is the number of extreme local operations $\{T^{\A|\X}\}_{\A}$. This is clearly too computationally expensive. One can still get some intuition about the optimal values for the all-bit case, by performing the optimization under the assumption that Alice and Bob perform exactly the same operations. Notice that, even if the problem is symmetric under the exchange $\A\leftrightarrow \B$, it is not guaranteed that the optimal solution is achieved by a symmetric deterministic operation for Alice and Bob. Therefore, we have no guarantee that this is indeed the optimal solution for the all-bit scenario. However, these assumptions reduce the number of LPs to approximately $10^6$, which we can compute.
\\

\textbf{GYNI.} The result obtained for the maximum violation of the GYNI inequality in the all-bit scenario, under the assumption of identical operations for Alice and Bob, is exactly~$2/3$, and is attained by the local operations and boxworld process described in Eqs.~\eqref{eq::gyni2/3_W}--\eqref{eq::gyni2/3_TBY}. The action of this process [Eq.~\eqref{eq::gyni2/3_W}] can be understood in the following way. If Alice and Bob provide~$\Oap$,~$\Obp$ to the process, it responds with a boxworld system on the~$\Oa,\Ob,\Ia,\Ib$ spaces. This system is in a ($\tfrac{1}{3},\tfrac{2}{3})$ mixture of the following two states. With probability~$1/3$, the system returned is in the state that independently of~$\Ia,\Ib$, returns~$\Oa=\Obp$ and~$\Ob=\Oap$. Thus, the first term in Eq.~\eqref{eq::gyni2/3_W} describes a constant that yields, if we take into account the local strategies, the desired output for Alice and Bob.

In more detail: Alice provides~$\X$ to the space~$\Oap$ and with probability~$1/3$ Bob receives Alice's input~$\X$ on his~$\Ob$ space. Likewise, Alice receives Bob's input~$\Y$ on her~$\Oa$ space with probability~$1/3$. The second term in Eq.~\eqref{eq::gyni2/3_W} describes a variant of the PR box~\cite{tsirelson1993some,rastall1985locality,popescu1994quantum}. In its most simple form, the PR box provides an input-output behavior where the parity of the outputs equals the product of the inputs. Since for each input pair two possible output pairs that satisfy this PR condition exist, the PR box responds equally to each with half probability. In our case, the same happens: Depending on~$\Oap$ and~$\Obp$, the state of the system on the~$\Ia,\Ib$ and~$\Oa,\Ob$ spaces is, with probability~$2/3$, a parameterized PR box.

It turns out, however, that the value for GYNI can be improved by going to the all-trit case, i.e., to $d=3$. In that case, via a {\it seesaw optimization} (which alternates the optimization of $W$ against the transformations $T^{\A|\X}$ and $T^{\B|\Y}$, with each step in the optimization being a linear program) we obtained a value of~$3/4$ attained by the local operations in Eq.~\eqref{eq::GYNI3/4_Ts} and boxworld process in Table~\ref{table:GYNIprocess}. Extending the heuristic seesaw optimization method to the case $d=4$ did not find a better solution, however, this does not guarantee that a higher violation does not exist in $d\geq4$.
\\

\textbf{LGYNI.} In the case of the LGYNI inequality, the value of~$11/12$ attained by the local operations and boxworld process described in Eqs.~\eqref{eq::LGYNI11/12_W}--~\eqref{eq::LGYNI11/12_TBY} was also found as the maximum violation in the all-bit scenario, under the assumption of identical operations for Alice and Bob. We were not able to improve the bound $11/12$ for the LGYNI by going to higher dimensions using a seesaw optimization. Note that for process matrix correlations, the maximum value for the LGYNI inequality is attained by a process matrix of local system dimension $d=2$, that is, the all-qubit scenario~\cite{branciard2016simplest,liu2024tsirelson}. For boxworld correlations, we did not find in our search any improvement by going to higher dimension as well.
\\

\textbf{OCB.} Finally, we also applied a seesaw optimization to find the boxworld violation of the OCB inequality that reached the algebraic maximum of~$1$, already in the all-bit scenario. Notice that winning the OCB game with probability one does not necessarily imply the presence of perfect two-way signaling. In fact, even if $P(\A=\Y|\Yp = 0)=P(\B=\X|\Yp=1)=1$, the winning distribution is not deterministic, as can be seen from the correlations $\overline{P}_{\A\B|\X\Y\Yp}^\text{\ (OCB=1)}$ in Table~\ref{table:OCB_prob}. In particular, such a distribution does not allow for perfect two-way signaling, as the marginals over $Y'$ gives $P(\A=\Y)=P(\B=\X)=3/4$. Moreover, it provides a score of  $7/16= 0.4375$ in the GYNI game, i.e., even below the causal bound.
\\

\textbf{Comparison to process matrix correlations.} We can compare the violations of these causal inequalities obtained by boxworld correlations to the violations that are obtained by process matrices. In the work of Ref.~\cite{liu2024tsirelson}, upper bounds for the violation of these three inequalities by correlations that can be generated by process matrices of any dimension have been derived.

Let $\mathcal{S}_\text{QW}$ be the set of all process matrix correlations. It has been shown in Ref.~\cite{liu2024tsirelson} that the maximum obtained by process matrices for the expressions GYNI, LGYNI, and OCB, is given by
\begin{align}
    \max_{\mathcal{S}_\text{QW}}\, \text{GYNI}(P_{\A\B|\X\Y}) &\leq 0.7592 \\ 
    \max_{\mathcal{S}_\text{QW}}\, \text{LGYNI}(P_{\A\B|\X\Y}) &\approx 0.8194 \\ 
    \max_{\mathcal{S}_\text{QW}}\, \text{OCB}(P_{\A\B|\X\Y\Yp}) &= \frac{2+\sqrt{2}}{4}.
\end{align}
It is not known whether the bound for GYNI is tight; on the other hand, both for the numerical value for LGYNI (written approximately) and for the value for OCB (which is exact) \textit{are} known to be tight. On comparing these results with the violation of these inequalities that we find in boxworld, we find that 
\begin{enumerate}
    \item Boxworld correlations attain a higher violation of the LGYNI inequality than any process matrix correlations.
    \item Boxworld correlations attain a higher violation of the OCB inequality than any process matrix correlations, reaching the algebraic maximum.
    \item It is unclear whether boxworld correlations attain a higher violation of the GYNI inequality than any process matrix correlations, since the maximum value for this expression found in boxworld is not greater than the upper bound for process matrices, which in turn may not be tight.
\end{enumerate}

Based on the fact that both for the LGYNI and the OCB inequalities, boxworld correlations reach a higher violation than process matrix correlations, it seems likely that this is also the case for the GYNI inequality, and that the bound on Ref.~\cite{liu2024tsirelson} might not be tight. It could also be that in higher-dimensions, the GYNI violation attained in boxworld might increase beyond the $3/4$, however our numerical calculations show no evidence of this.

Given these results we expect that the polytope of boxworld correlations is a nontrivial outer approximation of the set of process matrix correlations.

\end{document}